\newcommand{\nfrac}{\nicefrac}
\newcommand{\sfrac}[2]{#1/#2}
\newcommand{\poly}{\mathop{\rm poly}}
\newcommand{\supp}{\mathrm{supp}}
\newcommand{\argmax}{\operatornamewithlimits{argmax}}
\newcommand{\ltxlabel}{\ltx@label}
\newcommand{\cD}{\mathcal{D}}
\newcommand{\cE}{\mathcal{E}}
\newcommand{\cF}{\mathcal{F}}
\newcommand{\cP}{\mathcal{P}}
\newcommand{\wt}{\widetilde}
\newcommand{\eps}{\varepsilon}
\renewcommand{\epsilon}{\varepsilon}
\newcommand{\Z}{\mathbb Z}
\newcommand{\N}{\mathbb N}
\newcommand{\R}{\mathbb R}
\newcommand{\E}{\mathbb E}
\newcommand{\D}{\mathcal{D}}
\newcommand{\distN}{\mathcal{N}}
\newcommand{\Beta}{\text{\rm Beta}}
\newcommand{\inparen}[1]{\left(#1\right)}
\newcommand{\inbrace}[1]{\left\{#1\right\}}
\newcommand{\insquare}[1]{\left[#1\right]}
\newcommand{\sinparen}[1]{(#1)}
\newcommand{\sinbrace}[1]{\{#1\}}
\newcommand{\sinsquare}[1]{[#1]}
\newcommand{\expect}{\mathbb{E}}
\newcommand\av{\operatorname{\E}}
\newcommand{\given}{\;\middle|\;}
\newcommand{\ex}[1]{\av\insquare{#1}}
\newcommand{\exgiv}[2]{\av\insquare{#1 \given #2}}
\newcommand{\prob}[1]{\Pr \insquare{#1}}
\newcommand{\probgiv}[2]{\Pr \insquare{#1 \;\middle|\; #2}}
\newcommand{\yesnum}{\addtocounter{equation}{1}\tag{\theequation}}
\newcommand{\tagnum}[1]{\addtocounter{equation}{1}{\tag{#1; \theequation}}}
\newcommand{\customlabel}[2]{%
\protected@write \@auxout {}{\string \newlabel {#1}{{#2}{\thepage}{#2}{#1}{}} }%
\hypertarget{#1}{}
}
\newtheorem{theorem}{Theorem}[section]
\newtheorem{lemma}[theorem]{Lemma}
\newtheorem{definition}[theorem]{Definition}
\newtheorem{proposition}[theorem]{Proposition}
\newtheorem{fact}[theorem]{Fact}
\newtheorem*{theorem*}{Theorem}
\newtheorem{remark}[theorem]{Remark}
\crefname{section}{Section}{Sections}
\crefname{theorem}{Theorem}{Theorems}
\crefname{observation}{Observation}{Observations}
\crefname{proposition}{Proposition}{Propositions}
\crefname{claim}{Claim}{Claims}
\crefname{condition}{Condition}{Conditions}
\crefname{example}{Example}{Examples}
\crefname{fact}{Fact}{Facts}
\crefname{lemma}{Lemma}{Lemmas}
\crefname{corollary}{Corollary}{Corollaries}
\crefname{definition}{Definition}{Definitions}
\crefname{remark}{Remark}{Remarks}
\newcommand{\sexp}[1]{{\hbox{\tiny$($}}#1{\hbox{\tiny$)$}}} %
\newcommand{\et}{{\sexp{t}}} %
\newcommand{\es}{{\sexp{s}}} %
\newcommand{\at}{a^\et}
\newcommand{\betat}{\beta^\et}
\newcommand{\deltat}{\delta^\et}
\newcommand{\ux}{U_{X}^\et}
\newcommand{\uy}{U_{Y}^\et}
\newcommand{\xit}{X_i^\et}
\newcommand{\yjt}{Y_j^\et}
\newcommand{\omt}{\omega}
\newcommand{\cH}{\mathcal{H}}
\newcommand{\util}{\textsc{Util}}
\newcommand{\negsp}{\hspace{-0.5mm}}
\newcommand{\tmpsum}{\text{$\textstyle\sum_{s=1}^t Z^\es$}}
\newcommand{\uxs}{U_X^\es}
\newcommand{\uys}{U_Y^\es}
\newcommand{\unconsgrp}{\textsc{ControlGrp}}
\newcommand{\rrgrp}{\textsc{RRGrp}}
\newcommand{\lb}{\ \underline{a}\ }
\newcommand{\zo}{\{0,1\}}
\newcommand{\commentalg}[1]{{\small\em\quad \textcolor{gray}{//{#1}}}\hspace{-3mm}}
\newcommand{\factor}{\ensuremath{\text{\footnotesize$\frac{b}{a^{\sexp{1}}-1}$}}}
\newcommand{\uniffactor}{\frac{16\ln{n}}{n(1-\rho)}\cdot\frac{a^{\sexp{1}}+b}{a^{\sexp{1}}-1}}
\newcommand{\uniffactorb}{\frac{16\ln{n}}{n(1-\rho)}\cdot\frac{a^{\sexp{1}}(a^{\sexp{1}}+b)}{a^{\sexp{1}}-1}}
\newcommand{\white}[1]{\textcolor{white}{#1}}
\begin{document}

\title{The Effect of the Rooney Rule on Implicit Bias in the Long Term}

\author{L. Elisa Celis \\ Yale University \and Chris Hays \\ Yale University \and Anay Mehrotra \\ Yale University \and Nisheeth K. Vishnoi \\ Yale University}

\maketitle
\begin{abstract}
  A robust body of evidence demonstrates the adverse effects of implicit bias in various contexts, including hiring,  admissions, criminal justice, and healthcare.
  The Rooney Rule is a simple intervention developed to counter implicit bias in hiring, and has been  implemented in the private and public sector in various settings~\cite{duru2018rooney, collins2007tackling}.
  The Rooney Rule requires that a selection panel include at least one candidate from an underrepresented group in their shortlist of candidates.
  Recently, Kleinberg and Raghavan~\cite{Selection_Problems_in_the_Presence_of_Implicit_Bias} proposed a mathematical model of implicit bias and studied the effectiveness of the Rooney Rule when applied to a single selection decision.
  However, selection decisions often occur repeatedly over time; e.g., a software firm is continuously hiring employees, or a university makes admissions decisions every year.
  Further, it has been observed  that, given consistent counterstereotypical feedback, implicit biases against underrepresented candidates can change (e.g., \cite{dasgupta2008social}).

  In this paper, building on Kleinberg and Raghavan's model and work on opinion dynamics, we consider a model of how a selection panel's implicit bias changes over time given their hiring decisions either with or without the Rooney Rule in place.
  Our main result is that, for this model, when the selection panel is constrained by the Rooney Rule, their implicit bias roughly reduces at a rate that is inverse of the size of the shortlist---independent of the total number of candidates, whereas without the Rooney Rule, the rate is inversely proportional to the number of  candidates.
  Thus, our model predicts that when the  number of candidates is much larger than the size of the shortlist, the Rooney Rule enables a significantly faster reduction in implicit bias, providing additional reason in favor of instating it as a strategy to mitigate implicit bias.
  Towards empirically evaluating the long-term effect of the Rooney Rule in repeated selection decisions, we conduct an iterative candidate selection experiment on Amazon Mechanical Turk. We observe that, indeed, decision-makers subject to the Rooney Rule select more minority candidates \emph{in addition to} those required by the rule itself than they would if no rule is in effect, and in fact are able to do so without considerably \mbox{decreasing the utility of candidates selected.}
\end{abstract}

\newpage
\setcounter{tocdepth}{2}
\tableofcontents

\newpage

\section{Introduction}
Implicit bias is the unconscious association of certain qualities (or lack thereof) to individuals of socially salient groups, like those defined by race, gender, or sexuality.
In recent decades, a large body of experimental research has demonstrated the adverse effects of implicit bias in a wide range of contexts, including hiring
\cite{rooth2010automatic, ziegert2005employment, corinne2012science},
university admissions~\cite{capers2017implicit, posselt2016inside},
criminal justice~\cite{kaang2012implicit, hall2016black, bennet2010unraveling},
and healthcare~\cite{chapman2007sterotyping, green2007implicit, jimenez2010perioperative}.
In fact, even when decisions are based on quantifiable characteristics of the applicants, selection panels can systematically undervalue underrepresented candidates. %
For instance, it was found that women in managerial positions had to show roughly twice as much evidence to be seen as equally competent to men~\cite{williams2014double,lyness2006fit}, and evaluators, across jobs, unknowingly customized their evaluation criteria to favor the stereotypical gender~\cite{uhlmann2005constructed}. %

The Rooney Rule is a simple and widely adopted policy to counteract the adverse effects of implicit bias~\cite{collins2007tackling, waldstein2015success, reid2016rethinking}.
It requires that at least one among a shortlist of candidates (for further interviews or evaluation) picked by a selection panel must come from an underrepresented group.
It was originally instituted in the interview processes for hiring head coaches in the NFL in 2003, and since then,
has been adopted by various corporations such as, Amazon, Facebook, and Microsoft~\cite{passariello2016tech}, and in several public sector contexts in the US \cite{duru2018rooney, bland2017schumer}.
In fact, in 2020, the NFL broadened the Rooney Rule to require at that least two minority candidates in the shortlist of interviewees for head coaching positions be from an underrepresented group~\cite{young2020nfl}.
This motivates a generalization of the rule to the $\ell$-th order Rooney Rule (also proposed in \cite{Selection_Problems_in_the_Presence_of_Implicit_Bias}), which requires at least $\ell$ of the shortlisted candidates to be in the underrepresented group.

Although there is evidence that the Rooney Rule has had a positive impact in various contexts \cite{collins2007tackling, dickey2017lyft}, it is a subject of much  debate~\cite{collins2007tackling, waldstein2015success}.
Proponents of the policy argue that it counteracts the effects of implicit bias, while critics warn that it can lead to poorer selections.

Towards demonstrating the effectiveness of the Rooney Rule, Kleinberg and Raghavan ~\cite{Selection_Problems_in_the_Presence_of_Implicit_Bias} recently proposed a mathematical model of implicit bias and showed that the Rooney rule can improve the ``true'' utility of the selection panel in a single hiring decision.
More precisely, they consider $n$ candidates partitioned into two disjoint groups $G_X,G_Y\subseteq [n]$, where $G_X$ is the group of underrepresented candidates.
Each candidate has a true, {\em latent utility}, which is the value they would contribute if selected, and an {\em observed utility} which is the selection panel's (potentially biased) estimate of their latent utility.
They model the panel's implicit bias as a multiplicative factor $\beta\in [0,1]$, such that
the observed utility of underrepresented candidates (those in $G_X$) is $\beta$ times their latent utility, while the observed utility of all other candidates (those in $G_Y$) is the same as their latent utility.%
\footnote{To be precise, \cite{Selection_Problems_in_the_Presence_of_Implicit_Bias} consider $\beta\in (1,\infty)$, and assume that the observed utility of an underrepresented candidate is $\nfrac{1}{\beta}$ times their latent utility. Considering $\beta\in [0,1]$ is more convenient in our setting.}
Thus, if $\beta=1$, the panel evaluates underrepresented candidates without bias, and its bias against them becomes more severe as $\beta$ approaches 0.
The panel shortlists $k$ candidates (out of $n$) with the highest observed utility.
In the setting where $n$ is much larger than $k$, \cite{Selection_Problems_in_the_Presence_of_Implicit_Bias} characterize conditions on $\beta$, the proportion of underrepresented candidates ($\rho \coloneqq \frac{|G_X|}{n}$), and the distribution of latent utilities, such that under these conditions, applying the Rooney Rule (for $\ell=1$) increases the total latent utility of the shortlisted candidates.

An important benefit of the Rooney Rule is that the panel has the opportunity to closely evaluate qualified underrepresented candidates, see that their latent utility was greater than expected and learn to evaluate underrepresented candidates more accurately.
Indeed,	studies show that implicit biases can change over time~\cite{schuman1997racial, charlesworth2019patterns} and with changes in local-environments \cite{dasgupta2013implicit}.
In particular, it has been observed that exposure to other groups~\cite{dasgupta2008social, anderson2013imperative} and counterstereotypical evidence opposing the implicit beliefs~\cite{blair2001imagining, dasgupta2001malleability} can help reduce implicit bias.
Thus, one would hope that as the panel observes the latent utilities of more underrepresented candidates over multiple iterations of selection, its implicit biases would change.
This is in line with work on belief and opinion formation, which model how individuals update their beliefs and opinions based on the information they observe~\cite{acemoglu2011opinion, josang2001logic}. %
At a high-level, these works model the beliefs of individuals using  probability distributions, where, each time an individual receives new information, the distribution is updated to incorporate the new information and reflect the corresponding new beliefs~\cite{chazelle2019iterated, jadbabaie2012non}.\\

\subsection{Our contributions}
We consider a mathematical model for implicit bias and how it updates each selection decision. Under the assumptions of the model, the Rooney Rule provably enables a significantly faster reduction in implicit bias of the panel over multiple selection decisions when compared to the unconstrained condition; this gives a mathematical explanation for the aforementioned empirical observations. %

\medskip
Our model maintains a probability distribution over the implicit bias of the panel and updates this distribution after each iteration depending on the ratio of the latent utility of the shortlisted candidates and the observed utility of the shortlisted candidates; see \cref{sec:ourmodel}.
Technically, we show that, when the panel uses the $\ell$-th order Rooney Rule for $\ell \geq 1$, its implicit bias reduces, roughly, at the rate of $\frac{1}{(k-\ell+1)}$ --  independent of $n$ (\cref{thm:rooney_rule}), whereas, when the panel is not constrained by the Rooney Rule, then the rate at which its implicit bias reduces is, roughly,  $\frac{1}{n}$ (\cref{thm:no_rooney_rule}).
Thus, when the number of applicants $n$ is much larger than the size of the shortlist $k$, our model predicts that using the Rooney Rule leads to a significantly faster reduction in the panel's implicit bias.
Expanding on these results, we  characterize the effect of other parameters (such as the proportion of underrepresented candidates $\rho$) on the change in the panel's implicit bias over time (\cref{sec:predictions_of_model}).
We also discuss how our results generalize to other models, where the panel's implicit bias is drawn from distributions not in the beta family and updated using other rules (\cref{sec:technical_remarks}).
Thus, our theoretical results complement the work of Kleinberg and Raghavan \cite{Selection_Problems_in_the_Presence_of_Implicit_Bias} and provide an additional reason  to instate the Rooney Rule as a strategy to mitigate implicit bias.

\medskip

Towards empirically evaluating the effect of the Rooney Rule in repeated selections, we enlist participants on Amazon's Mechanical Turk to participate in an iterative selection experiment (Section~\ref{sec:empirical_results}).
We represented candidates from two different groups with different colored tiles and applied bias to the observed utilities of one of the groups.
In each iteration, participants were incentivized to maximize the latent utility of their selection, and the latent utilities of their selections were revealed after each round.
We observe that the participants subject to the Rooney Rule selected significantly more underrepresented candidates \emph{in addition to} those required by the rule itself than participants not subject to the rule, without substantially decreasing the utility of candidates selected.

\subsection{Related work}\label{sec:related_work}
\paragraph{Implicit bias.}
Studying implicit bias is a rich field in psychology~\cite{greenwald2006implicit, greenwald1995implicit} and
several works study the origins of implicit bias~\cite{payne2019historical, rudman2004sources}, its adverse effects~\cite{lyness2006fit, sadler2012world, williams2014double}, and its long-term trends~\cite{charlesworth2019patterns}.
We point the reader to the excellent treatise~\cite{kite2016psychology} for an overview of the field.

\medskip

\cite{Selection_Problems_in_the_Presence_of_Implicit_Bias} introduce a model for implicit bias, and under this model, characterized conditions where the Rooney Rule improves the latent utility of the selection.
Under the same model, \cite{celis2020interventions} study the ranking problem (a generalization of selection) under implicit bias, and propose simple constraints on rankings which improve the latent utility of the output ranking.
Both \cite{Selection_Problems_in_the_Presence_of_Implicit_Bias} and \cite{celis2020interventions} consider the latent utility in a single instance of the problem,
whereas, we are interested in how the implicit bias of the panel changes over multiple iterations. %

\medskip

\cite{EmelianovGGL20} study selection under a different model of bias: where the panel's observed utility has higher than average noise for underrepresented candidates. %
They consider a family of constraints, and show that, in their model, these constraints always increase the latent utility.
Unlike them, our goal is to understand the effects of constraints on the implicit bias of the panel.
In \cref{sec:technical_remarks}, we discuss how our results generalize when there is noise in $\betat$.
Accounting for other forms of noise in the observed utilities can be an interesting extension to this work.

\paragraph{Belief update models.}
Works on opinion dynamics and social learning study mathematical models of how people's beliefs change when they gain new information~\cite{chazelle2019iterated, jadbabaie2012non, acemoglu2011bayesian}.
Several works in this field represent beliefs by probability distributions and study simple rules, similar to the one we consider, to update these distributions~\cite{chazelle2019iterated, degroot1974reaching}. %
We refer the reader to \cite{acemoglu2011opinion} for a comprehensive overview of the field.

In a similar vein, the theory of subjective logic~\cite{josang2016subjective} mathematically models beliefs under uncertainty. %
A seminal work \cite{josang2001logic} gives a mapping from beliefs to a beta distribution $\Beta(a,b)$; roughly, $a$ is the evidence favouring the belief and $b$ evidence against it.
A canonical example is the statement: ``A ball drawn at random (from an urn of red and black balls) will be red''~\cite{josang2001logic}.
A person observes multiple draws from the urn, and after each draw updates their belief.
Our model uses the beta distribution to model the panel's implicit bias parameter and, in relevant contexts, can be viewed as the panel's ``belief'' in the following statement:
{\em the latent utilities of candidates from group $G_X$ and group $G_Y$ are identically distributed.}

\paragraph{Long-term impact.}
Several prior works have studied the long-term impacts of affirmative action policies on society~\cite{liu2018delayed, MouzannarOS19, HuC18shorttermintervention}.
\cite{liu2018delayed} consider how common fairness constraints in classification settings affect the underlying population over time.
\cite{MouzannarOS19} assumes individuals have binary utilities, (qualified or unqualified), and they give asymptotic results for a broad-set of dynamics depending on how the fraction of qualified individuals in each group changes. %
\cite{HuC18shorttermintervention} studies a dynamics in the context of a labor market.

\smallskip
In contrast, we allow for non-binary utilities, study the effect on the panel's implicit bias, and give non-asymptotic results. %

\paragraph{Iterated learning experiments.} %
In a classic formulation of a \textit{function learning experiment}, %
in each iteration, participants are given a numeric input and asked to predict its numeric output from some examples or using knowledge they accumulated so far.
Several experiments in cognitive science~\cite{busemeyer1997learning,lucas2015rational,koh2015rational} and behavioral economics~\cite{coutts2019news,holt2009update} use function learning experiments to study human performance on prediction tasks with incomplete information. %
Related to our work, these techniques have also been used to (attempt to) measure implicit biases~\cite{lindstrom2014racial}.
We refer the reader to~\cite{busemeyer1997learning} for an overview of the experimental work in this topic.

\smallskip
Our empirical experiment builds on the classic iterated learning experimental design---there is a simple linear relationship between the observed (input) and latent utilities (output). %
However, we do not ask for participants to explicitly predict the output; instead, they implicitly do so by selecting the observed utilities which they they predict will have the highest latent utilities (output). %

\subsection*{Notation}

For a natural number $n\in \N$ by $[n]$ we denote the set $\{1,2,\dots,n\}$, and for a real number $x\hspace{-0.5mm}\in\hspace{-0.5mm} \R$ by $\exp(x)$ we denote $e^x$.
We use calligraphic letters such as $\D$ and $\cP$ to denote  {\em distributions}, and $X \sim \D$ denotes a sample $X$ drawn from $\D$.
For a distribution $\cD$ its support is the set $$\{x\colon \Pr_{X\sim\cD}[X=x]>0 \},$$ we denote the support of $\cD$ by $\supp(\cD)$.
If the support of a distribution is an interval over the reals, we say that the distribution is continuous.
We use $(t)$ in the superscript to indicate  the $t$-th iteration.
We use the subscript $i$ to index the underrepresented candidates and $j$ to index all other candidates.
We use $\Beta(a,b)$ to denote the beta distribution with parameters $a$ and $b$.
It holds that
$$\expect_{\beta\sim \Beta(a,b)}\sinsquare{\beta}=\frac{a}{(a+b)}.$$
Formally, given $a,b\geq 0$, define $\Beta(a,b)$ to be the distribution with the following cumulative density function: for all $x\in [0,1]$
$$\Pr_{\beta\sim \Beta(a,b)}[\beta\leq x] \coloneqq \frac{\int_0^x y^{a-1} (1-y)^{b-1}dy}{\int_0^1 y^{a-1} (1-y)^{b-1}dy}.$$

\medskip
\section{Model}\label{sec:ourmodel}
In each round of selection, there are $n$ candidates, and a selection panel shortlists $k$ of them.
The candidates are partitioned into two disjoint groups $G_X,G_Y\subseteq [n]$, where $G_X$ denotes the group of underrepresented candidates. %
The intersection $G_X\cap G_Y$ is empty and $G_X\cup G_Y = [n]$.
We call the candidates in $G_X$ the {\em $X$-candidates} and those in $G_Y$ the {\em $Y$-candidates}.
Each candidate has a true or {\em latent utility} which is the value which a candidate would contribute if selected.
Denote this utility by $X_i\geq 0$ for the $i$-th $X$-candidate, and by $Y_j\geq 0$ for the $j$-th $Y$-candidate.
We assume that the latent utilities of all candidates are independently and identically (i.i.d.) drawn in each iteration from some continuous distribution $\cP$. %
We assume that $\cP$ has non-negative and bounded support.
Let $\rho$ be the fraction of the underrepresented candidates: %
$$\rho\coloneqq \frac{|G_X|}{|G_X|+|G_Y|}.$$
{While the utilities of individual candidates change with time, we assume that $\rho$ itself does not change and, as a consequence, under our assumption $G_X$ and $G_Y$ also do not change.}

\subsection[Implicit bias model]{Implicit bias model of \cite{Selection_Problems_in_the_Presence_of_Implicit_Bias}}
In \cite{Selection_Problems_in_the_Presence_of_Implicit_Bias}, based on the empirical observations of \cite{wenneras2001nepotism},
the setting where the panel does not observe  latent utilities and instead sees an {\em observed utility}, which is its (possibly biased) estimate of the latent utilities is considered. %
They consider the following model of observed utilities parameterized
by an implicit bias parameter $\beta\in [0,1]$: define the observed utilities of an $X$-candidate $i\in G_X$  as
$$\wt{X}_i \coloneqq \beta\cdot X_i.$$
The observed utility of a $Y$-candidate is assumed to be the same as its latent utility.
Notice that in the above definition, if $\beta = 1$, then $\wt{X}_i=X_i$ for all $i\in G_X$ and the panel evaluates $X$-candidates without bias.
It is sometimes useful to define the following vectors: $X\coloneqq (\dots, X_i, \dots)$,\ $\wt{X}\coloneqq (\dots, \wt{X}_i,\dots )$ and $Y\coloneqq (\dots, Y_j,\dots )$, where $i$ varies \mbox{over $G_X$ and $j$ varies over $G_Y$.}

\subsection{Candidate selection problems and the Rooney Rule}
The utility of a subset of candidates $S$ is defined as the sum of the utilities of all candidates in $S$.
Given a subset $S\subseteq[n]$, define its {\em total observed utility} as
\begin{align}
  \util{}(S,\wt{X},Y) \coloneqq \sum_{i\in S\cap G_X}\hspace{-0mm} \wt{X}_i + \sum_{j\in S\cap G_Y}\hspace{-0mm} Y_i.\label{eq:total_observed_utility} %
\end{align}
Similarly, define the {\em total latent utility} of $S\subseteq [n]$ as the sum of the latent utilities of all candidates in $S$:  $\util{}(S,{X},Y)$ (where we replace $\wt{X}$ in Equation~\eqref{eq:total_observed_utility} by ${X}$).

As in	\cite{Selection_Problems_in_the_Presence_of_Implicit_Bias}, we assume that	the selection panel selects a subset of candidates $S$ of size $k$ which maximizes total observed utility:
\begin{align}
  S\coloneqq \argmax_{T\subseteq[n]\colon |T|=k} \util{}(T,\wt{X},Y).\label{eq:opt_subset_without_rooney_rule}
\end{align}
Note that when $\beta<1$, the set $S$ may have  few $X$-candidates, disadvantaging those candidates. %

The $\ell$-th order Rooney Rule tries to address this by requiring the panel to select at least $\ell$ $X$-candidates. (Note that if $\ell=1$, then the $\ell$-th order Rooney Rule is \mbox{the same as the usual Rooney Rule). Let}
\begin{align}
  \mathcal{R}(\ell) \coloneqq \inbrace{T\subseteq [n] \colon |T\cap G_X| \geq \ell\ \text{and}\ |T| = k}
\end{align} \label{eq:satisfying_subsets}

\noindent
be the set of all subsets of size $k$ satisfying the $\ell$-th order Rooney Rule.
The panel constrained by the $\ell$-th order Rooney Rule picks a subset $S_\ell \in \mathcal{R}(\ell)$ satisfying the rule which maximizes the total observed utility:
\begin{align}
  S_\ell \coloneqq \argmax_{T\in \mathcal{R}(\ell)}\ \util{}(T,\wt{X},Y).\label{eq:selection_with_const}
\end{align}
Notice that the set $S$ selected by the panel without the the Rooney Rule (in Equation~\eqref{eq:opt_subset_without_rooney_rule}) is the same as the set $S_0$ above.

As we discuss next, the panel updates its beliefs as a function of the total latent utility and the total observed utility of $S_\ell$.
To simplify the notation there, let the $U$ and $\wt{U}$ be the total latent utility and the total observed utility of $S_\ell$:
\begin{align}
  U\coloneqq \util(S_\ell, X, Y),\\
  \wt{U}\coloneqq \util(S_\ell, \wt{X}, Y).
\end{align}
Note that, before selecting the candidates, the panel knows $\wt{U}$, but does not know $U$.
Our implicit bias update model assumes that the panel sees $U$ {\em after} selecting the candidates.
{The rationale is that after the panel selects the candidates $S_\ell\subseteq[n]$ and observes their actual performance---at the job or in an interview---it can better estimate their latent utility $U$.}

\subsection{Implicit bias update model}
Inspired by the works \cite{josang2001logic,vskoric2016flow} that give  mappings from beliefs to beta distributions, in each iteration,		we model the implicit bias $\beta$ as a draw from the a beta distribution $\Beta(a,b)$, where $a,b> 1$ and, roughly, $a$ is the evidence favouring the belief and $b$ evidence against it.

In our setting, $a$ is roughly proportional to the belief that $X$-candidates have the same latent utility as $Y$-candidates, and $b$ is roughly proportional to bias against it. %
Notice that the larger $a$ is, the closer $\expect[\beta]$ is to 1 (no bias), and the larger $b$ is, the closer $\expect[\beta]$ is to 0 (largest bias).\footnote{This follows since the expected value of a $\Beta(a,b)$ random variable is $\nfrac{a}{(a+b)}$.}

If $U>\wt{U}$, then the panel has evidence that the $X$-candidates performed better than expected.
In this case, the panel's implicit bias reduces, i.e., $a$ would increase.
Since $\beta\in [0,1]$, we can argue that $U \geq \wt{U}$.
To see this, note that
$$(U-\wt{U})=(1-\beta)\cdot\sum_{i\in S_\ell\cap G_X}X_i,$$ and since $1-\beta\geq 0$ and $X_i\geq 0$ for all $i$, we have that
$$(U-\wt{U})\geq 0.$$
Thus, the panel does not receive evidence to support it's bias, and so, $b$ is a constant in this model.

To summarize, given parameter $a> 1$ which varies over iterations and some a fixed parameter $b>1$, \mbox{we consider the distribution} $$\cD(a)\coloneqq\Beta(a,b),$$ and draw the panel's implicit bias $\beta$ from $\cD(a)$.

Since we consider multiple iterations of the above model of candidate selection, we need to specify how $\beta$ evolves.
Let a superscript $(t)$ on a variable indicate the variable's value at the $t$-th iteration.
We start with $a^{(1)}$ to be some fixed number greater than $1$.
Suppose that in the $t$-th iteration, the panel selects a subset $S^{\sexp{t}}_\ell$ whose total latent utility is $U^{\sexp{t}}$ and total observed utility is $\wt{U}^{\sexp{t}}$.
We propose and study the following  update rule:
\begin{align*}
  a^{\sexp{t+1}} &\coloneqq \frac{U^{\sexp{t}}}{\wt{U}^{\sexp{t}}} \cdot \at,\tagnum{Update rule}\customlabel{eq:update_rule}{\theequation}\\
  \beta^{\sexp{t+1}} &\sim \cD(a^{\sexp{t+1}}).
\end{align*}

\noindent
We summarize the complete mathematical model in  Model~\ref{alg:iterative_selection_and_panel_learning}.

\setlength{\algomargin}{0.5em}
\begin{algorithm}[h!]
  \SetAlgoNoEnd
  \caption{Our implicit bias update model}
  \label{alg:iterative_selection_and_panel_learning}
  \kwInit{The Rooney Rule parameter $\ell\in \Z_{\geq 0}$, a parameter $a^{\sexp{1}}>1$, and a constant $b>1$.}\vspace{2mm} %
  \For{$t=1,2,\dots$}{
  \vspace{1mm}
  {\bf Sample} $\sinbrace{Y_j^{\sexp{t}}}_{j\in G_Y}$ and $\sinbrace{X_i^{\sexp{t}}}_{i\in G_X}$ i.i.d. from $\cP$.\vspace{1mm}

  {\bf Sample} $\betat$ from $\D(a^{\sexp{t}})\coloneqq\Beta\sinparen{a^{\sexp{t}}, b}$.\hspace{12em} \commentalg{Implicit bias} \vspace{1mm}

  {\bf Define} $\wt{X}_i^{\sexp{t}}  \coloneqq \betat\cdot X_i^{\sexp{t}}$ for all $i\in G_X$.\hspace{13em}\commentalg{Observed utilities}\vspace{1mm}

  {\bf Select} $S^{\sexp{t}}_\ell\coloneqq \argmax_{S\in \mathcal{R}(\ell)}\  \util{}(S, \wt{X}^{\sexp{t}}, {Y}^{\sexp{t}})$.\hspace{9.4em}\commentalg{Shortlist}\vspace{1mm}\\
  {\bf Let} $U^{\sexp{t}} \coloneqq \util{}(S^{\sexp{t}}_\ell,X,Y)$ and $\wt{U}^{\sexp{t}}\coloneqq \util{}(S^{\sexp{t}}_\ell,\wt{X},{Y})$.\vspace{1mm}

  {\bf Update}  $a^{\sexp{t+1}} \coloneqq \inparen{\frac{U^{\sexp{t}}}{\wt{U}^{\sexp{t}}}} \cdot  a^{\sexp{t}}$.\hspace{15.9em}\commentalg{Update implicit bias}\vspace{1mm}
  }
\end{algorithm}

\subsection{Discussion  of the model}

While	we  consider a simple multiplicative update rule,
one could also study other  update rules.
For instance, some prior works on opinion dynamics study additive updates and updates where the new value is a convex combination of the current value and observation~\cite{degroot1974reaching, krause2000discrete}.
We discuss how our results generalize to other update rules in \cref{sec:generalizing_for_other_update_rules}.

Note that our model assumes that the panel evaluates the latent utility of all selected candidates together, and so observes $U^{\sexp{t}}$ and $\wt{U}^{\sexp{t}}$ and not the latent utilities of individual candidates; we discuss other scenarios in \cref{sec:limitations}.

The update rule~\eqref{eq:update_rule} in our model is only a function of the current belief (i.e., $\cD^{\sexp{t}}$) and observation (i.e., $\frac{U^{\sexp{t}}}{\wt{U}^{\sexp{t}}}$).
This is equivalent to the assumption made by the Bayesian-Without-Recall model~\cite{rahimian2016learning, chazelle2019iterated}.
(Note, however, that the update in Equation~\eqref{eq:update_rule} is not Bayesian).
Our update rule also bears resemblance to the multiplicative weights update method \cite{AHK} that has been successfully used to explain learning in social groups \cite{CelisKV17} and sexual evolution \cite{PNAS2:Chastain16062014}.

While our update rule does encode past observations, considering more complicated rules which explicitly include all past observations can lead to interesting extensions of this work.

Our model incorporates uncertainty in the panel's implicit bias at each iteration by drawing from a distribution, but assumes that the utilities do not have any noise.
It would be interesting to consider models which incorporate noise in the observations of the committee as well. %

\section{Theoretical results}\label{sec:theoretical_results}
Our first result considers the case when the panel uses the Rooney Rule.
It shows that, under the implicit bias update rule~\eqref{alg:iterative_selection_and_panel_learning}, when the panel uses the Rooney Rule, the reduction in its implicit bias  is independent of $n$.
\begin{theorem}[\textbf{Fast learning with the Rooney Rule}]\label{thm:rooney_rule}
  Under the implicit bias update rule~\eqref{alg:iterative_selection_and_panel_learning},
  given parameters $a^{\sexp{1}},b > 1$ there exists a constant $C_1>0$, such that,
  for all iterations $t\in \N$, population size $n\in \N$, number of candidates selected $k\in [n]$, Rooney Rule parameter $\ell\in [k]$, ratio of the underrepresented candidates $\rho\in (0,1)$,
  and continuous and bounded distribution of latent utility $\cP$,
  when the panel is constrained by the $\ell$-th order Rooney Rule, then
  \begin{align}
    \label{eq:lowerbound_on_bias}\expect\big[\beta^\et\big]
    &\geq  \inparen{1 - \frac{1-\expect[\beta^{\sexp{1}}]}{1+C_1 \cdot \frac{t\rho}{(k-\ell+1)}}}\cdot\inparen{1-e^{-t\rho/16}}
  \end{align}
  where the expectation is over the draws of implicit bias $\beta^{\sexp{s}}$ and latent utilities of candidates in all previous iterations $s\in [t-1]$.
\end{theorem}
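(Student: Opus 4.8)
\noindent\textit{Proof plan.}
Since conditionally on $a^{(t)}$ we have $\beta^{(t)}\sim\Beta(a^{(t)},b)$, the target quantity is $1-\E[\beta^{(t)}]=\E\big[\tfrac{b}{a^{(t)}+b}\big]$, so it suffices to show that $a^{(t)}$ grows like $\tfrac{t\rho}{k-\ell+1}$ in a suitable averaged sense. Write $R^{(t)}\coloneqq U^{(t)}/\wt U^{(t)}$, $A^{(t)}\coloneqq\sum_{i\in S^{(t)}_\ell\cap G_X}X^{(t)}_i$, and $B^{(t)}\coloneqq\wt U^{(t)}-\beta^{(t)}A^{(t)}=\sum_{j\in S^{(t)}_\ell\cap G_Y}Y^{(t)}_j$, so that $R^{(t)}\ge 1$ and $a^{(t+1)}=R^{(t)}a^{(t)}$; hence $a^{(t)}$ is nondecreasing and everything reduces to a per-step lower bound on $R^{(t)}$.

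\emph{One-step improvement on a good event.} Let $\eventE^{(t)}$ be the event that the largest of the $n$ latent utilities in round $t$ is carried by an $X$-candidate. As the latent utilities are i.i.d.\ and continuous, $\Pr[\eventE^{(t)}]=\rho$, and $\eventE^{(t)}$ depends only on the latent utilities, hence is independent of $\beta^{(t)}$ given $a^{(t)}$. The key estimate is that, on $\eventE^{(t)}$,
\[
  R^{(t)}-1=\frac{(1-\beta^{(t)})\,A^{(t)}}{\beta^{(t)}A^{(t)}+B^{(t)}}\;\ge\;\frac{1-\beta^{(t)}}{2(k-\ell+1)}.
\]
Indeed, because $\ell\ge 1$, the $X$-candidate of highest observed (hence highest latent) utility must lie in $S^{(t)}_\ell$ — otherwise swapping it in would raise the observed utility without violating the constraint — so $A^{(t)}$ is at least the global maximum latent utility $M^{(t)}$; and $S^{(t)}_\ell$ contains at most $k-\ell$ $Y$-candidates, each of latent utility $\le M^{(t)}$ on $\eventE^{(t)}$, whence $B^{(t)}\le(k-\ell)M^{(t)}$. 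Case-splitting on whether $\beta^{(t)}A^{(t)}\le B^{(t)}$ (then $\beta^{(t)}A^{(t)}+B^{(t)}\le 2(k-\ell)M^{(t)}$ and $A^{(t)}\ge M^{(t)}$) or $\beta^{(t)}A^{(t)}>B^{(t)}$ (then $\beta^{(t)}A^{(t)}+B^{(t)}<2\beta^{(t)}A^{(t)}$, so $R^{(t)}-1>\tfrac{1-\beta^{(t)}}{2\beta^{(t)}}\ge\tfrac{1-\beta^{(t)}}{2}$) yields the bound. This is precisely where the Rooney constraint helps: only $k-\ell$ of the $k$ slots are spent on $Y$-candidates.

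\emph{Recursion and conclusion.} Put $\phi(a)\coloneqq\tfrac{b}{a+b}$ (decreasing in $a$). Combining the one-step bound with $\tfrac{1}{1+x}\le 1-\tfrac{x}{2}$ for $x\in[0,1]$ gives, pointwise on $\eventE^{(t)}$, $\phi(a^{(t+1)})\le\phi(a^{(t)})-\tfrac{(1-\beta^{(t)})\,a^{(t)}b}{4(k-\ell+1)(a^{(t)}+b)^2}$, while $\phi(a^{(t+1)})\le\phi(a^{(t)})$ unconditionally. Taking $\E[\cdot\mid a^{(t)}]$, using $\E[1-\beta^{(t)}\mid a^{(t)}]=\phi(a^{(t)})$, the conditional independence of $\eventE^{(t)}$ and $\beta^{(t)}$, and $a^{(t)}\ge a^{(1)}$ (so $\tfrac{a^{(t)}}{a^{(t)}+b}\ge c_1\coloneqq\tfrac{a^{(1)}}{a^{(1)}+b}$), one obtains $\E[\phi(a^{(t+1)})\mid a^{(t)}]\le\phi(a^{(t)})-\tfrac{c_1\rho}{4(k-\ell+1)}\,\phi(a^{(t)})^2$; then a full expectation and Jensen's inequality give $u^{(t+1)}\le u^{(t)}-\lambda(u^{(t)})^2$, where $u^{(t)}\coloneqq 1-\E[\beta^{(t)}]$ and $\lambda\coloneqq\tfrac{c_1\rho}{4(k-\ell+1)}$. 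Since $\lambda u^{(t)}<1$, this standard recursion telescopes to $\tfrac{1}{u^{(t)}}\ge\tfrac{1}{u^{(1)}}+\lambda(t-1)$, i.e.\
\[
  1-\E[\beta^{(t)}]\;\le\;\frac{1-\E[\beta^{(1)}]}{\,1+\lambda(t-1)\,(1-\E[\beta^{(1)}])\,},
\]
which is of the claimed form with $C_1\asymp c_1\,(1-\E[\beta^{(1)}])$. Since this bound carries no vanishing correction, the stated inequality — which has the additional factor $1-e^{-t\rho/16}\le 1$ — follows after taking $C_1$ small enough in terms of $a^{(1)},b$; alternatively one may first condition on the Chernoff event that at least $\tfrac{t\rho}{2}$ of the rounds $s\le t$ satisfy $\eventE^{(s)}$ (probability $\ge 1-e^{-t\rho/16}$) and run the recursion over those rounds.

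\emph{Main obstacle.} The crux is the one-step estimate, and specifically getting the denominator $k-\ell+1$ rather than $k$: this rests on (i) the Rooney cap of $k-\ell$ on the number of selected $Y$-candidates, (ii) the fact that on $\eventE^{(t)}$ each selected $Y$-candidate is dominated by a forced $X$-candidate, and (iii) the case split that also controls the regime where the observed $X$-mass $\beta^{(t)}A^{(t)}$ is dominant. A secondary but essential point is the conditioning bookkeeping — that $a^{(t)}$ is history-measurable and $\eventE^{(t)}\perp\beta^{(t)}\mid a^{(t)}$ — which legitimizes passing from the pointwise estimate on $\eventE^{(t)}$ to the expectation recursion.
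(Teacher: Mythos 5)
Your proof is correct, but it takes a genuinely different route from the paper's. The paper conditions on two events per round --- the top latent utility lying in $G_X$ (probability $\rho$) \emph{and} $\beta^{\sexp{t}}$ falling below its median --- uses explicit bounds on the median of $\Beta(a,b)$ to get a fixed additive increment to $a^{\sexp{t}}$ on each good round, counts good rounds with a Chernoff bound (this is where the $1-e^{-t\rho/16}$ factor comes from), and finally converts a high-probability lower bound on $a^{\sexp{t}}$ into a bound on $\expect[\beta^{\sexp{t}}]$ via monotonicity of $\Phi$. You dispense with the median event and the Chernoff step entirely: you keep the factor $(1-\beta^{\sexp{t}})$ in the one-step gain, use the conditional independence of $\eventE^{\sexp{t}}$ and $\beta^{\sexp{t}}$ together with $\expect[1-\beta^{\sexp{t}}\mid a^{\sexp{t}}]=\tfrac{b}{a^{\sexp{t}}+b}$, and apply Jensen to get the deterministic recursion $u^{\sexp{t+1}}\le u^{\sexp{t}}-\lambda (u^{\sexp{t}})^2$ for $u^{\sexp{t}}=1-\expect[\beta^{\sexp{t}}]$, which telescopes to a bound of the claimed form with \emph{no} exponential correction --- so your conclusion is, if anything, slightly stronger, and the argument avoids the beta-median facts altogether. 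What the paper's route buys in exchange is generality: its proof only needs $\Phi$ increasing plus a median bound, which is exactly what powers the extension in Section~\ref{sec:generalization_for_other_distributions}, whereas your recursion leans on the explicit form $\expect[\beta\mid a]=a/(a+b)$. Two small loose ends in your write-up, both routine: your per-step case split loses an unnecessary factor of $2$ (on $\eventE^{\sexp{t}}$ one has directly $\delta^{\sexp{t}}\ge \tfrac{1-\beta^{\sexp{t}}}{\beta^{\sexp{t}}+(k-\ell)}\ge\tfrac{1-\beta^{\sexp{t}}}{k-\ell+1}$), and matching the stated inequality needs a word for $t=1$ and for the $(t-1)$-versus-$t$ discrepancy --- for $t\ge 2$ halve $C_1$, and for $t=1$ one must actually invoke the factor $1-e^{-\rho/16}\le \rho/16$ (your bare bound $u^{\sexp{1}}\le u^{\sexp{1}}$ does not beat the bracket alone), which again works once $C_1$ is small in terms of $a^{\sexp{1}},b$, as you indicate.
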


\noindent
Note that the lower bound in Equation~\eqref{eq:lowerbound_on_bias} is independent of $n$ and, very roughly, scales as $\frac{\rho}{k-\ell+1}.$
At a high-level, this is because the Rooney Rule ensures that the panel selects at least $\frac{\ell}{k}$ fraction of the shortlist from $G_X$, and this fraction does not vary with $n$.
As we show in our next result, this independence doesn't hold if the panel does not use the Rooney Rule ($\ell=0$).

\cref{thm:rooney_rule} also shows that if the panel uses the $\ell$-th order Rooney Rule, in the long term (i.e., for a large $t$),
its expected implicit bias is almost 1 (no bias) independent of the number of candidates $n\in \N$.
To see this, let $t$  be a large value compared to $\frac{k-\ell+1}{\rho}$, then notice that both terms containing $t$ in Equation~\eqref{eq:lowerbound_on_bias} are small and do not depend on $n$.
Our next result considers the case when the panel does not use the Rooney Rule.
It assumes that $\cP$ is the uniform distribution on $[0,1]$ (denoted by ${\rm Unif}(0,1)$); we discuss how the result extends to other distributions in \cref{sec:thm:no_rooney_rule_gen}.
\begin{theorem}[\textbf{Slow learning without the Rooney Rule}]\label{thm:no_rooney_rule}
  Under the implicit bias update rule~\eqref{alg:iterative_selection_and_panel_learning},
  given parameters $a^{\sexp{1}},b > 1$,
  for all iterations $t\in \N$, number of candidates selected $k\in[n]$, and ratio of the underrepresented candidates $\rho\in (0,1)$,
  when the panel is not constrained by the Rooney Rule (i.e., $\ell=0$) and $\cP$ is ${\rm Unif}(0,1)$,
  there exists a $n_0\in \N$ and $C_2>0$, such that for all $n\geq n_0$
  \begin{align}
    \expect\big[\beta^\et\big] &\leq \expect\big[\beta^{\sexp{1}}\big] + C_2\cdot\frac{t\ln{n}}{n(1-\rho)},\label{eq:upperbound_on_bias}
  \end{align}
  where the expectation is over draws of implicit bias $\beta^{\sexp{s}}$ and latent utilities of candidates in all previous iterations $s\in [t-1]$.
\end{theorem}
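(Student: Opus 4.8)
The plan is to bound the expectation $\expect[a^{(t)}]$ of the beta-distribution parameter and then transfer the bound to $\expect[\beta^{(t)}]$. Let $\cF^{(t)}$ be the $\sigma$-algebra generated by all randomness strictly before round $t$, so that $a^{(t)}$ is $\cF^{(t)}$-measurable while $\beta^{(t)}$ and the round-$t$ utilities are fresh; note also that $a^{(t)}\ge a^{(1)}>1$ always, since $U^{(s)}\ge\wt{U}^{(s)}$ in each round. Because $x\mapsto x/(x+b)$ is increasing and concave on $[0,\infty)$, Jensen's inequality gives $\expect[\beta^{(t)}]=\expect[a^{(t)}/(a^{(t)}+b)]\le \expect[a^{(t)}]/(\expect[a^{(t)}]+b)$, so it suffices to bound $\expect[a^{(t)}]$. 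The update rule yields $\expect\big[a^{(t+1)}\mid\cF^{(t)}\big]=a^{(t)}\cdot\expect\big[U^{(t)}/\wt{U}^{(t)}\mid\cF^{(t)}\big]$, so the whole proof reduces to the one-step bound
\[
\expect\big[U^{(t)}/\wt{U}^{(t)}\mid\cF^{(t)}\big]\ \le\ 1+h(n),\qquad h(n)=O\!\left(\frac{\ln n}{(1-\rho)n}\right),
\]
holding uniformly in $\cF^{(t)}$ once $n$ exceeds a threshold depending on $k,\rho,a^{(1)},b$ (the statement's quantifier order should be read with $k$ held fixed relative to $n$).

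Establishing this one-step bound is the heart of the argument, and the step I expect to demand the most care. Using $U^{(t)}-\wt{U}^{(t)}=(1-\beta^{(t)})\sum_{i\in S_0^{(t)}\cap G_X}X_i^{(t)}$, I would write the ratio as $1+(1-\beta^{(t)})\sum_{i\in S_0^{(t)}\cap G_X}X_i^{(t)}/\wt{U}^{(t)}$ and exploit three structural facts, writing $Y_{(k)}^{(t)}$ for the $k$-th largest of the $(1-\rho)n$ latent $Y$-utilities at round $t$ (well defined once $(1-\rho)n\ge k$): (i) $\wt{U}^{(t)}\ge k\,Y_{(k)}^{(t)}$, because with $\ell=0$ the set of the $k$ candidates with largest $Y$-utility is feasible, so $\wt{U}^{(t)}$ is at least its observed utility, namely the sum of the $k$ largest $Y$-utilities, which is $\ge k\,Y_{(k)}^{(t)}$; (ii) if $S_0^{(t)}\cap G_X\neq\emptyset$ then $\beta^{(t)}\ge Y_{(k)}^{(t)}$, because a shortlisted $X$-candidate has observed utility at most $\beta^{(t)}$ yet at least the $k$-th largest observed utility, which is at least $Y_{(k)}^{(t)}$; (iii) $\sum_{i\in S_0^{(t)}\cap G_X}X_i^{(t)}\le k$, as there are at most $k$ terms, each at most $1$. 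Combining (i) and (iii): on the event $\{Y_{(k)}^{(t)}\ge\tfrac12\}$ the increment $(1-\beta^{(t)})\sum X_i^{(t)}/\wt{U}^{(t)}$ is at most $2(1-\beta^{(t)})\,\indic[S_0^{(t)}\cap G_X\neq\emptyset]$, which by (ii) is at most $2(1-Y_{(k)}^{(t)})$; since $1-Y_{(k)}^{(t)}$ is the $k$-th smallest of $(1-\rho)n$ i.i.d.\ uniforms, $\expect[1-Y_{(k)}^{(t)}]=k/((1-\rho)n+1)$, contributing $O(k/((1-\rho)n))$. On the complementary event $\{Y_{(k)}^{(t)}<\tfrac12\}$ — which, for $n$ large relative to $k$, has probability $e^{-\Omega((1-\rho)n)}$ by a Chernoff bound on the number of $Y$-utilities exceeding $\tfrac12$, and is independent of $\beta^{(t)}$ given $\cF^{(t)}$ — I would use the crude bound $U^{(t)}/\wt{U}^{(t)}-1\le 1/\beta^{(t)}$ together with $\expect[1/\beta^{(t)}\mid\cF^{(t)}]=(a^{(t)}+b-1)/(a^{(t)}-1)\le(a^{(1)}+b-1)/(a^{(1)}-1)$, a constant (this is where $a^{(t)}\ge a^{(1)}>1$ is used). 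Since $k\le\ln n$ for $n$ large, the two contributions combine to $h(n)=O(\ln n/((1-\rho)n))$; the factor $\ln n$ appears only to absorb the dependence on $k$.

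It remains to unwind the recursion and transfer back. Iterating $\expect[a^{(t+1)}]\le(1+h(n))\expect[a^{(t)}]$ gives $\expect[a^{(t)}]\le a^{(1)}(1+h(n))^{t-1}$. If $(t-1)h(n)>1$, then $t\ln n/((1-\rho)n)=\Omega(1)$, so for a suitable constant $C_2$ the right-hand side of \eqref{eq:upperbound_on_bias} already exceeds $1$ and the bound is trivial; otherwise $(1+h(n))^{t-1}-1\le 2(t-1)h(n)$, and feeding $\expect[a^{(t)}]$ through $x\mapsto x/(x+b)$ — whose derivative on $[a^{(1)},\infty)$ is at most $b/(a^{(1)}+b)^2$ — yields
\[
\expect[\beta^{(t)}]-\expect[\beta^{(1)}]\ \le\ \frac{b\,a^{(1)}}{(a^{(1)}+b)^2}\big((1+h(n))^{t-1}-1\big)\ =\ O\!\left(\frac{t\ln n}{(1-\rho)n}\right),
\]
which is \eqref{eq:upperbound_on_bias}.

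The main obstacle I foresee is fact (ii) together with the treatment of the rare event $\{Y_{(k)}^{(t)}<\tfrac12\}$: fact (ii) is the precise quantitative incarnation of the intuition that when $n\gg k$ essentially no $X$-candidate is shortlisted unless $\beta^{(t)}$ happens to lie within $O(\ln n/n)$ of $1$, and on the rare event one must make sure the crude increment bound is not ruined by the randomness of $a^{(t)}$ (which is exactly why one needs $a^{(t)}\ge a^{(1)}>1$). The hypothesis $\cP={\rm Unif}(0,1)$ enters only through the explicit law of the uniform order statistic $Y_{(k)}^{(t)}$; replacing $\cP$ by any distribution whose density is bounded away from $0$ and $\infty$ near the supremum of its support gives the same conclusion, as in the generalization discussed later in the paper.
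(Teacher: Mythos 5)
Your proposal is correct, and it shares the paper's overall architecture---establish a one-step bound on $\expect[U^{\sexp{t}}/\wt{U}^{\sexp{t}}\mid \text{past}]$ that is uniform over the past (using $a^{\sexp{t}}\geq a^{\sexp{1}}$ to control $\expect[1/\beta^{\sexp{t}}]$), unroll the recursion to bound $\expect[a^{\sexp{t}}]$, then transfer to $\expect[\beta^{\sexp{t}}]$ via monotonicity and concavity of $\Phi(a)=a/(a+b)$---but the heart of the one-step estimate is done by a genuinely different decomposition. The paper splits on whether $\beta^{\sexp{t}}\geq 1-\eps$ with a tuned $\eps\approx \ln n/(n(1-\rho))$ and shows (\cref{lem:delta_small_when_beta_large}, \cref{lem:delta_bounded_in_expect}, \cref{lem:delta_0_whp}) via a Chernoff bound that when $\beta^{\sexp{t}}<1-\eps$, with probability $1-e^{-\eps n(1-\rho)/8}$ at least $k$ $Y$-candidates exceed every observed $X$-utility, so no $X$-candidate is selected and $\delta^{\sexp{t}}=0$, the leftover case being handled by the inverse moment of the beta distribution exactly as in your rare-event term. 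You instead split only on the exponentially rare event $\{Y_{(k)}^{\sexp{t}}<\tfrac12\}$ and use the deterministic implication that selecting any $X$-candidate forces $\beta^{\sexp{t}}\geq Y_{(k)}^{\sexp{t}}$, so the increment is at most $2\bigl(1-Y_{(k)}^{\sexp{t}}\bigr)$, whose exact expectation $k/((1-\rho)n+1)$ replaces the paper's tuned $\eps$. This buys a slightly sharper main term ($k/n$ rather than $\ln n/n$, with $\ln n$ only absorbing $k$; your $n_0$ depends on $k$ just as the paper's does through its requirement $nT_\cP(\eps)\geq 2k/(1-\rho)$), and your conditional-expectation unrolling plus the trivial-regime argument when $(t-1)h(n)>1$ is tidier than the paper's product expansion and its choice of $n_0$ depending on $t$. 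What the paper's split buys in return is a more direct generalization to other latent-utility distributions through $T_\cP(\eps)$ (\cref{thm:no_rooney_rule_generalized}), whereas your fact (ii) leans on the supremum of the support being $1$ and $X_i\leq 1$---though, as you note, the same argument extends when the density is well behaved near the supremum of the support.
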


\noindent
Thus, when the panel is not constrained by the Rooney Rule, the rate at which its implicit bias reduces (approaches 1) is no more than, roughly, $\frac{\ln n}{n(1-\rho)}.$
Further, from \cref{thm:no_rooney_rule} we can infer that for any fixed  $t\in \N$ there is a large enough $n$, such that panel's expected implicit bias after $t$ iterations is almost the same as its initial implicit bias: $\expect\sinsquare{\beta^{\sexp{1}}}$.

Thus, together with Theorem \ref{thm:rooney_rule}, Theorem \ref{thm:no_rooney_rule} shows a significant difference in the panel's implicit bias at a given iteration with and without the Rooney Rule when $n$ is much larger than $k$, providing evidence to support implementing the Rooney Rule.
\noindent We note that both \cref{thm:rooney_rule} and \cref{thm:no_rooney_rule} hold for any value of $t=1,2,\dots;$ and not just in the limit when $t$ is large.
The upper bound in \cref{thm:no_rooney_rule} increases (becomes weaker) as $t$ increases.
This is necessary as we can show that for a fixed $n$ as $t\to\infty$, the panel  becomes unbiased (see Lemma~\ref{thm:convergestoone} for a formal statement).

Finally,    we remark that in the statements of our theorems, we have tried to capture the dependence on each parameter as cleanly as possible and not tried to optimize the constants.
\subsection{{Qualitative predictions}}\label{sec:predictions_of_model}
We now summarize the qualitative effects of the parameters of our model on the panel's expected implicit bias.
\subsubsection{Effect of $n$}
As  discussed above, when the panel does not use the Rooney Rule,  the rate at which its implicit bias decreases slows down with $n$ (\cref{thm:no_rooney_rule}):
\begin{align*}
  \hspace{-13mm}\ell=0\colon\qquad \text{bias reduction rate} \propto \frac1n.
\end{align*}
Intuitively, this is because, as the number of available $Y$-candidates increases, a biased panel selects fewer $X$-candidates.
Thus, the panel becomes less likely to observe that $X$-candidates perform better than expected.
In contrast, when the panel uses the Rooney Rule, its expected implicit bias is independent of $n$ (\cref{thm:rooney_rule}):
\begin{align*}
  \hspace{8.5mm}\ell\geq 1\colon\qquad \text{bias reduction rate is independent of $n$}.
\end{align*}

\subsubsection{Effect of $\ell$}
When the panel uses the Rooney Rule,  increasing $\ell$ (fixing other parameters), increases its learning (\cref{thm:rooney_rule}):
\begin{align*}
  \hspace{-12mm}\ell\geq 1&\colon\qquad \text{bias reduction rate} \propto \ell.
\end{align*}
This is intuitive, as increasing $\ell$ leads the panel to select a larger number of $X$-candidates, and hence, is more likely to observe that they perform better than they expected.
Mathematically, this is because the update, $\frac{\wt{U}^\et}{U^\et}$, is an increasing function of the total utility of the $X$-candidates selected.
And the total utility of the $X$-candidates increases on selecting more candidates.
One might think that this means setting $\ell=k$ is the optimal choice.
However, in the real world, reducing the panel's bias is not the only objective---and the choice of $\ell$ should also consider other factors.

\subsubsection{Effect of $k$}
Increasing $k$, holding everything else fixed, decreases the panel's expected implicit bias at any given iteration (\cref{thm:rooney_rule}):
\begin{align*}
  \hspace{-13mm}\ell\geq 1\colon\qquad \text{bias reduction rate} \propto \frac{1}{k}.
\end{align*}

\subsubsection{Effect of $\rho$}
Both the lower bound~\eqref{eq:lowerbound_on_bias} and upper bound~\eqref{eq:upperbound_on_bias} are increasing functions of $\rho$.
We can infer that in both cases increasing $\rho$ (fixing other parameters) increases $\expect[\betat]$:
\begin{align*}
  \hspace{-11mm}\ell\geq 0&\colon\qquad \text{bias reduction rate} \propto \rho.
\end{align*}
Intuitively, increasing $\rho$ increases the fraction of $X$-candidates, and makes it more likely for the panel to select an $X$-candidate as the number of positive outliers (whose observed utility appears to be good \emph{despite} the bias) increases.

\subsection{Generalizations of our theoretical results}\label{sec:technical_remarks}
\subsubsection{Generalizing to other distributions of implicit bias}\label{sec:generalization_for_other_distributions}
In this section, we show how our results generalize when $\cD(a)$ is not a beta distribution.
This can be interesting, for instance, if the panel's implicit biases do not follow the beta family, or when the draws of $\betat$ are noisy.

\paragraph{Notation.} Let $\cD(a)$ be any continuous distribution supported on $[0,1]$ and parameterized by $a>1$. %
Define $\Phi(x)$ as the expected value of $\beta$ drawn from $\cD(x)$:
$$\Phi(x)\coloneqq \expect_{\beta\sim \cD(x)}\sinsquare{\beta},$$
and ${\rm median}(a)$ as the median of $\beta\sim\cD(a)$.
For example, we can consider $\cD(a)$ to be the truncated normal distribution with mean $a$ and fixed variance, in which case, ${\rm median}(a)=a$.\\[-1mm]

\noindent Assume that
\begin{enumerate}[leftmargin=*]
  \item $\Phi(\cdot)$ is an increasing function,
  \item $\Phi(\cdot)$ is concave,
  \item $\expect_{\beta\sim\cD(a)}\insquare{\frac{1}{\beta}}$ is decreasing in $a$ and finite for every $a>1$, and
  \item there is a constant $C_3>0$, such that, for all $a\negsp >\negsp 1,\ k\negsp \in \negsp [n], \ell\negsp \in\negsp  [k]$
  \begin{align*}
    \frac{1-{\rm median}(a)}{{\rm median}(a)+(k-\ell)}>\frac{C_3}{a \cdot (k-\ell+1)}.%
  \end{align*}
\end{enumerate}

\noindent We prove the following versions of \cref{thm:rooney_rule} and \cref{thm:no_rooney_rule}:
\begin{itemize}[leftmargin=*]
  \item {\em (Informal).} For all $a^{\sexp{1}}> 1$ there is a constant $C_4>0$, such that
  for all $t\in \N,\ n\in \N,\ k\in [n],\ \ell\in [k]$, and continuous and bounded $\cP$,
  when the panel applies the Rooney Rule, the following holds
  \begin{align}
    \expect\sinsquare{\betat}\geq \Phi\inparen{a^{\sexp{1}}+\frac{C_3}{C_4}\frac{t\rho}{(k-\ell+1)}}\cdot \inparen{1-e^{\frac{t\rho}{16}}}.\label{eq:gen_lowerbound}
  \end{align}
  \item {\em (Informal).} For all $a^{\sexp{1}}> 1$ there is a constant $C_5>0$, such that
  for all $t\in \N,\ k\in \N,\ \ell\in [k]$,
  when the panel applies the Rooney Rule and $\cP\coloneqq {\rm Unif}(0,1)$,
  there is a constant $n_0\in \N$, such that for all $n\geq n_0$
  the following holds
  \begin{align}
    \expect\sinsquare{\betat}\leq \Phi\inparen{a^{\sexp{1}}+C_5{\frac{t\ln{n}}{n(1-\rho)}}}+    {\frac{t\ln{n}}{n(1-\rho)}}.\label{eq:gen_upperbound_main}
  \end{align}
\end{itemize}
(We present their formal statements and proofs sketches in Section~\ref{sec:proof:generalization_for_other_distributions}.)

\paragraph{Discussion of assumptions.}
In Model~\eqref{alg:iterative_selection_and_panel_learning}, $a$, roughly, represents how unbiased the panel is.
Thus, it is natural to expect Assumption (1) and first part of Assumption (3).
We can avoid Assumption (2) (we explain this with the proof); although then the upper bound~\eqref{eq:gen_upperbound_main} becomes slightly weaker.
The second part of Assumption (3), intuitively says that the panel's implicit bias does not have a large probability mass near 0 (extreme bias).
Finally, Assumption (4) upper bounds the ${\rm median}(a)$ in terms of $a$: it says that the median is not too close to 1 (no bias) for a given $a$.

\paragraph{Discussion.}
Allowing for different distributions of implicit bias, potentially due to noise, is related to \cite{EmelianovGGL20}, who consider noise in the panel's implicit bias in one iteration.
However, they draw a different value of implicit bias for each candidate and consider different levels of noise for both groups of candidates.
Expanding Model~\ref{alg:iterative_selection_and_panel_learning} to incorporate these would be an interesting direction for future work.

\begin{remark}
  Note that if we substitute the definition of $\Phi(\cdot)$ for the beta distribution in Equation~\eqref{eq:gen_lowerbound} we recover \cref{thm:rooney_rule}, and if substitute the definition in Equation~\eqref{eq:gen_upperbound_main} we recover \cref{thm:no_rooney_rule}.
\end{remark}

\subsubsection{Generalizing to other update rules}\label{sec:generalizing_for_other_update_rules}
In this section, we consider the update rule
\begin{align*}
  a^{\sexp{t+1}}\coloneqq \at \cdot F\inparen{\frac{U^\et}{\wt{U}^\et}},\yesnum\label{eq:update_rule_gen:main}
\end{align*}
where $F\colon [1,\infty)\to[1,\infty)$ is a continuous function satisfying some mild assumptions.
In particular, we assume that
\begin{enumerate}
  \item $F(1)=1$,
  \item $F$ is strictly increasing, and
  \item $F$ is concave.
\end{enumerate}

\noindent \mbox{We prove the following versions of \cref{thm:rooney_rule} and \cref{thm:no_rooney_rule}:}
\begin{itemize}[leftmargin=*]
  \item {\em (Informal).} Under update rule~\eqref{eq:update_rule_gen:main}, if $F$ satisfies the above assumptions and the panel uses the Rooney Rule,
  then for all $\eps\in (0,1)$, there exists an iteration $t\in \N$, such that, $$\expect[\betat]\geq 1-\eps.$$
  \item {\em (Informal).} Under update rule~\eqref{eq:update_rule_gen:main}, if $F$ satisfies the above assumptions and the panel does not use the Rooney Rule,
  then for all $\eps\in (0,1)$ and $t\in \N$, there exists an $n_0\in \N$, such that, for all $n\geq n_0$ $$\expect[\betat]\leq \expect\sinsquare{\beta^{\sexp{1}}}+\eps.$$
\end{itemize}
(We present their formal statements and proofs sketches in Section~\ref{sec:proof:generalization_for_other_update_rules}.)

\subsubsection{Generalizing \cref{thm:no_rooney_rule} to other distributions.}\label{sec:thm:no_rooney_rule_gen}
We can extend \cref{thm:no_rooney_rule} to other continuous and bounded distributions of latent utility $\cP$.
In this case, we need to make a mild assumption on $\cP$, which is satisfied by several distributions.

\paragraph{Notation.} Let $M$ be the supremum of the support of $\cP$: $M\coloneqq \sup(\supp(\cP))$.
Define $T_\cP(\eps)$ as the probability that an $X$ drawn from $\cP$ is at least $(1-\eps) M$.
Formally,
$T_{\text{\footnotesize $\cP$}}(\eps)\coloneqq \Pr_{\text{\footnotesize $X\negsp\sim\negsp\cP$}}\insquare{X\geq (1-\eps) M}$.\\

\noindent Assume that: there exists a function $\eps\colon \N \to \R_{> 0}$ satisfying
\begin{align}
  \eps(n)+e^{-n\cdot T_\cP(\eps(n))} \leq \frac{1}{\poly(n)}.\tagnum{Assumption}\customlabel{eq:assumption_1}{\theequation}
\end{align}

\noindent Then we get the following upper bound in \cref{thm:no_rooney_rule}:
$$\expect\insquare{\betat} \leq  \expect\insquare{\beta^{\sexp{1}}} + \frac{t}{\poly(n)}.$$
Here, $\poly(\cdot)$ hides the dependence on $(1-\rho)$.
We do not state the dependence on $(1-\rho)$ as it changes with the choice of $\cP$.

\noindent (We present the formal statement in \cref{thm:no_rooney_rule_generalized}.)

\paragraph{Discussion of assumption.}
This assumption holds for several common distributions, including any continuous distribution with a compact interval support; see Section~\ref{sec:facts_related_to_assump}.
This also includes common truncated distributions (such as, the truncated powerlaw and truncated normal distributions)
For instance, when $\cP={\rm Unif}{(0,1)}$, we have $T_\cP(\eps(n))=\eps(n)$, and we can choose $\eps(n)=(\ln n)/n$.

\section{Overview of proof techniques}\label{sec:proof_overviews}
In this section, we present an overview of proofs of Theorems \ref{thm:rooney_rule} and \ref{thm:no_rooney_rule}.
The complete proofs appear in Section~\ref{sec:proofs}.
We first need some additional notation.
Let $\Phi(a)$ denote the expectation of $\Beta(a,b)$:
$$\Phi(a)\coloneqq \frac{a}{(a+b)}.$$
Note that $\Phi$ is an increasing and concave function.
Let $X_{\max}\negsp\coloneqq\negsp \max_{i}\xit{}$ and $Y_{\max}\negsp\coloneqq\negsp \max_{j}\yjt{}$.
Define $\ux$ and $\uy$ be the utility of all $X$-candidates selected and all $Y$-candidates selected in iteration $t$ respectively:
$$\ux\coloneqq \sum\nolimits_{i\in G_X \cap S_{\ell}^\et}\xit{}\ \text{\ and\ }\ \uy\coloneqq \sum\nolimits_{j\in G_Y\cap S_{\ell}^\et}\yjt{} .$$
It will be convenient to consider $\deltat\coloneqq \frac{U^\et}{\wt{U}^\et}-1$.
Expressing the update rule~\eqref{eq:update_rule} with $\deltat$  we get  $$a^{\sexp{t+1}} = \at \cdot (1 + \deltat).$$
We can equivalently write $\deltat$ as:
\begin{align*}
  \deltat = \frac{(1-\betat)\cdot \ux }{\betat\ux+\uy}.\yesnum\label{eq:def_of_delta:main_body}
\end{align*}
Observe that $\deltat$ is a decreasing function of $\uy$ and $\betat$, and an increasing function of $\ux$.
Further, as discussed in the \cref{sec:ourmodel}, $U^\et \geq \wt{U}^\et$, and so, $\deltat\geq 0$ and $a^{\sexp{t+1}}\geq \at$.

\subsection{Proof sketch of Theorem~\ref{thm:rooney_rule}}

Fix an iteration $t\in \N$.
In the proof, we show a lower bound on $\at$ which holds with high probability.
Then, using the fact that $\Phi(\cdot)$ is increasing, we can condition the event that ``$\at$ is large,'' to derive the desired lower bound on $\expect[\betat]=\expect_{\at}[\Phi(\at)]$.\\%

Towards this, consider the following events:
\begin{align*}
  \text{$\cE^\et \coloneqq \inbrace{ X_{\max}^\et> Y_{\max}^\et} \text{ and }\
  \cF^\et \coloneqq \inparen{\betat\leq {\rm median}(\betat)}.$}
\end{align*}
\mbox{Conditioning on them we can show a one-step lower bound on $\at$.}
\begin{lemma}[\bf Conditional lower bound on $\at$]\label{lem:rooney_rule_increases_a_fast:main_body}
  For all $s\in \N$, if $\ell>0$, then $(\cE^\es\ \land \ \cF^\es)$ implies
  \begin{align*}
    \inparen{a^{\sexp{s+1}}>a^\es+ \frac{a^{\sexp{1}}(b-1)}{(k-\ell+1)(a^{\sexp{1}}+b)} }.
  \end{align*}
\end{lemma}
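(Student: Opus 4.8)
The plan is to get a lower bound on $\delta^{\sexp{s}}$ under the conditioning $(\cE^\es \land \cF^\es)$, and then use $a^{\sexp{s+1}} = a^\es(1+\delta^\es)$ together with the crude bound $a^\es \geq a^{\sexp{1}}$. Recall from~\eqref{eq:def_of_delta:main_body} that $\delta^\es = \frac{(1-\beta^\es)\, U_X^\es}{\beta^\es U_X^\es + U_Y^\es}$. First I would observe that, since the panel applies the $\ell$-th order Rooney Rule with $\ell > 0$, the shortlist $S_\ell^\es$ contains at least one $X$-candidate, so $U_X^\es \geq X_{\max}^\es$ is false in general—but I can at least say $U_X^\es > 0$. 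To make $\delta^\es$ large I need $U_X^\es$ not too small relative to $U_Y^\es$. This is where the event $\cE^\es = \{X_{\max}^\es > Y_{\max}^\es\}$ enters: on $\cE^\es$, the single largest latent utility over all candidates belongs to an $X$-candidate.

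The key combinatorial step is to bound $\frac{U_X^\es}{\beta^\es U_X^\es + U_Y^\es}$ from below on $\cE^\es$. The shortlist has size $k$ and contains at least $\ell$ $X$-candidates, so it contains at most $k-\ell$ $Y$-candidates; hence $U_Y^\es \leq (k-\ell) Y_{\max}^\es < (k-\ell) X_{\max}^\es$ on $\cE^\es$. On the other hand, the $X$-candidate achieving $X_{\max}^\es$ is selected (it has the highest observed utility among all candidates when $\beta^\es$ is close enough to... actually, more carefully: on $\cE^\es$ the top observed utility may still be a $Y$-candidate if $\beta^\es X_{\max}^\es < Y_{\max}^\es$, but the Rooney Rule forces $\ell \geq 1$ $X$-candidates in, and the best such is the one with utility $X_{\max}^\es$ since latent and observed orderings agree within $G_X$). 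So $U_X^\es \geq X_{\max}^\es$. Combining, $\beta^\es U_X^\es + U_Y^\es \leq \beta^\es U_X^\es + (k-\ell) X_{\max}^\es \leq \beta^\es U_X^\es + (k-\ell) U_X^\es = (\beta^\es + k - \ell) U_X^\es$. Therefore $\delta^\es \geq \frac{1-\beta^\es}{\beta^\es + (k-\ell)}$. Next I would use $\cF^\es = \{\beta^\es \leq \mathrm{median}(\beta^\es)\}$: since the right-hand side $\frac{1-x}{x+(k-\ell)}$ is decreasing in $x$, conditioning on $\beta^\es \leq \mathrm{median}(a^\es)$ gives $\delta^\es \geq \frac{1 - \mathrm{median}(a^\es)}{\mathrm{median}(a^\es) + (k-\ell)}$.

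Finally I would plug in the value of $\mathrm{median}(a^\es)$ for the beta distribution, or rather use the cleaner route: the paper's Assumption (4) (instantiated for the beta family) should give $\frac{1-\mathrm{median}(a)}{\mathrm{median}(a)+(k-\ell)} > \frac{C_3}{a(k-\ell+1)}$ for an appropriate constant; but since here we want the explicit constant $\frac{a^{\sexp{1}}(b-1)}{(k-\ell+1)(a^{\sexp{1}}+b)}$, I would instead bound $\mathrm{median}(a^\es) \leq \Phi(a^\es) = \frac{a^\es}{a^\es+b}$... wait, that inequality goes the wrong way for a beta distribution with $a,b>1$ (the median of $\Beta(a,b)$ lies between the mean and... ). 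The safe approach: use the standard bound $\mathrm{median}(\Beta(a,b)) \leq \frac{a}{a+b}$? Actually for $a > 1$, $\mathrm{median}(\Beta(a,b)) \le \frac{a}{a+b} $ fails in general; what does hold is $\mathrm{median}(\Beta(a,b)) \le \frac{a-1/3}{a+b-2/3}$ for $a,b>1$, which is $< \frac{a}{a+b}$. So I would invoke $\mathrm{median}(a^\es) < \frac{a^\es}{a^\es+b}$, giving $\delta^\es \geq \frac{1 - \frac{a^\es}{a^\es+b}}{\frac{a^\es}{a^\es+b} + (k-\ell)} = \frac{b}{a^\es + (k-\ell)(a^\es+b)} \geq \frac{b}{(k-\ell+1)(a^\es+b)}$; then $a^{\sexp{s+1}} = a^\es(1+\delta^\es) \geq a^\es + \frac{a^\es b}{(k-\ell+1)(a^\es+b)} \geq a^\es + \frac{a^{\sexp{1}} b}{(k-\ell+1)(a^{\sexp{1}}+b)}$, where the last step uses that $\frac{xb}{x+b}$ is increasing in $x$ and $a^\es \geq a^{\sexp{1}}$. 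This gives a bound with $b$ in place of $b-1$ in the numerator, which is even stronger; presumably the paper uses a slightly lossier median bound to land on $b-1$. The main obstacle is pinning down the median bound for $\Beta(a,b)$ cleanly enough to produce the stated constant—the combinatorics of the Rooney Rule ($U_Y^\es \leq (k-\ell)Y_{\max}^\es$ and $U_X^\es \geq X_{\max}^\es$ on $\cE^\es$) is the conceptual heart but is short; the arithmetic with the median is the fiddly part.
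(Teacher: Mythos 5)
Your combinatorial reduction is exactly the paper's: on $\cE^\es$ you bound $U_Y^\es\le (k-\ell)Y_{\max}^\es< (k-\ell)X_{\max}^\es\le (k-\ell)U_X^\es$ (the top-$X$ candidate is always shortlisted once $\ell\ge 1$, since the observed and latent orderings agree within $G_X$), which gives $\delta^\es\ge \frac{1-\beta^\es}{\beta^\es+(k-\ell)}$; you then use monotonicity in $\beta^\es$ to pass to the median on $\cF^\es$, and finish with $a^\es\ge a^{\sexp{1}}$ and the update rule. All of that matches the paper's argument (its \cref{prop:delta_big_then_beta_small} and the end of the proof of \cref{lem:rooney_rule_increases_a_fast}).

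The gap is the median step, and it is a real one. For $\Beta(a,b)$ with $a>b>1$ the distribution is left-skewed, so its median strictly \emph{exceeds} the mean $\frac{a}{a+b}$; the inequality ${\rm median}(\beta^\es)<\frac{a^\es}{a^\es+b}$ that your final chain invokes is therefore false precisely in the regime $a^\es>b$ --- which is the main regime, since this lemma is the engine by which $a^\es$ grows without bound (and $a^{\sexp{1}}>b$ is already allowed at $s=1$). Your fallback claim that $\frac{a-1/3}{a+b-2/3}<\frac{a}{a+b}$ does not rescue it: cross-multiplying shows the difference of these two fractions has the sign of $a-b$, so for $a>b$ the approximation also lies \emph{above} the mean. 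The fact that you land on $b$ rather than $b-1$ in the numerator is a symptom rather than a bonus: a version of the lemma with $\frac{a^{\sexp{1}}b}{(k-\ell+1)(a^{\sexp{1}}+b)}$ cannot hold in general, since for fixed $b$ and large $a^\es$, with $\beta^\es$ sitting at its median one has $1-{\rm median}(\beta^\es)\approx \frac{c_b}{a^\es}$ with $c_b<b$, so $a^\es\delta^\es$ falls below $\frac{b}{k-\ell+1}$. The paper closes exactly this hole with the case analysis of \cref{fact:bounding_median_pdf} (Payton et al.): when $1<b<a^\es$ it bounds the median by the \emph{mode} $\frac{a^\es-1}{a^\es+b-2}$ --- this is where the $(b-1)$ in the stated constant comes from --- when $1<a^\es<b$ by the mean $\frac{a^\es}{a^\es+b}$, and by $\frac12$ when $a^\es=b$, in each case exhibiting $w\ge \frac{b-1}{(k-\ell+1)(a^\es+b)}$ with $1-\frac{w(k-\ell+1)}{1+w}\ge {\rm median}(\beta^\es)$. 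Your proof is repaired by substituting that three-case bound for your single median inequality; as written, the key inequality fails where it matters.
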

\noindent {\em Proof outline:}
First, we show that $\nfrac{U_{Y}^\es}{U_{X}^\es}\leq k-\ell$ conditioned on $\cE^\es$.
Then, we show an upper bound on the median of the beta distribution, such that, given $\cF^\es\negsp \negsp \coloneqq \negsp \sinparen{\beta^\es \negsp \negsp \leq \negsp {\rm median}(\beta^\es)}$, it holds
\begin{align*}
  \frac{(1-\betat)}{\betat+(k-\ell) }\geq \frac{(b-1)}{(k-\ell+1)(a+b)}.\yesnum\label{eq:bound_on_med:main_body}
\end{align*}
Now, conditioning on $\cE^\es$ and $\cF^\es$ and using the above, we get
\begin{align*}
  \delta^\es=\frac{(1-\betat)}{\betat+ \nfrac{U_{Y}^\es}{U_{X}^\es}} \geq \frac{(1-\betat)}{\betat+(k-\ell) } \stackrel{\eqref{eq:bound_on_med:main_body}}{\geq} \frac{(b-1)}{(k-\ell+1)(\at+b)}.
\end{align*}
The lemma follows by using the update rule and simplifying.\\

Next, we extend \cref{lem:rooney_rule_increases_a_fast:main_body} along with other facts to show a lower bound on $\at$ (alluded to earlier).
Towards this, note that $Pr[\cF^\et]=\frac12$.
We show that $\Pr[\cE^\et]=\rho$ by analyzing an equivalent urn model.
Let $Z^\et \negsp \in \negsp \zo$ be the indicator random variable that $(\cE^\et \land \cF^\et)$ occurs.
Since ${\cE}^\et$ and $\cF^\et$ are independent,
\begin{align*}
  \Pr[Z^\et=1] \coloneqq \Pr[\cE^\et \land \cF^\et]
  =\Pr[\cF^\et]\cdot \Pr[{\cE}^\et]
  = \frac{\rho}{2}.\yesnum\label{eq:prob_Z_proof1:main_body}
\end{align*}
\noindent Note that, the event $\cF^\et$ only depends on the draw of $\betat$\footnote{Specifically, the inverse CDF of the draw of $\betat$.} and that
${\cE}^\et$ only depends on $X^\et$ and $Y^\et$.
Thus, it follows that $Z^\et$ only depends on random variables from the $t$-th iteration,
and that, for all $t_1\neq t_2$, $Z^{\sexp{t_1}}$ and $Z^{\sexp{t_2}}$ are independent random variables.
This allows us to use the Chernoff bound on the sum $\sum_{s=1}^t Z^\es$:
\begin{align*}
  \Pr\insquare{\tmpsum \leq \frac{t\rho}{4}}
  &\stackrel{\eqref{eq:prob_Z_proof1:main_body}}{=}\Pr\insquare{\tmpsum \leq \inparen{1-\frac12} \cdot \expect\insquare{\tmpsum}}\\
  &\hspace{0mm}\leq \exp\inparen{-\frac{1}{2^2\cdot 2}\cdot \frac{t\rho}{2}}.\yesnum\label{eq:chernoff_bound_on_z:main_body}
\end{align*}
\noindent Using \cref{lem:rooney_rule_increases_a_fast:main_body} and that $(Z^\es=1)\negsp \coloneqq\negsp  (\cE^\es\land\cF^\es)$ for each $s\negsp\in\negsp [t]$, we get
\begin{align*}
  a^{t} \geq a^{\sexp{1}}+\inparen{\sum\nolimits_{s=1}^t Z^\es}\cdot \frac{a^{\sexp{1}}(b-1)}{(k-\ell+1)(a^{\sexp{1}}+b)}.
\end{align*}
Substituting this in Equation~\eqref{eq:chernoff_bound_on_z:main_body} we get
\begin{align*}
  &\hspace{-3mm}\Pr\insquare{\at \geq a^{\sexp{1}}+\frac{t\rho}{4}\frac{a^{\sexp{1}}(b-1)}{(k-\ell+1)(a^{\sexp{1}}+b)}}\geq 1-\exp\inparen{-\frac{t\rho}{16}}.\yesnum\label{eq:lowerbound_on_prob:main_body}
\end{align*}
Let $C$ be the constant $C\coloneqq \frac{a^{\sexp{1}}(b-1)}{4(a^{\sexp{1}}+b)}>0,$
and
$\underline{a} \coloneqq a^{\sexp{1}}+C\frac{t\rho}{(k-\ell+1)}.$
(Recall that $a^{\sexp{1}}$ and $b$ are fixed constants strictly larger than 1.)
Now we can prove the result as follows.
\begin{align*}
  \ex{\betat} &= \expect_{\at}\insquare{\Phi(\at)}\\
  &\hspace{-0mm} = \int_{a^{\sexp{1}}}^{\lb}\Phi(a)\ d\Pr[\at=a] + \int_{\lb}^{\infty}\Phi(a)\ d\Pr[\at=a]\\
  &\hspace{-0mm}\geq \Phi(\lb)\cdot \int_{\lb}^{\infty} d\Pr[\at=a]\tag{$\Phi$ is increasing}\\
  &\hspace{-0mm}\hspace{-1mm} \stackrel{\eqref{eq:lowerbound_on_prob:main_body}}{\geq} \Phi(\lb)\cdot \inparen{1-\exp\inparen{-\frac{t\rho}{16}}}\\
  &\hspace{-0mm}\stackrel{}{=} \frac{\lb}{\lb+b}\cdot \inparen{1-\exp\inparen{-\frac{t\rho}{16}}}\\
  &\hspace{-0mm}\stackrel{}{=} \inparen{1-\frac{b}{a^{\sexp{1}} + b + C\cdot \frac{t\rho}{(k-\ell+1)}}}   \cdot \inparen{1-\exp\inparen{-\frac{t\rho}{16}}}.\yesnum\label{eq:bound_on_at_proof}
\end{align*}
Simplifying the above equation and choosing $C_1\coloneqq \frac{C}{(a^{\sexp{1}}+b)}$ completes the proof.
\begin{remark}
  Notice that the only properties of the beta distribution that proof used, are that $\phi(\cdot)$ is an increasing function and that the median of $\betat$ is upper bounded by a suitable value.
  Abstracting these properties would give an analogous proof of Equation~\eqref{eq:gen_lowerbound}.
\end{remark}

\begin{remark}
  It might be possible to tighten the dependence of Equation~\eqref{eq:upperbound_on_bias} on $\ell$ using techniques from \cite{celis2020interventions}.
  In particular, the factor $\frac{t\rho}{k-\ell+1}$ may improve to, roughly, $\ell\cdot\frac{t\rho}{k-\ell+1}$.
\end{remark}

\subsection{Proof sketch of Theorem~\ref{thm:no_rooney_rule}}
Fix an iteration $t\in \N$.
In the proof, we show an upper bound on $\expect[\at]$.
Then, using that $\Phi(\cdot)$ is concave and increasing, we can prove the desired upper bound on $\expect[\betat]=\expect_{\at}[\Phi(\at)]$.

Let $D^\et$ denote the values of all random variables for the first $[t-1]$ iterations:
$$D^\et\coloneqq \inbrace{\beta^\es}_{1\leq s<t}\cup \inbrace{X^\es}_{1\leq s<t}\cup \inbrace{Y^\es}_{1\leq s<t}.$$
Then, we prove the following lemmas.
\begin{lemma}[\bf Upper bound on $\deltat$ when $\betat$ is close to 1]\label{lem:delta_small_when_beta_large:main_body}
  For all iterations $\ s\in \N$, constant $\eps\in \inparen{0,\nfrac{1}{2}}$, and values of $D^\es$, if $\beta^\es\geq 1-\eps$, then $\delta^\es\leq 2\eps$.
\end{lemma}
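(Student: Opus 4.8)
The plan is to bound $\delta^\es$ directly from its closed form \eqref{eq:def_of_delta:main_body}, namely $\delta^\es = \tfrac{(1-\beta^\es)\cdot \uxs}{\beta^\es\,\uxs+\uys}$. First I would dispose of the degenerate case $\uxs=0$: here the numerator vanishes (and, since $\ell$ may be $0$, this case genuinely occurs), so $\delta^\es=0\le 2\eps$ and there is nothing to prove. Hence assume $\uxs>0$. Because all latent utilities are non-negative we have $\uys\ge 0$, so the denominator is at least $\beta^\es\,\uxs$, which is strictly positive since $\beta^\es\ge 1-\eps>\tfrac12>0$; in particular $\delta^\es$ is well defined.

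Combining these observations with the hypothesis $1-\beta^\es\le\eps$, I obtain
\[
  \delta^\es \;=\; \frac{(1-\beta^\es)\,\uxs}{\beta^\es\,\uxs+\uys}\;\le\;\frac{\eps\,\uxs}{\beta^\es\,\uxs}\;=\;\frac{\eps}{\beta^\es}\;\le\;\frac{\eps}{1-\eps}.
\]
Finally, since $\eps<\tfrac12$ we have $1-\eps>\tfrac12$, so $\tfrac{\eps}{1-\eps}<2\eps$, which completes the proof.

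There is essentially no real obstacle here; this lemma is a one-line estimate whose only subtleties are (i) remembering to handle $\uxs=0$ separately so that the formula for $\delta^\es$ and the lower bound on its denominator are meaningful, and (ii) invoking $\eps<\tfrac12$ at the very end to convert the exact bound $\eps/(1-\eps)$ into the cleaner form $2\eps$ used downstream.
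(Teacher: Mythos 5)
Your proof is correct and follows essentially the same route as the paper's: bound the denominator of $\delta^\es$ below by $\beta^\es\,\uxs$ (the paper phrases this as taking the worst case $\uys=0$), then use $1-\beta^\es\le\eps$ and $\eps<\tfrac12$ to get $\delta^\es\le\eps/(1-\eps)\le 2\eps$. Your explicit treatment of the degenerate case $\uxs=0$ is a harmless (and slightly more careful) addition to the paper's one-line estimate.
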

\noindent {\em Proof:}
$\delta^\es$ is an increasing function of $\nfrac{\uys}{\uxs}$.
In the worst case we have $\nfrac{\uys}{\uxs}=0$.
Substituting this in Equation~\eqref{eq:def_of_delta:main_body}:
$$\delta^\es\ \stackrel{\eqref{eq:def_of_delta:main_body}}{=}\ \ \frac{(1-\beta^\es)\cdot\uxs}{\beta^\es\uxs+\uys}\leq\frac{(1-\beta^\es)}{\beta^\es}\leq\frac{\eps}{1-\eps}\ \stackrel{(\eps<\frac12)}{\leq}\ \
2\eps.$$
\begin{lemma}[\bf Upper bound on $\ex{\deltat}$ when $\betat$ is not close to 1] \label{lem:delta_bounded_in_expect:main_body}
  For all iterations $s\in\N$, constant $\eps\in \inparen{0,\nfrac{1}{2}}$, values of $D^\es$, and parameters $a^{\sexp{1}},b>1$, it holds
  $$\Pr[\beta^\es< 1-\eps]\cdot \exgiv{\delta^\es}{(\beta^\es< 1-\eps)\land (\ux\neq 0)}\leq \frac{b}{a^{\sexp{1}}-1}.$$
\end{lemma}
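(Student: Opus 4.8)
The plan is to bound the left-hand side by first dropping the conditioning on $\beta^\es < 1-\eps$ in exchange for an unconditional expectation over $\beta^\es$, and then using the explicit form of $\delta^\es$ from Equation~\eqref{eq:def_of_delta:main_body} together with the crude inequality $\delta^\es \leq \frac{1-\beta^\es}{\beta^\es}$, which holds whenever $\ux \neq 0$ (since then the denominator $\beta^\es\uxs + \uys$ is at least $\beta^\es\uxs$). Concretely, writing $p \coloneqq \Pr[\beta^\es < 1-\eps]$, the product $p \cdot \exgiv{\delta^\es}{(\beta^\es < 1-\eps)\land(\ux\neq 0)}$ is bounded above by $\ex{\indic[\beta^\es < 1-\eps]\cdot \frac{1-\beta^\es}{\beta^\es}}$, which in turn is at most $\ex{\frac{1-\beta^\es}{\beta^\es}} = \ex{\frac{1}{\beta^\es}} - 1$, where all expectations are over $\beta^\es \sim \cD(a^\es) = \Beta(a^\es, b)$ conditioned on $D^\es$ (note $a^\es$ is determined by $D^\es$).

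The key remaining step is to compute (or upper bound) $\expect_{\beta\sim\Beta(a,b)}\!\left[\nfrac{1}{\beta}\right]$. Using the beta density, $\expect_{\beta\sim\Beta(a,b)}[\nfrac1\beta] = \frac{\int_0^1 y^{a-2}(1-y)^{b-1}\,dy}{\int_0^1 y^{a-1}(1-y)^{b-1}\,dy} = \frac{B(a-1,b)}{B(a,b)} = \frac{a+b-1}{a-1}$, which is finite precisely because $a > 1$. Hence $\ex{\nfrac1{\beta^\es}} - 1 = \frac{a+b-1}{a-1} - 1 = \frac{b}{a^\es - 1}$. Since $a^\es \geq a^{\sexp{1}}$ (the sequence $\at$ is nondecreasing, as noted in Section~\ref{sec:proof_overviews} because each update multiplies by $1+\deltat \geq 1$), we have $\frac{b}{a^\es - 1} \leq \frac{b}{a^{\sexp{1}}-1}$, giving the claimed bound.

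I do not expect a serious obstacle here; the only points requiring care are: (i) justifying that conditioning on $\ux \neq 0$ is exactly what lets us replace $\beta^\es\uxs + \uys$ by $\beta^\es\uxs$ in the denominator bound for $\delta^\es$ (when $\ux = 0$ the ratio is $0/0$-type and the conditioning removes it); (ii) being careful that all the expectations are conditional on $D^\es$, so that $a^\es$ is a fixed constant greater than $1$ and the beta-integral computation applies verbatim; and (iii) noting that the indicator trick $\ex{\indic[\beta^\es<1-\eps] g(\beta^\es)} \leq \ex{g(\beta^\es)}$ requires only $g \geq 0$, which holds since $\frac{1-\beta}{\beta}\geq 0$ on $(0,1]$. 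The monotonicity $a^\es \geq a^{\sexp{1}}$ (equivalently, that $\ex{\nfrac1{\beta^\es}}$ is decreasing in $a^\es$ — Assumption (3) in the generalization) is the one external fact invoked, and it is already established in the overview.
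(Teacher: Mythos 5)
Your proposal follows the paper's proof essentially step for step: bound $\delta^\es \le \frac{1-\beta^\es}{\beta^\es}$ on the event $\uxs\neq 0$, remove the conditioning, evaluate the inverse moment $\expect_{\beta\sim\Beta(a,b)}\insquare{\nfrac{1}{\beta}}=\frac{a+b-1}{a-1}$, and finish with $a^\es\ge a^{\sexp{1}}$ (Fact~\ref{fact:a_is_monotonic_in_t}). The only place your write-up is too quick is the very first inequality, $\Pr[\beta^\es<1-\eps]\cdot\exgiv{g(\beta^\es)}{(\beta^\es<1-\eps)\land(\uxs\neq 0)}\le \ex{\indic[\beta^\es<1-\eps]\, g(\beta^\es)}$ with $g(\beta)=\frac{1-\beta}{\beta}$: this does not follow from $g\ge 0$ alone (nonnegativity only yields the bound with $\Pr[(\beta^\es<1-\eps)\land(\uxs\neq 0)]$, a smaller quantity, in place of $\Pr[\beta^\es<1-\eps]$). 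It is equivalent to $\exgiv{g(\beta^\es)}{(\beta^\es<1-\eps)\land(\uxs\neq 0)}\le\exgiv{g(\beta^\es)}{\beta^\es<1-\eps}$, i.e., that additionally conditioning on at least one $X$-candidate being selected does not increase the conditional mean of $(1-\beta^\es)/\beta^\es$. This is true, but needs a short monotonicity argument: for fixed latent utilities, $\indic[\uxs\neq 0]$ is a nondecreasing function of $\beta^\es$ (raising $\beta^\es$ only raises observed $X$-utilities), so the event is positively associated with $\beta^\es$ and hence negatively with the decreasing function $g$, giving the needed direction by a one-variable Chebyshev/FKG correlation inequality. To be fair, the paper is equally cavalier at exactly this spot---it asserts $\exgiv{\nfrac{1}{\beta^\es}-1}{(\beta^\es<1-\eps)\land(\uxs\neq 0)}=\exgiv{\nfrac{1}{\beta^\es}-1}{\beta^\es<1-\eps}$ without justification (only the ``$\le$'' direction is needed or true)---so your plan matches the paper's argument, including its one loose step; everything else (the $0/0$ caveat when $\uxs=0$, conditioning on $D^\es$ so that $a^\es>1$ is fixed, and the monotonicity of $a^\es$) is exactly as in the paper.
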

\noindent Due to space constraints we omit the proof of \cref{lem:delta_bounded_in_expect:main_body}.
It roughly follows by upper bounding $\delta^\es$ by $(\beta^\es)^{-1}$, rearranging, and using the value of the inverse moment of the beta distribution.
\begin{lemma}[\bf Upper bound on $\Pr\insquare{\ux\neq 0}$ when $\betat$ is not close to 1]\label{lem:delta_0_whp:main_body}
  For all iterations $t\in \N$ and values of $D^\es$, if $\ell=0$, then there exists an $n_0\negsp \in \negsp \N$, such that,
  \mbox{for all $n\negsp \geq \negsp n_0$ and $\eps\negsp \leq\negsp  \frac{\ln{n}}{n}$, it holds}
  \begin{align*}
    \probgiv{\ux\neq 0 }{ \betat < 1-\eps }\leq \exp\inparen{-\frac{1}{8}\eps n(1-\rho)}.
  \end{align*}
\end{lemma}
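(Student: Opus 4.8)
\medskip
\noindent\textbf{Proof plan.}
The plan is to rewrite the event $\inbrace{\ux \neq 0}$ as a combinatorial condition on the top-$k$ observed utilities, and then, after conditioning on $\betat < 1-\eps$, to trap this event inside a low-probability event that depends only on the $Y$-candidates of iteration $t$, which a Chernoff bound then controls. First I would use that $\cP = {\rm Unif}(0,1)$ is supported on $[0,1]$: each $X$-candidate has $\xit \in [0,1]$, so on the event $\inbrace{\betat < 1-\eps}$ every $X$-candidate's observed utility satisfies $\betat \xit \le \betat < 1-\eps$. Let $N^\et \coloneqq \abs{\inbrace{j \in G_Y \colon \yjt \ge 1-\eps}}$ denote the number of $Y$-candidates whose (observed, hence also latent) utility is at least $1-\eps$. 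If $N^\et \ge k$, then these $k$ $Y$-candidates each have observed utility strictly larger than that of every $X$-candidate, so (there are no ties, since $\cP$ is continuous) they outrank all $X$-candidates when candidates are ordered by observed utility; hence the selected set $S_0^\et$ lies entirely inside $G_Y$ and $\ux = 0$. Thus $\inbrace{\betat < 1-\eps} \cap \inbrace{N^\et \ge k} \subseteq \inbrace{\ux = 0}$ holds deterministically, and therefore
\[
  \probgiv{\ux \neq 0}{\betat < 1-\eps} \le \probgiv{N^\et < k}{\betat < 1-\eps}.
\]

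Next I would drop the conditioning by independence: given the history $D^\et$, the bias $\betat \sim \cD(\at)$ is drawn independently of $\inbrace{\xit}_{i \in G_X}$ and $\inbrace{\yjt}_{j \in G_Y}$, and $N^\et$ is a function of the $Y$-utilities alone, so $N^\et$ is independent of $\betat$ and the right-hand side above equals $\Pr\insquare{N^\et < k}$. Since each $\yjt$ is uniform on $[0,1]$, we have $\Pr_{Y \sim \cP}\insquare{Y \ge 1-\eps} = \eps$, so $N^\et$ is a sum of $\abs{G_Y} = (1-\rho) n$ independent random variables, each equal to $1$ with probability $\eps$, with mean $\mu \coloneqq (1-\rho) n \eps$. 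Taking $n_0$ large enough (it may depend on $k$ and $\rho$) so that $k \le \mu/2$ --- for instance, when $\eps$ is of order $(\ln n)/n$ it suffices that $(1-\rho)\ln n \ge 2k$ --- the multiplicative Chernoff bound for the lower tail gives $\Pr\insquare{N^\et < k} \le \Pr\insquare{N^\et \le \mu/2} \le \exp(-\mu/8) = \exp(-\tfrac18 \eps n (1-\rho))$, which is exactly the claimed inequality.

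The step that requires care is the inequality $k \le \mu/2 = \tfrac12 (1-\rho) n \eps$: this is what pins down $n_0$ as a function of $k$ and $\rho$, and is essentially the only place the hypothesis that $n$ is large gets used. The two remaining points to check are that the conditioning step is legitimate --- i.e.\ $N^\et \perp \betat$ given $D^\et$, which holds because within an iteration the bias draw and the current candidates' utilities are sampled independently --- and the ``no ties'' claim, which follows from continuity of $\cP$. I would also note that this lemma is exactly where $\ell = 0$ is needed: under the $\ell$-th order Rooney Rule with $\ell \ge 1$ the panel must pick at least $\ell$ $X$-candidates, so $\ux > 0$ always and no bound of this form can hold.
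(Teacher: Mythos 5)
Your proposal is correct and follows essentially the same route as the paper: it bounds the conditional probability by the event that fewer than $k$ $Y$-candidates exceed the threshold $1-\eps$ (the paper uses $(1-\eps)M$ with $T_\cP(\eps)$, which equals $\eps$ for ${\rm Unif}(0,1)$), uses independence of this count from $\betat$ given the history, and applies the multiplicative Chernoff lower-tail bound after choosing $n_0$ so that $k \le \tfrac12 n(1-\rho)\eps$ --- exactly the paper's condition $nT_\cP(\eps)\ge \tfrac{2k}{1-\rho}$. Your side remarks (no ties by continuity, independence of the bias draw, and that $\ell=0$ is essential) match the paper's reasoning as well.
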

\noindent{\em Proof outline:} In the proof, we show that the panel selects an $X$-candidate with probability at most $\exp\inparen{-\frac{1}{8}\eps n(1-\rho)}$.
Notice that this implies the lemma, since if no $X$-candidates are selected then $\uxs = 0$.
Towards this, we use the Chernoff bound to show that, conditioned on $\beta^\es < 1-\eps$, with high probability at least $k$ $Y$-candidates have higher observed utilities than $X_{\max}$.\\

\noindent (Note that \cref{lem:delta_0_whp:main_body} uses $\ell=0$.)
We would like to upper bound $\exgiv{\delta^\es }{ D^\es}$ for all values of $D^\es$.
To this end, consider three cases:
\begin{enumerate}
  \item $\beta^\es$ is close to 1 ($\beta^\es\geq 1-\eps$),
  \item $\beta^\es$ is far from 1 ($\beta^\es< 1-\eps$) and $\ux\neq 0$, and
  \item $\beta^\es$ is far from 1 ($\beta^\es< 1-\eps$) and $\ux=0$.
\end{enumerate}
Using \cref{lem:delta_small_when_beta_large:main_body}, in Case (1) we have $\delta^\es\leq 2\eps$.
From \cref{lem:delta_bounded_in_expect:main_body} and \cref{lem:delta_0_whp:main_body}, we can bound $\exgiv{\delta^\es}{ D^\es}$
by $\factor \cdot \exp\inparen{-\frac{1}{8}\eps n(1-\rho)}$.
Finally, in Case (3) it holds that $\delta^\es=0$ (see Equation~\eqref{eq:def_of_delta:main_body}).
Note that, each of these results holds for all values of $D^\es$.
Combining these and setting $\eps\coloneqq \frac{8\ln{n}}{n(1-\rho)}$ we can show that
\begin{align*}
  \exgiv{\delta^\es}{D^\es}\leq \uniffactor{}.\yesnum\label{eq:upperbound_on_expec_deltat:main_body}
\end{align*}
Consider $t_1 < t_2$. Using Equation~\eqref{eq:upperbound_on_expec_deltat:main_body} we have
\begin{align*}
  \ex{\delta^{t_1}\cdot \delta^{t_2}} &= \int_{0}^{\infty}x\cdot \ex{\delta^{t_2}\mid \delta^{t_1}=x}\cdot \Pr[\delta^{t_1}=x]dx\\
  &\stackrel{\eqref{eq:upperbound_on_expec_deltat:main_body}}{\leq} \int_{0}^{\infty}x\cdot  \inparen{\uniffactor{}} \cdot \Pr[\delta^{t_1}=x]dx\\
  &= \inparen{\uniffactor{}}\cdot \int_{0}^{\infty}x \cdot \Pr[\delta^{t_1}=x]dx\\
  &\leq \inparen{\uniffactor{}}\cdot\ex{\delta^{t_1}}\\
  &\stackrel{\eqref{eq:upperbound_on_expec_deltat:main_body}}{\leq} \inparen{\uniffactor{}}^2.
\end{align*}
Similarly for $t_1<t_2<\dots<t_s$ we can show that
\begin{align*}
  \expect\insquare{\prod\nolimits_{i\in [s]}\inparen{1+\delta^{t_i}}}\leq \inparen{1+\uniffactor{}}^s.\yesnum\label{eq:expect_product_delta}
\end{align*}
Pick an $n_0$, such that, $\uniffactor{}\leq \frac{1}{t}$.
Then, we can show:
\begin{align*}
  \expect\insquare{\at}
  \negsp = \negsp a^{\sexp{1}}\cdot \expect\insquare{\prod\nolimits_{i=1}^{t-1}\inparen{1+\delta^{t_i}}}
  \negsp\leq\negsp a^{\sexp{1}} + 2t\uniffactorb{}.
  \yesnum\label{eq:bound_on_a:main_body}
\end{align*}
Further, we have
\begin{align*}
  \expect\sinsquare{\betat}
  &= \expect_{\at}\sinsquare{\Phi(\at)}\\
  & \stackrel{}{\leq} \ \Phi\inparen{\expect_{\at}\sinsquare{\at}} \tag{$\Phi$ is concave}\\
  & = \frac{\expect_{\at}\sinsquare{\at}}{\expect_{\at}\sinsquare{\at}+b}. \tag{$\Phi$ is increasing}
\end{align*}
Now substituting Equation~\eqref{eq:bound_on_a:main_body} in the above, rearranging, and choosing $C_2\coloneqq \frac{32a^{\sexp{1}}}{a^{\sexp{1}}-1}$, we can complete the proof.

\begin{remark}
  Notice that the only properties of the beta distribution used in the proof, are that $\Phi(\cdot)$ is concave and increasing and an upper bond on the inverse moment of the inverse moment of the beta distribution.
  Abstracting these properties gives an analogous proof for Equation~\eqref{eq:gen_upperbound_main}.
\end{remark}

\section{Empirical observations} \label{sec:empirical_results}

We enlisted participants on Amazon Mechanical Turk for an iterative candidate selection experiment in order to observe the effect of the Rooney Rule on their long-term behavior.
Participants were shown colored tiles representing candidates, along with the tiles' \emph{observed} utilities, and were instructed to select a small fixed number of tiles with the goal of maximizing the \emph{latent} utility of the selection; this was incentivized by tying their bonus payment to the latent utility attained.
They were informed that the observed utilities were noisy estimates of the latent utilities.
We required some participants to follow the Rooney Rule and did not constrain others.
In this section, we give an overview of the experimental design and our analysis of the results.

\begin{subsection}{Experimental design} \label{sec:experimental_design}
  A total of 76 participants located in the U.S. were recruited using Amazon Mechanical Turk to complete a repeated ``selection'' experiment.\footnote{Participants were paid at a rate of about \$15/hour. The experiment received institutional review board (IRB) exemption, Category 2 and 3, from [Anonymized Institution]. To complete a version of the experiment similar to the one presented to participants, see \cite{celis2020website}. The code for the experiment is available at \cite{celis2020code}.}
  In each of $T=25$ iterations, participants were presented with $n=100$ tiles, representing candidates, in order of highest to lowest observed utility.
  Half of the tiles were red (representing $Y$-candidates) and half were blue (representing $X$-candidates); i.e., $\rho=\frac12$.
  The latent utilities, $X_i^{\sexp{t}}, Y_j^{\sexp{t}}$ for each tile was drawn from ${\rm Unif}(0, 100)$, the uniform distribution on the interval [0, 100].
  The observed utilities for the red tiles $\wt{Y}_j^{\sexp{t}}$ were sampled from $\distN(Y_j^{\sexp{t}}, 3)$.\footnote{$\distN(\mu, \sigma)$ is the normal distribution with mean $\mu$ and standard deviation $\sigma$.}
  The observed utilities for the blue tiles were $\wt{X}_i^{\sexp{t}}$ were sampled from $ \distN(\beta \cdot X_i^{\sexp{t}}, 3)$ where $\beta=\frac{2}{3}$ is the bias coefficient.\footnote{We chose $\beta=\frac{2}{3}$ because previous empirical work has found evidence for bias at least as extreme as $\frac{2}{3}$ \cite{wenneras2001nepotism}.}
  All observed and latent utilities were then rounded to the nearest integer.

  The participants were instructed to select $k=10$ tiles to try to maximize the total latent utility $U^{\sexp{t}}$ in each iteration $t\in [T]$, given the observed utility of each tile.
  They were told that the observed utilities were noisy estimates of the latent utilities.
  In each round, after they submitted their choices, participants were shown the latent utility of each tile they selected; hence, they could learn about the latent utilities of each group over time.\footnote{For images of the user interface, the instructions, the demonstration, and feedback, we refer the reader to Section~\ref{sec:demo}.}

  About half of the participants (39 out of 76), chosen at random, were constrained to follow the Rooney Rule with $\ell = 1$, i.e., they were required to select at least one blue tile.
  The other half of the participants (37 out of 76) were unconstrained (i.e., $\ell=0$).
  We refer to the constrained group of participants as \rrgrp{} and the unconstrained group of participants as \unconsgrp{}.
  Our goal was to compare the performance across the two groups  \rrgrp{} and \unconsgrp{} in the long term, i.e., after having been granted some opportunity to learn about the latent utilities of both groups.
  Hence, our analysis focuses on the choices made in the last fifteen iterations.\footnote{This number is to some extent arbitrary, but we found similar results for other divisions.}

\end{subsection}

\begin{subsection}{The Rooney Rule results in less inequality in the long term} \label{subsec:experiment_contrasting_outcomes}

  Recall that participants selected $k=10$ tiles in each iteration and that, according to the distribution of latent utilities, participants should select five blue and five red tiles on average, the observed utilities exhibited bias against the blue tiles, rendering them lower in the ranking and hence less likely to be selected.
  In the last fifteen observations, we observe that the \rrgrp{} selects 4.0 blue tiles on average, while \unconsgrp{} selects 2.5 on average.

  As participants subject to the Rooney Rule are required to select at least one blue tile, one might wonder if this difference between the \rrgrp{} and \unconsgrp{} is solely due to the Rooney Rule requirement.
  Surprisingly, this is not the case, suggesting that additional learning has occurred.
  In particular, we consider the number of blue tiles selected \textit{in addition to} the ones required by the Rooney Rule;
  i.e., we subtract 1 from the number of blue tiles selected for \rrgrp{} participants and 0 from \unconsgrp{}.
  We find a statistically significant ($p<0.001$) difference between the number of blue tiles selected in addition to the requirement between \rrgrp{} and \unconsgrp{} (see Figure \ref{tab:num_selected_minority_over_required}).

  \begin{figure}[ht]
    \centering
    \setlength{\abovecaptionskip}{0pt}
    \caption[The number of blue tiles selected in addition to the required number is greater for \rrgrp{} than \unconsgrp{} with significance $p<0.001$ using Welch's $t$-test.]{The number of blue tiles selected in addition to the required number is greater for \rrgrp{} than \unconsgrp{} with significance $p<0.001$ using Welch's $t$-test.\footnotemark}\vspace{2mm}

    \begin{tabular}{ r c c }
      & {Mean} & {Standard deviation} \\ \hline
      \rrgrp & 3.0 & 2.4 \\
      \unconsgrp & 2.5 & 3.1 \\
    \end{tabular}\vspace{2mm}

    \begin{tabular}{ r c c }
      & {Welch's $t$-test} \\ \hline
      Null hypothesis & population means same \\
      Alternative hypothesis & population means different \\
      $t$-statistic & 8.2 \\
      Degrees of freedom & 1800 \\
      $p$-value & $<$0.001 \\
    \end{tabular}

    \label{tab:num_selected_minority_over_required}
  \end{figure}

  \footnotetext{Welch's $t$-test assumes independence between samples. We also analyzed mixed effects model to account for the multiple samples from each participant and found similar results (see Section~\ref{sec:num_minority_comparison_app}).}

  \noindent This suggests that participants ``learn'' to select more equal numbers of blue and red tiles as a result of implementing the Rooney Rule.
  This aligns with our theoretical result that participants subject to the Rooney Rule reduce their biases at a faster rate than an unconstrained panel.

\end{subsection}

\begin{subsection}{The Rooney Rule does not negatively affect the utility in the long term} \label{subsec:experiment_optimality}

  We observe that that the total utility $U^{\sexp{t}}$ modestly increases over iterations $t\in [T]$ for both \rrgrp{} and \unconsgrp{}.
  This suggests that participants in both \rrgrp{} and \unconsgrp{} are learning to select more optimally over time.

  However, we are particularly interested in understanding how well a hiring panel subject to the Rooney Rule can fulfill its (latent) utility-maximizing objective in the long term.
  To this end, we compare the mean latent utility attained by \rrgrp{} and \unconsgrp{} during the last fifteen iterations.
  We measure the latent utility achieved by \rrgrp{} and \unconsgrp{} as a proportion of an ``optimal'' latent utility that could be achieved if one knew the distributions of latent and observed utilities a priori.

  In a given iteration $t\in [T]$, define the \textit{optimal strategy set} as the tiles with the top $k$ values from $\{ \nfrac{\wt{X}_i^{\sexp{t}}}{\beta} \}_{i \in G_X} \cup \{ \wt{Y}_j^{\sexp{t}} \}_{j\in G_Y}$. We call the latent utility of the optimal strategy set the \textit{optimal strategy utility}.\footnote{Because of the noise added to each observed utility, the optimal strategy may not be the same as the set of tiles with the greatest latent utilities.
  }
  From this point on, we report the latent utility for a given selection as a fraction of the optimal strategy utility.
  We also define the \textit{latent utility derived from blue (resp. red) tiles} for a given selection to mean the latent utility of selected blue (resp. red) tiles as a fraction of the optimal strategy utility.

  \begin{figure}[ht]
    \setlength{\abovecaptionskip}{-2pt}
    \caption{The latent utility of \rrgrp{} is 2\% less than that of \unconsgrp{}, but the latent utility derived from the blue tiles was higher for \rrgrp{} than \unconsgrp{}.}\vspace{2mm}

    \centering

    \begin{tabular}{ r c c c }
      \multicolumn{4}{c}{\begin{tabular}{@{}c@{}}Latent utility: mean \textit{(standard deviation)}
    \end{tabular}
    } \\ \hline
    & overall & \begin{tabular}{@{}c@{}}derived from blue\end{tabular} & \begin{tabular}{@{}c@{}}derived from red\end{tabular} \\ \hline
    \rrgrp & 0.87 \textit{(0.11)} & 0.37 \textit{(0.22)} & 0.50 \textit{(0.24)} \\
    \unconsgrp & 0.89 \textit{(0.12)} & 0.22 \textit{(0.28)} & 0.67 \textit{(0.31)} \\
  \end{tabular}

  \label{tab:latent_utility}
\end{figure}

\noindent
We find that, in the long term, the latent utility obtained by \rrgrp{} is less than that of the \unconsgrp{} by only 2\%.
Moreover, the latent utility derived from the blue tiles was higher for the \rrgrp{} than \unconsgrp{} (see Figure \ref{tab:latent_utility}).
Hence, even though the latent utility attained by \rrgrp{} and \unconsgrp{} were not substantially different, a much greater proportion of \rrgrp{} latent utility came from blue tiles.
(The difference between the means of latent utility derived from blue tiles for the two groups is statistically significant at $p<0.001$. See Section~\ref{sec:latent_utility_details} for details.)
This suggests that implementing the Rooney Rule can increase the contributions from underrepresented candidates to the overall latent utility without substantially decreasing the total latent utility attained.

As another metric for how well a hiring panel subject to the Rooney Rule can fulfill its utility-maximizing objective, we also analyzed \textit{optimal strategy overlap}, the number of selected tiles in the optimal strategy set. We find similar results: the total optimal strategy overlap is not very different for \rrgrp{} versus \unconsgrp{}, but the optimal strategy overlap derived from to blue tiles is much higher for \rrgrp{} (see Section~\ref{sec:optimal_strategy_overlap_details} for details).

\end{subsection}

\subsection{Summary of empirical results}

We observe two dominant trends in participant behavior in the empirical iterative selection experiment:
\begin{itemize}[leftmargin=10pt]
  \item Participants subject to the Rooney Rule select more blue tiles \textit{in addition to} the required number in the long term than participants when there is no rule in effect. We interpret this to mean that they learn to select more equal numbers of blue and red tiles faster than the unconstrained participants. This aligns with our main theoretical result that participants subject to the Rooney Rule will reduce their biases at a faster rate than an unconstrained panel.
  \item
  There was no substantial difference in the mean latent utility of \rrgrp{} versus \unconsgrp{}. However, while the mean total latent utility between \rrgrp{} and \unconsgrp{} was not very different, a much larger proportion of the latent utility came from blue tiles for participants subject to the Rooney Rule. We interpret this to mean that participants subject to the Rooney Rule learn to increase the latent utility derived from blue tiles while nearly matching the total latent utility of selections made by unconstrained participants.
\end{itemize}
\noindent These trends suggest that, over the long term, implementing the Rooney Rule can yield significant benefits for increasing representation of underrepresented candidates without substantially decreasing total latent utility.

\begin{section}{Limitations}\label{sec:limitations}

  \paragraph{Model and theoretical results.} %
  We consider a model with two groups, where the panel is implicitly biased against one, underrepresented group.
  However, in real-world applications, sometimes there may be multiple and intersecting groups (for example, those defined by race and gender) and where the candidates at the intersection of two or more underrepresented groups may face a larger implicit bias~\cite{williams2014double, bertrand2004emily}.

  Further, we consider that the panel evaluates the candidates selected (say, for an interview) in aggregate (it compares $U^{\sexp{t}}$ and $\wt{U}^{\sexp{t}}$) and sees the latent utilities of all selected candidates without noise.
  But, the panel may assign more or less weight to the latent utility of the $X$-candidates and learn only a noisy version of the latent utility of the selected candidates.
  We make some progress toward this by allowing for noise in the panel's implicit bias (see \cref{sec:generalization_for_other_distributions}).

  \paragraph{Empirical results.} %
  The iterative selection experiment in our empirical results is different from real-world candidate selection tasks in several ways:
  For one, the bias $\beta$ is imposed exogenously on the observed utilities that the selection panel sees, rather than coming from their own implicit cognitive processes.
  If selection panels evaluate the merit of candidates based on their own experience or criteria, they might trust these assessments more than the observed utilities presented to them in the experiment.
  Similarly, the stakes of the experiment for participants were lower than in real-world tasks; %
  selection panels in the real world may pay more attention to slight variations in the performance of the candidates selected.

  In future work, it would be interesting to study how participant biases change over time if they assess candidates based on subjective criteria (like a recommendation letter) that is biased against one group, rather than being provided biased observed utilities at the beginning of each iteration.
  It would also be valuable to incorporate the real-world social biases of participants into an iterative candidate selection task---this could shed insight on how actual social biases shape people's decisions.

  \paragraph{Specificity.}
  Our theoretical analysis and empirical observations are only applicable when the panel is not explicitly biased and the selection process is not {tokenistic}.
  An explicitly biased panel would remain biased no matter which candidates are presented, and a tokenistic process may not require the panel to consider underrepresented candidates seriously. %

  Further, other societal and structural biases and discrimination can also influence the panel, as when standardized test scores disadvantage minority and low-income candidates \cite{elsesser2019lawsuit}.
  The Rooney Rule is only one of the many policies  to mitigate bias and discrimination more broadly, and it is important to consider implementing it as a part of a larger toolkit for positive social change. %

\end{section}

\section{Proofs of main theoretical results}\label{sec:proofs}
In this section, we present proofs of our main theoretical results (Theorem~\ref{thm:rooney_rule} and Theorem~\ref{thm:no_rooney_rule}).
We gave an overview of these proofs in Section~\ref{sec:proof_overviews}.
In order to be self-contained this section repeats some lemmas from Section~\ref{sec:proof_overviews}.

\subsection{Additional notation}\label{sec:preliminaries}
We first need some additional notation.
Let $Z_{(i:n)}$ be the $i$-th {\em order statistic} from $n$ i.i.d. draws of a random variable $Z$.
E.g., $X_{(i:|G_X|)}$ is the $i$-th largest utility among the all $X$-candidates.
Let $n_X$ be the number of $X$-candidates and $n_Y$ be the number of $Y$-candidates: $n_X\coloneqq |G_X|$ and $n_Y\coloneqq |G_Y|$.
Let ${\rm median}(Z)$ denote the median of a random variable $Z$.
Define $\ux$ and $\uy$ be the total latent utility of all $X$-candidates selected and all $Y$-candidates selected in iteration $t$ respectively:
$$\ux\coloneqq \sum_{i\in G_X \cap S_{\ell}^\et}\xit{}\ \text{\ and\ }\ \uy\coloneqq \sum_{j\in G_Y\cap S_{\ell}^\et}\yjt{} .$$
It will be convenient to consider $\deltat\coloneqq \frac{U^\et}{\wt{U}^\et}-1$.
Expressing the update rule~\eqref{eq:update_rule} with $\deltat$:
\begin{align*}
  a^{\sexp{t+1}} = \at \cdot (1 + \deltat).\yesnum\label{eq:update_app}
\end{align*}
Substituting value of $U^\et$ and $\wt{U}^\et$ we get:
\begin{align*}
  \deltat = \frac{(1-\betat)\cdot \ux }{\betat\ux+\uy}.\yesnum\label{eq:def_of_delta}
\end{align*}
Observe that $\deltat$ is a decreasing function of $\uy$ and $\betat$, and an increasing function of $\ux$.
Further, since $0\leq \betat\leq 1$, $\ux\geq 0$, and $\uy\geq 0$, it follows that $\deltat\geq 0$.
Then using Equation~\eqref{eq:update_app} we have that $a^{\sexp{t+1}}\geq \at$.
This suffices to prove the following fact.
\begin{fact}\label{fact:a_is_monotonic_in_t}
  For all $t_1\geq t_2$, we have
  $a^{\sexp{t_1}}\geq a^{\sexp{t_2}}$.
\end{fact}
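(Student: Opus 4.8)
The plan is to read off from the update rule that the sequence $(\at)_{t\geq 1}$ is monotonically non-decreasing in $t$, and then obtain the general inequality by a trivial induction. First I would recall Equation~\eqref{eq:update_app}, namely $a^{\sexp{t+1}} = \at\cdot(1+\deltat)$, together with the expression for $\deltat$ from Equation~\eqref{eq:def_of_delta},
$$\deltat = \frac{(1-\betat)\cdot\ux}{\betat\ux+\uy}.$$
Since the implicit bias always satisfies $0\leq\betat\leq 1$, and since $\ux$ and $\uy$ are sums of latent utilities drawn from $\cP$ (which by assumption has non-negative support), we have $1-\betat\geq 0$, $\ux\geq 0$, and $\uy\geq 0$. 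Hence both the numerator and the denominator of $\deltat$ are non-negative, so $\deltat\geq 0$, and therefore $a^{\sexp{t+1}}=\at\cdot(1+\deltat)\geq\at$ for every $t$.

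With this one-step inequality in hand, I would finish by induction on $m\coloneqq t_1-t_2\geq 0$. The base case $m=0$ is immediate because then $a^{\sexp{t_1}}=a^{\sexp{t_2}}$. For the inductive step, suppose $a^{\sexp{t_2+m}}\geq a^{\sexp{t_2}}$; applying the one-step bound at iteration $t_2+m$ gives $a^{\sexp{t_2+m+1}}\geq a^{\sexp{t_2+m}}\geq a^{\sexp{t_2}}$, which completes the induction and hence the proof.

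I do not expect a genuine obstacle here: the statement is essentially a restatement of the non-negativity of $\deltat$ already observed in the text just above the Fact. The only point worth a brief remark is the degenerate case $\ux=0$, in which $\deltat=0$ and $a^{\sexp{t+1}}=\at$, still consistent with the claimed non-strict inequality; and one should note that the denominator $\betat\ux+\uy$ is positive almost surely, since with $k\geq 1$ at least one candidate is selected and the latent utilities are drawn from a continuous distribution, so $\deltat$ is well-defined with probability one.
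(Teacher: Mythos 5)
Your proposal is correct and follows essentially the same route as the paper: the paper likewise observes that $0\leq\betat\leq 1$ and the non-negativity of the selected utilities force $\deltat\geq 0$, hence $a^{\sexp{t+1}}=\at(1+\deltat)\geq\at$, and the fact follows by chaining this one-step inequality. Your added remarks on the degenerate case $\ux=0$ and the almost-sure positivity of the denominator are harmless refinements of the same argument.
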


\subsection{Proof of Theorem~\ref{thm:rooney_rule}}\label{sec:proof:thm:rooney_rule}
We first state a fact used in the proof of \cref{thm:rooney_rule}.
\begin{fact}\label{fact:bounding_median_pdf}
  For all $a,b>1$, then for $X\sim \Beta(a,b)$ it holds that
  \begin{enumerate}[itemsep=0pt]
    \item if $1<b<a$, then $ {\rm median}(X)\leq \frac{a-1}{a+b-2}$,
    \item if $1<a<b$, then $ {\rm median}(X)\leq \frac{a}{a+b}$, and
    \item if $a=b$, then $ {\rm median}(X) = \frac{1}{2}$.
  \end{enumerate}
\end{fact}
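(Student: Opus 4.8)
The plan is to split on the sign of $a-b$. Since $\Beta(a,b)$ is continuous, its median $\nu$ is the unique point with $\Pr_{X\sim\Beta(a,b)}[X\le\nu]=\tfrac12$, and $\Pr[X\le x]+\Pr[X\ge x]=1$ for every $x$. The key observation is that the three claimed bounds are precisely the mode $\tfrac{a-1}{a+b-2}$, the mean $\tfrac{a}{a+b}$, and (when $a=b$) their common value $\tfrac12$ of $\Beta(a,b)$; so the Fact is the mean--median--mode ordering of the beta family, read off in whichever regime the listed quantity is the larger of mode and mean. Case $a=b$ is immediate: then $X$ and $1-X$ are both $\Beta(a,a)$-distributed, so the CDF satisfies $F(x)=1-F(1-x)$, hence $F(\tfrac12)=\tfrac12$, i.e.\ $\nu=\tfrac12$.

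For Case $1<b<a$ I would reflect the density about its mode. Write $f(x)\coloneqq x^{a-1}(1-x)^{b-1}$ for the unnormalised density and $m\coloneqq\tfrac{a-1}{a+b-2}$ for its critical point; here $m>\tfrac12$ (equivalently $a>b$), so $1-m<m$ and reflecting $[m,1]$ about $m$ stays inside $[0,1]$. Thus $\Pr[X\ge m]\propto\int_0^{1-m}f(m+u)\,du$ while $\Pr[X\le m]\propto\int_0^{m}f(m-u)\,du\ge\int_0^{1-m}f(m-u)\,du$, so it suffices to prove $f(m-u)\ge f(m+u)$ for $u\in[0,1-m]$. Taking logarithms and expanding $\ln\tfrac{r+u}{r-u}=2\sum_{j\ge0}\tfrac1{2j+1}(u/r)^{2j+1}$ with $r=m$ and $r=1-m$, and using the mode identities $\tfrac{a-1}{m}=\tfrac{b-1}{1-m}=a+b-2$, the coefficient of $u^{2j+1}$ in $\ln f(m+u)-\ln f(m-u)$ works out to $\tfrac{2(a+b-2)}{2j+1}\bigl(m^{-2j}-(1-m)^{-2j}\bigr)$, which is $\le 0$ since $m>1-m>0$ and $a+b-2>0$ (the endpoint $u=1-m$ is trivial, the right side vanishing there). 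Hence $f(m+u)\le f(m-u)$, so $\Pr[X\ge m]\le\Pr[X\le m]$, giving $\nu\le m=\tfrac{a-1}{a+b-2}$.

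Case $1<a<b$ needs $\nu\le\tfrac{a}{a+b}=\E[X]$, the mean--median inequality for the right-skewed beta distribution, and this is where I expect the real work to lie. The reflection trick fails here: reflecting about $\E[X]<\tfrac12$, the near endpoint of the reflected interval hits the density's zero at $0$ before the far endpoint reaches the zero at $1$, so $f(\E[X]-u)\ge f(\E[X]+u)$ is false near the boundary. Instead I would invoke a classical input --- the mean--median--mode ordering theorem for unimodal densities with a monotone skewness functional, applied to $\Beta(a,b)$ with $a,b>1$, or equivalently a sufficiently sharp control of the median of $\mathrm{Binom}\bigl(a+b-1,\tfrac{a}{a+b}\bigr)$ obtained from the incomplete-beta/binomial identity $\Pr[\Beta(a,b)\le p]=\Pr[\mathrm{Binom}(a+b-1,p)\ge a]$ (for integer parameters, then passed to general reals using monotonicity of both sides in $a$ and $b$). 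To summarise: cases $a=b$ and $a>b$ are short and self-contained, while $a<b$ rests on a standard mean--median inequality that the elementary reflection argument does not reach.
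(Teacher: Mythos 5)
Your case $a=b$ is the paper's argument verbatim (symmetry of $\Beta(a,a)$), and your reflection argument for $1<b<a$ is correct and genuinely more self-contained than the paper's proof: the paper disposes of items (1) and (2) in one line by citing \cite[Theorem 1]{payton1989bounds}, whereas you derive item (1) from scratch. Your computation checks out: with the mode identities $\frac{a-1}{m}=\frac{b-1}{1-m}=a+b-2$, the coefficient of $u^{2j+1}$ in $\ln f(m+u)-\ln f(m-u)$ is $\frac{2(a+b-2)}{2j+1}\left(m^{-2j}-(1-m)^{-2j}\right)\leq 0$ because $m>1-m$, the endpoint $u=1-m$ is handled by $f(1)=0$, and $m>\frac12$ (equivalent to $a>b$) is exactly what lets you bound $\int_0^{1-m}f(m-u)\,du$ by $\int_0^{m}f(m-u)\,du$, giving $\Pr[X\geq m]\leq \Pr[X\leq m]$ and hence ${\rm median}(X)\leq m$. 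For item (2) you end up exactly where the paper starts: the median-versus-mean inequality for $\Beta(a,b)$ with $1<a<b$ is precisely the content of the theorem the paper cites, so leaning on that classical ordering result is no weaker than the paper's own one-line proof; your diagnosis that the reflection trick breaks in this regime is also correct. One caution about the self-contained alternative you float: the identity $\Pr[\Beta(a,b)\leq p]=\Pr[\mathrm{Binom}(a+b-1,p)\geq a]$ with $p=\frac{a}{a+b}$ would require $\Pr[\mathrm{Binom}(n,p)\geq \lceil np\rceil]\geq \frac12$ (here $np=a-\frac{a}{a+b}$), and the standard $\lfloor np\rfloor\leq {\rm median}\leq\lceil np\rceil$ bound only yields $\Pr[\mathrm{Binom}\geq a-1]\geq\frac12$; the sharper statement you need is equivalent to the beta inequality itself, so making that route rigorous essentially redoes the cited theorem. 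Net: your proposal is correct, with item (1) proved by a different, more elementary route than the paper and item (2) resting on the same classical fact the paper cites.
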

\begin{proof}
  The first two results are proved in \cite[Theorem 1]{payton1989bounds}.
  The last result (with $a=b$) follows by symmetry.
\end{proof}
\noindent Next, we define two events which will come up in the proof.
\begin{definition}
  For all iterations $t\in \N$, define the events
  \begin{align*}
    \text{$\cE^\et \coloneqq \inbrace{ X_{(n_X : n_X)}^\et> Y_{(n_Y : n_Y)}^\et} \text{ and }\
    \cF^\et \coloneqq \inparen{\betat\leq {\rm median}(\betat)}.$}
  \end{align*}
\end{definition}
\noindent It holds that $\Pr[\cF^\et]=\frac12$. Next, we calculate $\Pr[\cE^\et]$.
\begin{proposition}\label{prop:prob_of_eventE}
  For all iterations $t\in \N$, if $\cP$ is a continuous distribution, then
  \begin{align*}
    \Pr[\cE^\et]=\frac{n_X}{n_X+n_Y}=\rho.\label{eq:prob_r}\yesnum
  \end{align*}
\end{proposition}
\begin{proof}
  Recall that all utilities $X$ and $Y$, are drawn according to $\cP$.
  Since $\cP$ is a continuous distribution, two utilities clash with 0 probability.
  Thus, an equivalent method to order candidates is: %
  Consider $n_X$ red balls and $n_Y$ blue balls.
  Let each red ball correspond to a particular $X$-candidate and each blue ball correspond to a particular $Y$-candidate.
  Number them from $\inbrace{1,\dots,(n_X+n_Y)}$ uniformly at random.
  Output this order.
  Here, $\cE^\et$ corresponds to the event that a red ball is numbered 1.
  Using symmetry we have $\Pr[\cE^\et]=\frac{n_X}{(n_X+n_Y)}=\rho$.
\end{proof}
\noindent Finally, we state a lemma which shows that $\at$ increases by a constant about conditioned on $\cE^\et$ and $\cF^\et$. Note that this uses $\ell>0$.
\begin{lemma}[\bf Conditional lower bound on $\at$]\label{lem:rooney_rule_increases_a_fast}
  For all iterations $t\in \N$, if $\ell>0$, then $(\cE^\et\ \land\  \cF^\et)$ implies
  \begin{align*}
    \inparen{a^{\sexp{t+1}}>\at+ \frac{a^{\sexp{1}}(b-1)}{(k-\ell+1)(a^{\sexp{1}}+b)} }.\yesnum\label{eq:a_increases_by_const}
  \end{align*}
\end{lemma}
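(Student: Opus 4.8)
\textbf{Proof plan for Lemma~\ref{lem:rooney_rule_increases_a_fast}.} The plan is to bound $\deltat$ from below under the conditioning $\cE^\et \land \cF^\et$, and then feed this into the update rule $a^{\sexp{t+1}} = \at(1+\deltat)$ from Equation~\eqref{eq:update_app}. Recall from Equation~\eqref{eq:def_of_delta} that
\begin{align*}
  \deltat = \frac{(1-\betat)\,\ux}{\betat\ux+\uy} = \frac{1-\betat}{\betat + \uy/\ux},
\end{align*}
so it suffices to control two quantities: the ratio $\uy/\ux$ from above, and $(1-\betat)/(\betat + \uy/\ux)$ from below via the conditioning on $\cF^\et$.

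First I would show that conditioning on $\cE^\et$ forces $\uy/\ux \le k-\ell$. The key observation is that since the panel uses the $\ell$-th order Rooney Rule, $|S^\et_\ell \cap G_X| \ge \ell$, hence $|S^\et_\ell \cap G_Y| \le k-\ell$. Moreover, on $\cE^\et$ the largest $X$-utility $X_{(n_X:n_X)}^\et$ exceeds the largest $Y$-utility $Y_{(n_Y:n_Y)}^\et$; I would argue that the top $X$-candidate is selected (it has the highest observed utility among all candidates --- its observed utility $\betat X_{(n_X:n_X)}^\et$ need not beat all $Y$'s, but one can instead argue that the selected set, being observed-utility-optimal subject to the constraint, includes at least one $X$-candidate whose latent utility is at least every selected $Y$-candidate's latent utility; in any case the cleanest route is: each selected $Y$-candidate has latent utility $\le Y_{(n_Y:n_Y)}^\et < X_{(n_X:n_X)}^\et \le \ux$, since $\ux$ includes the top $X$-candidate). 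Therefore $\uy \le (k-\ell)\cdot \max_{j\in S^\et_\ell\cap G_Y} Y_j^\et \le (k-\ell)\,\ux$, giving $\uy/\ux \le k-\ell$. Since $\deltat$ is decreasing in $\uy/\ux$, this yields $\deltat \ge (1-\betat)/(\betat + (k-\ell))$.

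Next I would handle the conditioning on $\cF^\et = (\betat \le {\rm median}(\betat))$. The function $\beta \mapsto (1-\beta)/(\beta + (k-\ell))$ is decreasing in $\beta$, so on $\cF^\et$ it is at least $(1-{\rm median}(\betat))/({\rm median}(\betat)+(k-\ell))$. Now I would invoke Fact~\ref{fact:bounding_median_pdf} to upper bound ${\rm median}(\betat)$ where $\betat \sim \Beta(\at, b)$: in each of the three cases ($1<b<\at$, $1<\at<b$, $\at=b$) the fact gives ${\rm median}(\betat) \le \at/(\at+b)$ (noting $\frac{\at-1}{\at+b-2} \le \frac{\at}{\at+b}$ when $b>1$, and $\tfrac12 \le \frac{\at}{\at+b}$ when $\at=b$). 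Plugging ${\rm median}(\betat) \le \at/(\at+b)$ into the decreasing expression and simplifying gives
\begin{align*}
  \frac{1-{\rm median}(\betat)}{{\rm median}(\betat)+(k-\ell)} \ge \frac{b/(\at+b)}{\at/(\at+b) + (k-\ell)} = \frac{b}{\at + (k-\ell)(\at+b)} \ge \frac{b-1}{(k-\ell+1)(\at+b)},
\end{align*}
where the last step uses $\at \le \at + b$ crudely (so the denominator $\at + (k-\ell)(\at+b) \le (k-\ell+1)(\at+b)$) and $b \ge b-1$; this is Equation~\eqref{eq:bound_on_med:main_body} with $a=\at$. Combining, $\deltat \ge \frac{b-1}{(k-\ell+1)(\at+b)}$, so $a^{\sexp{t+1}} = \at(1+\deltat) \ge \at + \frac{\at(b-1)}{(k-\ell+1)(\at+b)}$. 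Finally, since $x \mapsto \frac{x}{x+b}$ is increasing and $\at \ge a^{\sexp{1}}$ by Fact~\ref{fact:a_is_monotonic_in_t}, we get $\frac{\at}{\at+b} \ge \frac{a^{\sexp{1}}}{a^{\sexp{1}}+b}$, which gives the claimed bound $a^{\sexp{t+1}} > \at + \frac{a^{\sexp{1}}(b-1)}{(k-\ell+1)(a^{\sexp{1}}+b)}$.

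The main obstacle I anticipate is the first step --- rigorously arguing $\uy \le (k-\ell)\ux$ from the conditioning on $\cE^\et$. The subtlety is that the observed utility of the top $X$-candidate is $\betat$ times its latent utility, so $\cE^\et$ (a statement about \emph{latent} utilities) does not immediately say the top $X$-candidate is selected. One must instead reason carefully about the structure of the observed-utility-maximizing constrained set: because the Rooney Rule forces at least one $X$-candidate in, and swapping in a higher-latent-utility $X$-candidate can only help, the selected $X$-candidate(s) collectively carry latent utility at least $X_{(n_X:n_X)}^\et$, which dominates every selected $Y$-candidate's latent utility on $\cE^\et$. Getting this swapping/exchange argument clean is the crux; everything after it is routine algebra with the beta-median bound.
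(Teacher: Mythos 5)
Your overall route is the same as the paper's: on $\cE^\et$ bound $\uy/\ux \le k-\ell$ (using that the Rooney Rule forces the top $X$-candidate into $S^\et_\ell$, so $\ux \ge X_{(n_X:n_X)}^\et$ while $\uy \le (k-\ell)\,Y_{(n_Y:n_Y)}^\et$), then on $\cF^\et$ lower bound $(1-\betat)/(\betat+(k-\ell))$ via an upper bound on ${\rm median}(\betat)$, and finally pass from $\at$ to $a^{\sexp{1}}$ by \cref{fact:a_is_monotonic_in_t}. The paper packages the first step as \cref{prop:delta_big_then_beta_small} and handles the median via a three-case choice of a threshold $w$; your exchange argument for why the top $X$-candidate is selected is correct, and in fact more explicit than the paper, which simply asserts $\ux\ge X_{(n_X:n_X)}^\et$.

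There is, however, a genuine error in your median step. You claim that in all three cases of \cref{fact:bounding_median_pdf} one has ${\rm median}(\betat)\le \at/(\at+b)$, justified by ``$\frac{\at-1}{\at+b-2}\le\frac{\at}{\at+b}$ when $b>1$.'' That inequality fails exactly in the case where you invoke it: cross-multiplying, $(\at-1)(\at+b)-\at(\at+b-2)=\at-b$, so $\frac{\at-1}{\at+b-2}>\frac{\at}{\at+b}$ whenever $\at>b$ (Case A). Indeed, for $\at>b>1$ the beta distribution satisfies mean $<$ median $<$ mode, so the median genuinely exceeds $\at/(\at+b)$ there, and the Fact only supplies the mode bound $\frac{\at-1}{\at+b-2}$. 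The gap is fixable without touching anything downstream: in Case A use ${\rm median}(\betat)\le\frac{\at-1}{\at+b-2}$, and since $x\mapsto\frac{1-x}{x+(k-\ell)}$ is decreasing,
$$\frac{1-{\rm median}(\betat)}{{\rm median}(\betat)+(k-\ell)}\ \ge\ \frac{b-1}{(\at-1)+(k-\ell)(\at+b-2)}\ \ge\ \frac{b-1}{(k-\ell+1)(\at+b)},$$
because $\at-1\le\at+b$ and $\at+b-2\le\at+b$; Cases B and C go through as you wrote them. This repaired case split is essentially the paper's own proof (its three case-specific choices of $w$), so after the fix your argument coincides with the paper's.
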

\begin{proof}[Proof of \cref{thm:rooney_rule} (Assuming \cref{lem:rooney_rule_increases_a_fast})]
  Let $Z^\et\in \zo$ be the indicator random variable that $\cE^\et$ and $\cF^\et$ occur.
  Since ${\cE}^\et$ and $\cF^\et$ are independent, using \cref{prop:prob_of_eventE} and $\Pr[\cF^\et]=\frac12$: %
  \begin{align*}
    \Pr[Z^\et=1] \coloneqq \Pr[\cF^\et\ \land \ {\cE}^\et]
    =\Pr[\cF^\et]\cdot \Pr[{\cE}^\et]
    = \frac{\rho}{2}.\yesnum\label{eq:prob_Z_proof1} %
  \end{align*}
  \noindent Note that the event $\cF^\et$ only depends on the draw of $\betat$ (specifically, the inverse CDF of this the draw of $\betat$) and independent of $\at$ and all other random variables in the dynamics.
  Further, ${\cE}^\et$ only depends on the draw of latent utilities in the $t$-th iteration, so, it is  independent of all other random variables in the dynamics.
  It follows that $Z^\et$ only depends on random variables from the $t$-th iteration.
  This shows that for all $t_1\neq t_2$, $Z^{\sexp{t_1}}$ and $Z^{\sexp{t_2}}$ are independent random variables.
  Thus, we can use the Chernoff bound~\cite{motwani1995randomized} on the sum $\sum_{s=1}^t Z^\es$.
  \begin{align*}
    \Pr\insquare{\sum_{s=1}^t Z^\es \leq \inparen{1-\frac12}\cdot\expect\insquare{\sum_{s=1}^t Z^\es}}&\stackrel{\eqref{eq:prob_Z_proof1}}{=}\Pr\insquare{\sum_{s=1}^t Z^\es \leq \frac{t\rho}{2\cdot 2}}\\
    &\leq \exp\inparen{-\frac{1}{2^2\cdot 2}\cdot \frac{t\rho}{2}}. \yesnum\label{eq:chernoff_bound_on_z}
  \end{align*}
  \noindent \mbox{Using \cref{lem:rooney_rule_increases_a_fast} and that $Z^\es\negsp \implies\negsp  (\cE^\es\land\cF^\es)$ for each $s\negsp\in\negsp [t]$, we get}
  \begin{align*}
    a^{t} \geq a^{\sexp{1}}+\inparen{\sum\nolimits_{s=1}^t Z^\es}\cdot \frac{a^{\sexp{1}}(b-1)}{(k-\ell+1)(a^{\sexp{1}}+b)}.
  \end{align*}
  Using Equation~\eqref{eq:chernoff_bound_on_z} we get
  \begin{align*}
    \Pr\insquare{\at \geq a^{\sexp{1}}+\frac{t\rho}{4}\frac{a^{\sexp{1}}(b-1)}{(k-\ell+1)(a^{\sexp{1}}+b)}}\geq 1-\exp\inparen{-\frac{t\rho}{16}}.\yesnum\label{eq:lowerbound_on_prob}
  \end{align*}
  We need some additional notation before proceeding.
  Define $\Phi(a)$ as the expectation of $\betat$ when $\at=a$: %
  $$\Phi(a)\coloneqq \exgiv{\betat}{\at=a}= \frac{\at}{\at+b}.$$
  Note that $\Phi$ is an increasing function.
  Let $\underline{a}$ be the lower bound on $\at$ in Equation~\eqref{eq:lowerbound_on_prob}, i.e.,
  $$\underline{a} \coloneqq a^{\sexp{1}}+\frac{t\rho}{(k-\ell+1)}\cdot \frac{a^{\sexp{1}}(b-1)}{4(a^{\sexp{1}}+b)}.$$
  Recall that $a^{\sexp{1}}$ and $b$ are fixed constants strictly larger than 1.
  To simplify notation, define a constant $C$ as $$C\coloneqq \frac{a^{\sexp{1}}(b-1)}{4(a^{\sexp{1}}+b)}>0.$$
  Now, we have
  \begin{align*}
    \ex{\betat} &= \expect_{\at}\insquare{\exgiv{\betat}{ \at=a}}\\
    &= \expect_{\at}\insquare{\Phi(\at)}\\
    & = \int_{\lb}^{\infty}\Phi(a)\ d\Pr[\at=a]\\ %
    &\geq  \Phi(\lb)\cdot \Pr[\at\negsp\geq\negsp \lb] \tag{$\Phi$ is increasing}\\
    &\hspace{-1mm} \stackrel{\eqref{eq:lowerbound_on_prob}}{\geq} \Phi(\lb)\cdot \inparen{1-\exp\inparen{-\frac{t\rho}{16}}}\yesnum\label{eq:used_later_in_generalization}\\
    &\stackrel{}{=} \frac{\lb}{\lb+b}\cdot \inparen{1-\exp\inparen{-\frac{t\rho}{16}}}\\
    &\stackrel{}{=} \frac{a^{\sexp{1}} + C\cdot \frac{t\rho}{(k-\ell+1)}}{a^{\sexp{1}} + b + C\cdot \frac{t\rho}{(k-\ell+1)}}   \cdot \inparen{1-\exp\inparen{-\frac{t\rho}{16}}}.\yesnum\label{eq:bound_on_at_proof_sm}
  \end{align*}
  Simplifying the above equation, and choosing $C_1\coloneqq \frac{C}{(a^{\sexp{1}}+b)}$ completes the proof.
\end{proof}

\subsubsection{Proof of Lemma~\ref{lem:rooney_rule_increases_a_fast}}\label{sec:proof:lem:rooney_rule_increases_a_fast}
The following proposition shows that conditioned on $\cE^\et$ and $\betat$ being ``small,'' $\delta$ is ``large.''
We use this proposition in the proof of \cref{lem:rooney_rule_increases_a_fast}.
\begin{proposition}[\bf Conditional lower bound on $\deltat$]\label{prop:delta_big_then_beta_small}
  For all iterations $t\in \N$ and values $\omt{} >0$, if $\ell>0$,
  then %
  \begin{align*}
    \cE^\et \ \land \ \inparen{\betat\leq 1-\frac{\omt{}\cdot (k-\ell+1)}{1+\omt{} }}\implies \inparen{\deltat \geq \omt{}}.\label{eq:eventb}\yesnum
  \end{align*}
\end{proposition}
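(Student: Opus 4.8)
The plan is to control the ratio $\uy/\ux$ using the Rooney-Rule constraint together with the event $\cE^\et$, and then obtain the lower bound on $\deltat$ by a one-line rearrangement of the hypothesis on $\betat$. Throughout, recall from~\eqref{eq:def_of_delta} that
$$\deltat = \frac{(1-\betat)\,\ux}{\betat\,\ux+\uy};$$
since $\cP$ is continuous with non-negative support, $\ux>0$ almost surely, so dividing numerator and denominator by $\ux$ gives $\deltat = \frac{1-\betat}{\,\betat + \uy/\ux\,}$, which (because $1-\betat\ge 0$) is a decreasing function of $\uy/\ux$. Hence it suffices to show $\uy/\ux \le k-\ell$ on the event $\cE^\et$; the bound $\deltat \ge \frac{1-\betat}{\betat+(k-\ell)}$ follows immediately.

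To bound $\uy/\ux$, I would first use $\ell\ge1$: every set in $\mathcal{R}(\ell)$ contains at least $\ell\ge1$ candidates from $G_X$, hence at most $k-\ell$ from $G_Y$, so $\uy \le (k-\ell)\,Y_{(n_Y:n_Y)}^\et$. Next, since the observed utilities $\betat X_i^\et$ of the $X$-candidates are increasing in their latent utilities $X_i^\et$ (using $\betat>0$, which holds almost surely), the maximizer $S_\ell^\et$ in~\eqref{eq:selection_with_const} selects, among the $X$-candidates it takes, those of largest latent utility; in particular it takes the $X$-candidate attaining $X_{(n_X:n_X)}^\et$, so $\ux \ge X_{(n_X:n_X)}^\et$. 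Conditioning on $\cE^\et = \{X_{(n_X:n_X)}^\et > Y_{(n_Y:n_Y)}^\et\}$ then yields $\uy \le (k-\ell)\,Y_{(n_Y:n_Y)}^\et < (k-\ell)\,X_{(n_X:n_X)}^\et \le (k-\ell)\,\ux$, i.e.\ $\uy/\ux < k-\ell$.

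It then remains to combine this with the hypothesis on $\betat$. Putting a common denominator, $1 - \frac{\omt(k-\ell+1)}{1+\omt} = \frac{1-\omt(k-\ell)}{1+\omt}$, so the assumption $\betat \le 1 - \frac{\omt(k-\ell+1)}{1+\omt}$ is equivalent to $\betat(1+\omt) \le 1-\omt(k-\ell)$, hence to $1-\betat \ge \omt\bigl(\betat + (k-\ell)\bigr)$, hence to $\frac{1-\betat}{\betat+(k-\ell)} \ge \omt$. Chaining this with $\deltat \ge \frac{1-\betat}{\betat+(k-\ell)}$ from the previous step gives $\deltat \ge \omt$, which is the claim.

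The only subtle point is the assertion that the optimal set $S_\ell^\et$ contains the maximum-latent-utility $X$-candidate: this relies on optimality of $S_\ell^\et$ and on $\betat>0$ (so that observed and latent utilities of the $X$-candidates induce the same order), and the exceptional event $\betat=0$ has probability zero because $\cD(\at)=\Beta(\at,b)$ with $\at,b>1$ has zero density at the endpoints of $[0,1]$. Everything else is routine manipulation of monotone functions.
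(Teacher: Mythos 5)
Your proposal is correct and follows essentially the same route as the paper's proof: bound $\uy/\ux$ by $k-\ell$ using the Rooney Rule together with $\cE^\et$, then rewrite the hypothesis on $\betat$ as $\frac{1-\betat}{\betat+(k-\ell)}\geq \omt$ and chain the inequalities. Your extra remark justifying $\ux\geq X_{(n_X:n_X)}^\et$ via optimality of $S_\ell^\et$ and $\betat>0$ (a.s.) makes explicit a step the paper states without comment, but it does not change the argument.
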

\begin{proof}
  The basic idea is use $\cE^\et$ to upper bound $\nfrac{\uy}{\ux}$ by $(k-\ell)$; then one can substitute this bound, along with the bound on $\betat$, in the definition of $\deltat$ to get the the required result.
  Assume that the two events occur.
  Notice that the event in Equation~\eqref{eq:eventb} is equivalent to
  \begin{align*}
    \frac{(1-\betat)}{\betat+(k-\ell) }\geq \omt{}.\yesnum\label{eq:eventbb}
  \end{align*}
  Further, recall that event $\cE^\et$ is \text{\small $\inparen{Y_{(n_Y : n_Y)} ^\et\leq X_{(n_X : n_X)}^\et}$}.
  Combining this with
  $$\ux\geq X_{(n_X : n_X)}^\et\quad \text{and}\quad \uy\leq (k-\ell)\cdot Y_{(n_Y : n_Y)}^\et,$$ we get %
  \begin{align}
    \sfrac{\uy}{\ux}\leq (k-\ell).\label{eq:bounds_on_uxanduy}
  \end{align}
  Combining Equations~\eqref{eq:eventbb} and \eqref{eq:bounds_on_uxanduy}, we get
  \begin{align*}
    \delta^\et &\stackrel{\eqref{eq:def_of_delta}}{=} \frac{(1-\betat)}{\betat +\sfrac{\uy}{\ux}}
    \stackrel{\eqref{eq:bounds_on_uxanduy}}{\geq} \frac{(1-\betat)}{\betat+(k-\ell) }
    \stackrel{\eqref{eq:eventbb}}{\geq} \omt{}.
  \end{align*}
\end{proof}
\noindent Now, we are ready to prove \cref{lem:rooney_rule_increases_a_fast}.
\begin{proof}[Proof of Lemma~\ref{lem:rooney_rule_increases_a_fast}]
  First, we prove an upper bound on ${\rm median}(\betat)$, then the proof follows from \cref{prop:delta_big_then_beta_small}.
  To simplify notation let $$k^\prime\coloneqq (k-\ell+1).$$
  We divide this proof into 3 cases (as in Fact~\ref{fact:bounding_median_pdf}):\\

  \noindent {\bf Case A ($1<b<\at$):}
  Choose $w=\frac{b-1}{k^\prime (\at+b-2)-(b-1)}$.
  By a direct calculation
  \begin{align*}
    1-\frac{wk^\prime }{1+w} = \frac{\at-1}{\at+b-2} \stackrel{{\rm\cref{fact:bounding_median_pdf}}}{\geq}\ {\rm median}(\betat).\yesnum\label{eq:case_a_bound}
    \end{align*}\\

    \noindent {\bf Case B ($1\leq \at<b$):}
    Choose $w=\frac{1}{2k^\prime}$. %
    We have
    \begin{align*}
      1-\frac{wk^\prime}{1+w} = \frac{1+1/k^\prime}{2+1/k^\prime}>\frac{1}{2} \ \stackrel{(b>\at)}{>}\ \frac{a}{a+b}
      \stackrel{{\rm\cref{fact:bounding_median_pdf}}}{\geq}\ {\rm median}(\betat).\yesnum\label{eq:case_b_bound}
    \end{align*}
    \noindent {\bf Case C ($\at=b$):}
    Choose $w=\frac{1}{2k^\prime-1}$. Notice %
    \begin{align*}
      1-\frac{wk^\prime}{1+w} = 1-\frac{k^\prime}{2k^\prime} = \frac{1}{2}
      \stackrel{{\rm\cref{fact:bounding_median_pdf}}}{=}\ {\rm median}(\betat).\yesnum\label{eq:case_d_bound}
    \end{align*}
    \white{.}\\

    \noindent Thus, we can choose a $w\geq \frac{b-1}{k^\prime \cdot (\at+b)},$
    such that, in each case\footnote{To see this just use the fact that $\at,b>1$.}
    \begin{align}
      1-\frac{w(k-\ell+1)}{1+w}\geq {\rm median}(\betat).\label{eq:all_cases_bound}
    \end{align}
    \noindent Using this, we have
    \begin{align*}
      \hspace{-3mm}\cF^\et\iff \inparen{\betat\leq{\rm median}(\betat)} \stackrel{\eqref{eq:all_cases_bound}}{\implies} \inparen{\betat \leq 1-\frac{w(k-\ell+1)}{1+w}}. \yesnum\label{eq:implication_1}
    \end{align*}
    Now, combining Equation~\eqref{eq:implication_1} with \cref{prop:delta_big_then_beta_small}, we have
    \begin{align*}
      \hspace{-4mm}(\cE^\et\ \land \ \cF^\et) %
      & \quad \stackrel{}{\implies} \quad \inparen{\deltat \geq w}\tagnum{Using \eqref{eq:implication_1} and {\rm \cref{prop:delta_big_then_beta_small} }}\customlabel{eq:intermediate_result}{\theequation}\\
      &\quad \iff \inparen{a^{\sexp{t+1}}\geq \at+\at w}\\ %
      &\quad \stackrel{}{\implies} \inparen{a^{\sexp{t+1}}>\at+ \frac{\at}{\at+b}\cdot\frac{b-1}{(k-\ell+1)} }. \tagnum{$w\geq \frac{b-1}{k^\prime \cdot (\at+b)}$}
      \customlabel{eq:chain_statement_1}{\theequation}
    \end{align*}
    From \cref{fact:a_is_monotonic_in_t} we know that $\at\geq a^{\sexp{1}}$.
    Therefore
    \begin{align*}
      \frac{\at}{\at+b}\geq \frac{a^{\sexp{1}} }{a^{\sexp{1}}+b}.  \yesnum\label{eq:chain_statement_2}
    \end{align*}
    Combining Equation~\eqref{eq:chain_statement_1} and Equation~\eqref{eq:chain_statement_2}, we get
    \begin{align*}
      (\cE^\et\land \cF^\et)\implies \inparen{a^{\sexp{t+1}}>\at+ \frac{a^{\sexp{1}}(b-1)}{(k-\ell+1)(a^{\sexp{1}}+b)} }.
    \end{align*}
  \end{proof}

  \vspace{-2mm}
  \subsection{Proof of Theorem~\ref{thm:no_rooney_rule}}\label{sec:proof:thm:no_rooney_rule}
  In this section, we prove a generalization Theorem~\ref{thm:no_rooney_rule} which we presented in \cref{sec:thm:no_rooney_rule_gen}.
  The generalization holds for a large class of distributions $\cP$ including ${\rm Unif}(0,1)$.
  Specializing the proof in this section to the case when $\cP\coloneqq {\rm Unif}(0,1)$ proves Theorem~\ref{thm:no_rooney_rule}.

  To state the assumption on $\cP$ (which was also mentioned in Section~\ref{sec:thm:no_rooney_rule_gen}), we need some additional notation.
  Let $M$ be the supremum of the support of $\cP$: $M\coloneqq \sup(\supp(\cP))$.
  Define $T_\cP(\eps)$ as the probability that an $X$ drawn from $\cP$ is at least $(1-\eps) M$.
  Formally,
  $T_{\text{\footnotesize $\cP$}}(\eps)\coloneqq \Pr_{\text{\footnotesize $X\negsp\sim\negsp\cP$}}[M\geq (1-\eps)M].$
  We make the following assumption of $\cP$.
  \begin{mdframed}[style=MyFrame]
    \noindent {\bf Assumption.} Assume that for each $n\in\N$ there is an $\eps>0$ satisfying
    \begin{align}
      \eps+e^{-n\cdot T_\cP(\eps)} \leq \frac{1}{\poly(n)}.\tagnum{Assumption}\customlabel{eq:assumption_1_app}{\theequation}
    \end{align}
    \vspace{-4mm}

    \noindent For instance, when $\cP={\rm Unif}(0,1)$, we have $T_\cP(\eps)=\eps$ and we can choose $\eps=\frac{\ln n}{n}$.
  \end{mdframed}
  Note that the above assumption holds for several common distributions, including, any continuous distribution with a compact interval support.
  This also includes many common truncated distributions (e.g., truncated powerlaw and truncated normal distributions); see Section~\ref{sec:facts_related_to_assump}.

  Formally, we would prove the following generalization.
  \begin{theorem}[\textbf{Slow learning without the Rooney Rule}]\label{thm:no_rooney_rule_generalized}
    Under the dynamics of implicit bias~\eqref{alg:iterative_selection_and_panel_learning},
    given parameters $a^1,b > 1$,
    for all iterations $t\in \N$, number of candidates selected $k\in \N$, ratio of the underrepresented candidates $\rho\in (0,1)$,
    and the distribution of latent utilities $\cP$ satisfying Assumption~\eqref{eq:assumption_1_app},
    when the panel is not constrained by the Rooney Rule (i.e., $\ell=0$) %
    there exists a $n_0\in \N$ and $C>0$, such that for all $n\geq n_0$
    \begin{align}
      \expect\big[\beta^{\sexp{t}}\big] &\leq \expect\big[\beta^{\sexp{1}}\big] + C\cdot\frac{t}{\poly{(n)}},\label{eq:lowerbound_thm1}
    \end{align}
    where the expectations are over draws of implicit bias $\beta^{\sexp{s}}$ and latent utilities of candidates in all previous iterations $s\in [t-1]$.
    Here, $\poly(\cdot)$ hides the dependence on $(1-\rho)$.
  \end{theorem}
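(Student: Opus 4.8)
\noindent\emph{Proof plan.} The plan is to run the proof sketch of \cref{thm:no_rooney_rule} from \cref{sec:proof_overviews} essentially verbatim, replacing only the one step specific to $\cP={\rm Unif}(0,1)$ -- namely \cref{lem:delta_0_whp:main_body} -- with a version phrased in terms of $T_\cP(\cdot)$, and then invoking Assumption~\eqref{eq:assumption_1_app}. Fix the iteration $t$. Since $\Phi(a)=a/(a+b)$ is increasing and concave, Jensen's inequality gives $\expect[\betat]=\expect_{\at}[\Phi(\at)]\le\Phi(\expect[\at])=\expect[\at]/(\expect[\at]+b)$, so it suffices to prove $\expect[\at]\le a^{\sexp{1}}+Ct/\poly(n)$; subtracting $\expect[\beta^{\sexp{1}}]=a^{\sexp{1}}/(a^{\sexp{1}}+b)$ then yields the claimed bound, since for $x=\expect[\at]-a^{\sexp{1}}\ge 0$ the difference of the two ratios equals $\frac{b}{(a^{\sexp{1}}+b)(a^{\sexp{1}}+b+x)}\,x\le x$.

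The core estimate I would establish is a uniform one-step bound $\exgiv{\deltat}{D^\et}\le\eta(n)$ with $\eta(n)\le C'/\poly(n)$, valid for every value of the history $D^\et$. As in the sketch, split on $\betat$ for a parameter $\eps\in(0,\tfrac12)$: (i) if $\betat\ge 1-\eps$ then $\deltat\le 2\eps$ by \cref{lem:delta_small_when_beta_large:main_body}; (iii) if $\betat<1-\eps$ and $\ux=0$ then $\deltat=0$ by \eqref{eq:def_of_delta}; and (ii) if $\betat<1-\eps$ and $\ux\ne 0$, then bounding $\deltat\le(1-\betat)/\betat$ and using that the inverse moment $\expect[1/\betat]$ of the beta distribution is finite and monotone in $\at$ (as in \cref{lem:delta_bounded_in_expect:main_body}), together with the generalized \cref{lem:delta_0_whp:main_body} below, the contribution of this case is at most $\tfrac{b}{a^{\sexp{1}}-1}\cdot\probgiv{\ux\ne 0}{\betat<1-\eps}$. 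Adding the three cases, $\exgiv{\deltat}{D^\et}\le 2\eps+\tfrac{b}{a^{\sexp{1}}-1}\probgiv{\ux\ne 0}{\betat<1-\eps}$ for every choice of $\eps$ and every $D^\et$.

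The only genuinely new ingredient is the generalization of \cref{lem:delta_0_whp:main_body}: for $\ell=0$ and all sufficiently large $n$, $\probgiv{\ux\ne 0}{\betat<1-\eps}\le\exp(-\tfrac18(1-\rho)n\,T_\cP(\eps))$. I would prove it by a Chernoff bound. Conditioned on $\betat<1-\eps$, every $X$-candidate has observed utility $\betat X_i^\et<(1-\eps)M$ (as $X_i^\et<M$ almost surely), whereas each of the $n_Y=\lceil(1-\rho)n\rceil$ $Y$-candidates independently has observed (hence latent) utility at least $(1-\eps)M$ with probability $T_\cP(\eps)$; the count of the latter is $\mathrm{Bin}(n_Y,T_\cP(\eps))$ with mean $n_Y T_\cP(\eps)\ge(1-\rho)n\,T_\cP(\eps)$, and Assumption~\eqref{eq:assumption_1_app} forces $n\,T_\cP(\eps)\to\infty$, so for $n\ge n_0$ this mean exceeds $2k$; the lower-tail Chernoff bound then gives that with probability at least $1-\exp(-\tfrac18(1-\rho)n\,T_\cP(\eps))$ at least $k$ of the $Y$-candidates clear the threshold $(1-\eps)M$, in which case the top $k$ observed utilities are all $Y$-candidates and $\ux=0$. (For $\cP={\rm Unif}(0,1)$ one has $T_\cP(\eps)=\eps$, so this reduces to \cref{lem:delta_0_whp:main_body}.) Now choose $\eps=\eps(n)$ as in Assumption~\eqref{eq:assumption_1_app}; using $\exp(-\tfrac18(1-\rho)n\,T_\cP(\eps(n)))=(\exp(-n\,T_\cP(\eps(n))))^{(1-\rho)/8}\le 1/\poly(n)$ (only the exponent of the polynomial is scaled, by the constant $(1-\rho)/8$, which $\poly$ is allowed to hide) we get $\exgiv{\deltat}{D^\et}\le 2\eps(n)+\tfrac{b}{a^{\sexp{1}}-1}/\poly(n)\le C'/\poly(n)=:\eta(n)$.

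Since $\exgiv{\deltat}{D^\et}\le\eta(n)$ holds uniformly over histories, the telescoping argument of the sketch (conditioning on the earlier iterations one at a time) gives $\expect[\prod_{i=1}^{t-1}(1+\delta^{\sexp{i}})]\le(1+\eta(n))^{t-1}$, hence $\expect[\at]=a^{\sexp{1}}\expect[\prod_{i=1}^{t-1}(1+\delta^{\sexp{i}})]\le a^{\sexp{1}}(1+\eta(n))^{t-1}$. Picking $n_0$ (depending on $t$) large enough that $t\,\eta(n)\le 1$ and using $1+x\le e^x$ together with $e^x\le 1+2x$ on $[0,1]$ gives $\expect[\at]\le a^{\sexp{1}}+2a^{\sexp{1}}t\,\eta(n)\le a^{\sexp{1}}+C''t/\poly(n)$, which combined with the Jensen step of the first paragraph completes the proof. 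The main obstacle is the generalized \cref{lem:delta_0_whp:main_body} and, inside it, the bookkeeping needed to check that Assumption~\eqref{eq:assumption_1_app} simultaneously controls the ``$\betat$ close to $1$'' contribution (of size $\eps(n)$) and the ``an $X$-candidate is selected despite the bias'' contribution (exponentially small in $(1-\rho)n\,T_\cP(\eps(n))$); everything else is a line-by-line rerun of the ${\rm Unif}(0,1)$ proof.
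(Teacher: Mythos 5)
Your proposal is correct and follows essentially the same route as the paper: the same three-way case split giving a history-uniform bound on $\exgiv{\deltat}{D^\et}$ via the analogues of Lemmas~\ref{lem:delta_small_when_beta_large}, \ref{lem:delta_bounded_in_expect}, and \ref{lem:delta_0_whp} (the last proved, as in the paper, by a lower-tail Chernoff bound on the number of $Y$-candidates exceeding $(1-\eps)M$, stated directly in terms of $T_\cP$), followed by the product/telescoping bound on $\expect[\at]$, the choice of $n_0$ so that $t\,\eta(n)\leq 1$, and the concluding Jensen step using that $\Phi$ is increasing and concave. The only cosmetic differences are that you obtain the product bound by iterated conditioning rather than the paper's monomial expansion, and you phrase the final comparison to $\expect[\beta^{\sexp{1}}]$ slightly differently; both are equivalent.
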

  \noindent We do not state the dependence on $(1-\rho)$ as it changes with the choice of $\cP$.
  But in the proof, we detail how to make the dependence explicit when $\cP\coloneqq {\rm Unif}(0,1)$, and so proving \cref{thm:no_rooney_rule}.

  Given a value of $n\in \N$, let $\eps(n)>0$ be number promised in Assumption~\eqref{eq:assumption_1_app}.
  To simplify notation, we write $\eps$ instead of $\eps(n)$ when the dependence on $n$ is clear from the context.
  Also, for all $t\in \N$, let $D^\et$ denote the values of all random variables for the first $[t-1]$ iterations:
  $$D^\et\coloneqq \inbrace{\beta^\es}_{1\leq s<t}\cup \inbrace{X^\es}_{1\leq s<t}\cup \inbrace{Y^\es}_{1\leq s<t}.$$
  We first present three lemmas which are used in the Proof of Theorem~\ref{thm:no_rooney_rule_generalized}.
  \begin{lemma}[\bf Upper bound on $\deltat$ when $\betat$ is close to 1]\label{lem:delta_small_when_beta_large}
    For all iterations $t\in \N$, values of $D^\et$, and constant $\eps\in (0,\nfrac{1}{2})$, if $\betat\geq 1-\eps$, then $\deltat\leq 2\eps$.
  \end{lemma}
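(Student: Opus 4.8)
The plan is to read the bound off directly from the closed-form expression for $\deltat$ recorded in Equation~\eqref{eq:def_of_delta}, namely $\deltat = \frac{(1-\betat)\ux}{\betat\ux + \uy}$. Observe first that this quantity, together with everything it depends on, is determined by the $t$-th iteration alone, so the conditioning on the history $D^\et$ is irrelevant and the inequality being claimed is really a deterministic (pathwise) one. I would then dispose of the degenerate case $\ux = 0$: in that case the numerator vanishes, $\deltat = 0 \le 2\eps$, and there is nothing to prove. So assume $\ux > 0$.

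The key step is the observation---already noted immediately after Equation~\eqref{eq:def_of_delta}---that $\deltat$ is a decreasing function of $\uy$. Hence, for fixed $\betat$ and $\ux > 0$, it is maximized at $\uy = 0$, which gives $\deltat \le \frac{(1-\betat)\ux}{\betat\ux} = \frac{1-\betat}{\betat}$. Now I would invoke the hypothesis $\betat \ge 1-\eps$: the numerator is then at most $\eps$ and the denominator at least $1-\eps$, so $\frac{1-\betat}{\betat} \le \frac{\eps}{1-\eps}$. Finally, since $\eps < \nfrac{1}{2}$ we have $1-\eps > \nfrac{1}{2}$, whence $\frac{\eps}{1-\eps} < 2\eps$, which closes the chain.

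I do not expect any genuine obstacle here; the only points needing a touch of care are (i) isolating the $\ux = 0$ case before dividing by $\betat\ux$, and (ii) using $\eps < \nfrac{1}{2}$ only at the very last step. The factor-of-two slack (rather than the exact value $\frac{\eps}{1-\eps}$) is deliberate: a clean linear-in-$\eps$ estimate is all that is needed downstream, where this lemma feeds, together with \cref{lem:delta_bounded_in_expect:main_body} and \cref{lem:delta_0_whp:main_body}, into the bound on $\exgiv{\deltat}{D^\et}$ used to prove \cref{thm:no_rooney_rule_generalized}.
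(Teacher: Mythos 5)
Your proposal is correct and follows essentially the same route as the paper: setting $\uy=0$ as the worst case in Equation~\eqref{eq:def_of_delta}, bounding $\deltat\leq\frac{1-\betat}{\betat}\leq\frac{\eps}{1-\eps}\leq 2\eps$ via $\eps<\nfrac{1}{2}$. Your explicit handling of the $\ux=0$ case and the remark that the bound is pathwise (so the conditioning on $D^\et$ is immaterial) are small tidying touches the paper leaves implicit, but the argument is the same.
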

  \begin{proof}
    In Section~\ref{sec:preliminaries} we saw that $\deltat$ is decreasing in $\uy$ and increasing in $\ux$.
    In the worst case we have $\uy=0$.
    Substituting this and $\betat\geq 1-\eps$ in  Equation~\eqref{eq:def_of_delta} we have
    $$\deltat\ \stackrel{\eqref{eq:def_of_delta}}{=}\ \ \frac{(1-\betat)\cdot\ux}{\betat\ux+\uy}\leq\frac{(1-\betat)}{\betat}\leq\frac{\eps}{1-\eps}\ \stackrel{(\eps<\frac12)}{\leq}\ \
    2\eps.$$
  \end{proof}
  \noindent The proofs of Lemmas~\ref{lem:delta_bounded_in_expect} and Lemma~\ref{lem:delta_0_whp} are longer and
  we defer them to Section~\ref{sec:proof:lem:delta_bounded_in_expect} and Section~\ref{sec:proof:lem:delta_0_whp}.
  \begin{lemma}[\bf Upper bound on $\ex{\deltat}$ when $\betat$ is not close to 1]\label{lem:delta_bounded_in_expect}
    For all iterations $t\in\N$, values of $D^\et$, constant $\eps\in (0,\nfrac{1}{2})$, and parameters $a^{\sexp{1}},b>1$, it holds that
    $$\prob{\betat< 1-\eps}\cdot \exgiv{\deltat}{ (\betat< 1-\eps)\land (\ux\neq 0)}\leq \frac{b}{a^{\sexp{1}}-1}.$$
  \end{lemma}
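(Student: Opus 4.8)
The plan is to bound $\deltat$ pointwise (for fixed values of the randomness) by a function of $\betat$ alone, and then take expectations over $\betat$, which is distributed as $\Beta(\at,b)$ conditioned on the past $D^\et$. First I would recall from Equation~\eqref{eq:def_of_delta} that
\[
  \deltat = \frac{(1-\betat)\cdot\ux}{\betat\ux+\uy}.
\]
Conditioning on the event $\ux\neq 0$, we may divide numerator and denominator by $\ux$ and drop the nonnegative term $\uy/\ux\geq 0$ in the denominator, giving the pointwise bound $\deltat\leq (1-\betat)/\betat\leq 1/\betat$ on that event. Hence, for every fixed value of $D^\et$,
\[
  \prob{\betat<1-\eps}\cdot\exgiv{\deltat}{(\betat<1-\eps)\land(\ux\neq0)}
  \;\leq\; \prob{\betat<1-\eps}\cdot\exgiv{\tfrac{1}{\betat}}{(\betat<1-\eps)\land(\ux\neq0)}.
\]

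Next I would remove the conditioning. The delicate point is that $\ux\neq0$ is \emph{not} independent of $\betat$: a larger $\betat$ makes it more likely that $X$-candidates are shortlisted. However, $1/\betat$ is a nonnegative random variable, so for any event $A$ we have $\prob{A}\cdot\exgiv{1/\betat}{A}=\ex{(1/\betat)\indic_A}\leq \ex{1/\betat}$ (dropping the indicator $\indic_A\leq 1$). Applying this with $A=\{\betat<1-\eps\}\cap\{\ux\neq0\}$ — and noting that conditioning on $D^\et$ fixes $\at$ but leaves $\betat\sim\cD(\at)=\Beta(\at,b)$ — we obtain
\[
  \prob{\betat<1-\eps}\cdot\exgiv{\deltat}{(\betat<1-\eps)\land(\ux\neq0)}
  \;\leq\; \expect_{\betat\sim\Beta(\at,b)}\!\insquare{\tfrac{1}{\betat}}.
\]

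Finally I would plug in the inverse moment of the beta distribution: for $a>1$, $\expect_{\beta\sim\Beta(a,b)}[1/\beta]=\frac{a+b-1}{a-1}$ (this is the standard computation $\int_0^1 \beta^{-1}\beta^{a-1}(1-\beta)^{b-1}d\beta \,/\, B(a,b) = B(a-1,b)/B(a,b)$, valid precisely because $a>1$). This quantity equals $1+\frac{b}{a-1}$, which is decreasing in $a$; since $\at\geq a^{\sexp{1}}$ by \cref{fact:a_is_monotonic_in_t}, we get $\expect_{\betat\sim\Beta(\at,b)}[1/\betat]\leq 1+\frac{b}{a^{\sexp{1}}-1}$. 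To land exactly the stated bound $b/(a^{\sexp{1}}-1)$, I would sharpen the pointwise step: use $\deltat\leq (1-\betat)/\betat = 1/\betat - 1$ rather than $\deltat\leq 1/\betat$, so that the expectation is $\expect[1/\betat - 1] = \frac{b}{\at-1}\leq \frac{b}{a^{\sexp{1}}-1}$, as claimed. The main obstacle is the first paragraph's subtlety — correctly handling the dependence between $\betat$ and the selection event $\{\ux\neq0\}$ — which is resolved cleanly by the observation that bounding a nonnegative integrand times an indicator by the integrand alone only loses in our favor; the rest is the textbook beta inverse-moment identity, which crucially uses the hypothesis $a^{\sexp{1}}>1$.
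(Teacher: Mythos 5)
Your overall strategy matches the paper's: bound $\deltat \le \nfrac{1}{\betat}-1$ pointwise on $\{\ux\neq 0\}$, reduce to the inverse moment $\expect_{\beta\sim\Beta(a,b)}\insquare{\nfrac{1}{\beta}}=\frac{a+b-1}{a-1}$, and use $\at\ge a^{\sexp{1}}$ via Fact~\ref{fact:a_is_monotonic_in_t}. But the step you describe as the clean resolution of the dependence between $\betat$ and $\{\ux\neq0\}$ has a genuine gap. The identity you invoke, $\prob{A}\cdot\exgiv{Z}{A}=\ex{Z\indic_A}\le\ex{Z}$ for $Z\ge 0$, is applied with $A=\{\betat<1-\eps\}\cap\{\ux\neq0\}$, so what it actually yields is
\[
\prob{(\betat<1-\eps)\land(\ux\neq0)}\cdot\exgiv{\deltat}{(\betat<1-\eps)\land(\ux\neq0)}\;\le\; \ex{\tfrac{1}{\betat}-1},
\]
whereas the lemma's prefactor is $\prob{\betat<1-\eps}$, the probability of a strictly larger event. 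Since $\prob{\betat<1-\eps}\ge\prob{A}$, you cannot swap one for the other in the direction you need; the discrepancy is the factor $1/\probgiv{\ux\neq0}{\betat<1-\eps}$, which is not bounded (by Lemma~\ref{lem:delta_0_whp} it is exponentially large in $n$ when $\ell=0$). What you have proved is the weaker statement $\ex{\deltat\,\indic_A}\le \frac{b}{a^{\sexp{1}}-1}$, and the difference matters downstream: in the proof of Theorem~\ref{thm:no_rooney_rule} the lemma's bound is multiplied by the exponentially small $\probgiv{\ux\neq0}{\betat<1-\eps}$, and only the stated form with prefactor $\prob{\betat<1-\eps}$ delivers that gain.

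To close the gap you must first remove $\{\ux\neq0\}$ from the conditioning, i.e., argue
$\exgiv{\nfrac{1}{\betat}-1}{(\betat<1-\eps)\land(\ux\neq0)}\le \exgiv{\nfrac{1}{\betat}-1}{\betat<1-\eps}$;
this is where the dependence genuinely has to be confronted (for fixed latent utilities, $\indic_{\ux\neq0}$ is non-decreasing in $\betat$ while $\nfrac{1}{\betat}-1$ is decreasing, so the extra conditioning can only lower the expectation — the paper passes from the joint conditioning to conditioning on $\{\betat<1-\eps\}$ alone at exactly this point). Once that is done, your indicator trick applies with the matching event:
\[
\prob{\betat<1-\eps}\cdot\exgiv{\tfrac{1}{\betat}-1}{\betat<1-\eps}=\ex{\inparen{\tfrac{1}{\betat}-1}\indic_{\betat<1-\eps}}\le\ex{\tfrac{1}{\betat}-1}=\frac{b}{\at-1}\le\frac{b}{a^{\sexp{1}}-1},
\]
which is precisely the computation in the paper's proof.
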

  \noindent Note that the next lemma requires $\ell=0$.
  \begin{lemma}[\bf Lower bound on $\Pr\insquare{\ux=0}$ when $\betat$ is not close to 1]\label{lem:delta_0_whp}
    For all iteration $t\in \N$ and values of $D^\et$, if $\ell=0$, then there exist a constant $n_0\in \N$, such that,
    for all $n\geq n_0$ and corresponding $\eps(n)$ from Assumption~\eqref{eq:assumption_1_app}, it holds that
    \begin{align*}
      \probgiv{\ux=0}{ \betat < 1-\eps(n) }\geq 1-\exp\inparen{-\frac{1}{8}n(1-\rho)\cdot T_\cP(\eps(n)))}.
    \end{align*}
  \end{lemma}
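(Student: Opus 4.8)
The plan is to show that, conditioned on $\betat < 1-\eps(n)$, with overwhelming probability the shortlist $S_0^\et$ contains no $X$-candidate; since $\ell = 0$, this forces $\ux = 0$ directly from the definition of $\ux$. The mechanism is that a sufficiently biased panel ranks \emph{every} $X$-candidate below a fixed threshold, whereas a number of $Y$-candidates that grows with $n$ lies above that threshold.

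First I would fix the threshold $(1-\eps(n))M$. Since $\cP$ has support with supremum $M$, we have $X_i^\et \le M$ almost surely, so on the event $\{\betat < 1-\eps(n)\}$ every $i \in G_X$ has observed utility $\wt{X}_i^\et = \betat X_i^\et \le \betat M < (1-\eps(n))M$. A $Y$-candidate is evaluated without bias, so its observed utility equals $Y_j^\et$. Hence, if at least $k$ of the $Y$-candidates satisfy $Y_j^\et \ge (1-\eps(n))M$, then the $k$ largest observed utilities are all attained by $Y$-candidates (ties have probability zero because $\cP$ is continuous), so $S_0^\et \cap G_X = \emptyset$ and $\ux = 0$.

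Next I would control the number $N$ of $Y$-candidates above the threshold. Conditioning on $D^\et$ determines $a^\et$, and both $D^\et$ and the event $\{\betat < 1-\eps(n)\}$ depend only on iterations $1,\dots,t-1$ together with the draw (more precisely, the inverse-CDF value) defining $\betat$; therefore they are independent of the latent utilities $\{X_i^\et\}_{i\in G_X}$ and $\{Y_j^\et\}_{j\in G_Y}$ sampled in iteration $t$. Consequently, under this conditioning $N \sim \mathrm{Binomial}(n_Y, T_\cP(\eps(n)))$ with $\expect[N] = n_Y\,T_\cP(\eps(n)) = (1-\rho)\,n\,T_\cP(\eps(n))$, using $n_Y = |G_Y| = (1-\rho)n$. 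Assumption~\eqref{eq:assumption_1_app} gives $e^{-n\,T_\cP(\eps(n))} \le 1/\poly(n)$, hence $n\,T_\cP(\eps(n)) \to \infty$, so there is an $n_0$ (depending on $k$ and $\rho$) with $\expect[N] \ge 2k$ for all $n \ge n_0$. A multiplicative Chernoff lower-tail bound then yields
\begin{align*}
  \probgiv{N < k}{\betat < 1-\eps(n)}
  &\le \Pr\!\left[N \le \tfrac12\expect[N]\right]
  \le \exp\!\left(-\tfrac18\expect[N]\right)
  = \exp\!\left(-\tfrac18 (1-\rho)\,n\,T_\cP(\eps(n))\right).
\end{align*}
Since $\{N \ge k\}$ implies $\ux = 0$ by the first step, $\probgiv{\ux = 0}{\betat < 1-\eps(n)} \ge \probgiv{N \ge k}{\betat < 1-\eps(n)} \ge 1 - \exp\!\left(-\tfrac18 (1-\rho)\,n\,T_\cP(\eps(n))\right)$, which is the claimed bound.

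The only delicate point is the conditioning bookkeeping---spelling out that the iteration-$t$ latent utilities are independent of $D^\et$ and of the randomness defining $\betat$, so that $N$ really is $\mathrm{Binomial}(n_Y, T_\cP(\eps(n)))$ after conditioning---and the appeal to Assumption~\eqref{eq:assumption_1_app} to guarantee $\expect[N] \ge 2k$, which is exactly what makes $n_0$ depend on $k$. The remaining content is a single Chernoff estimate, so I do not anticipate a substantive obstacle.
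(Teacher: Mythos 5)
Your proposal is correct and follows essentially the same route as the paper: bound all $X$-candidates' observed utilities below the threshold $(1-\eps(n))M$ on the event $\{\betat<1-\eps(n)\}$, note that at least $k$ $Y$-candidates exceeding the threshold (a Binomial$(n_Y,T_\cP(\eps(n)))$ count, independent of $\betat$ and $D^\et$) forces $S_0^\et\cap G_X=\emptyset$, choose $n_0$ so the mean is at least $2k$ via Assumption~\eqref{eq:assumption_1_app}, and finish with the multiplicative Chernoff lower-tail bound giving the $\exp\inparen{-\frac18 n(1-\rho)T_\cP(\eps(n))}$ factor. If anything, your write-up is slightly more careful than the paper's in phrasing the threshold comparison in terms of observed rather than latent utilities and in spelling out the conditioning independence.
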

  \noindent Note that the bounds in \cref{lem:delta_small_when_beta_large}, \cref{lem:delta_bounded_in_expect}, and \cref{lem:delta_0_whp} hold for all values of $D^\et$.

  Now, we are ready to prove the Theorem~\ref{thm:no_rooney_rule_generalized}.
  \begin{proof}[Proof of Theorem~\ref{thm:no_rooney_rule_generalized}]
    Fix a large enough $n_0$ so that for all $n\geq n_0$, $\eps(n)<\frac{1}{2}$.
    This is possible, since $\eps(n)\leq c_5 \cdot n^{-c_4}$ from Assumption~\eqref{eq:assumption_1_app}.

    We first upper bound $\exgiv{\delta^\es}{D^\es}$ for all values of $D^\es$ and $s\in [t-1]$.
    Then, combining this with Equation~\eqref{eq:update_app} would give us an upperbound on $\at$.
    To simplify notation, we do not explicitly write the conditioning on $D^\es$ in the RHS below.
    \noindent We have
    \begin{align*}
      \exgiv{\delta^\es}{D^\es} &= \exgiv{\delta^\es}{(\beta^\es\geq 1-\eps)}\cdot \Pr[\beta^\es\geq 1-\eps]\\
      &\quad +\left(\begin{array}{l}\text{$\displaystyle\ex{\delta^\es \mid (\beta^\es< 1-\eps)\land (\uxs\neq 0)}$} \\[1mm] \cdot \prob{\uxs\neq 0 \mid \beta^\es< 1-\eps}\cdot \Pr[\beta^\es< 1-\eps]
      \end{array}\right)\\
      &\quad +\left(\begin{array}{l}\ex{\delta^\es\mid (\beta^\es< 1-\eps)\land (\uxs=0)} \\[1mm] \cdot \Pr[\uxs= 0\mid \beta^\es< 1-\eps]\cdot \Pr[\beta^\es< 1-\eps]
      \end{array}\right)\\[2mm]
      &\leq 2\eps\tag{Lemma~\ref{lem:delta_small_when_beta_large}}\\
      &\quad + \inparen{\factor \cdot \prob{\uxs\neq 0 \mid \beta^\es< 1-\eps}}\tag{Lemma~\ref{lem:delta_bounded_in_expect}}\\
      &\quad + 0 \tag{Using $(\uxs=0)\implies (\delta^\es=0)$}\\[2mm]%
      &\hspace{-6mm}\stackrel{{\rm Lemma}~\ref{lem:delta_0_whp}}{\leq}\ 2\eps + \factor \cdot e^{-\frac18\cdot n(1-\rho) T_\cP(\eps)}\\
      &\leq\quad\quad 2\eps + \factor \cdot e^{-\frac18\cdot n(1-\rho) T_\cP(\eps)}.\yesnum\label{eq:upperbound_on_expec_deltat}
    \end{align*}
    \noindent (Note that we can use \cref{lem:delta_small_when_beta_large}, \cref{lem:delta_bounded_in_expect}, and \cref{lem:delta_0_whp} as they hold for all values of $D^\es$.)
    Equation~\eqref{eq:upperbound_on_expec_deltat} implies an upper bound $\ex{\delta^{t_1}\cdot \delta^{t_2}}$ for some $t_1,t_2\in \N$.
    To see this, let $t_1 < t_2$, and note that
    \begin{align*}
      \ex{\delta^{t_1}\cdot \delta^{t_2}} &= \int_{0}^{\infty}x\cdot \exgiv{\delta^{t_2}}{ \delta^{t_1}=x} \cdot \Pr[\delta^{t_1}=x]dx\\
      &\stackrel{\eqref{eq:upperbound_on_expec_deltat}}{\leq} \int_{0}^{\infty}x\cdot  \inparen{2\eps + \factor\cdot e^{-\frac18\cdot n(1-\rho) T_\cP(\eps)}}  \cdot \Pr[\delta^{t_1}=x]dx\\
      &= \inparen{2\eps + \factor\cdot e^{-\frac18\cdot n(1-\rho) T_\cP(\eps)}} \cdot \int_{0}^{\infty}x \cdot \Pr[\delta^{t_1}=x]dx\\
      &\leq \inparen{2\eps + \factor\cdot e^{-\frac18\cdot n(1-\rho) T_\cP(\eps)}}\cdot\ex{\delta^{t_1}}\\
      &\stackrel{\eqref{eq:upperbound_on_expec_deltat}}{\leq} \inparen{2\eps + \factor\cdot e^{-\frac18\cdot n(1-\rho) T_\cP(\eps)}}^2.
    \end{align*}
    \noindent Similarly extending this to $t_1<t_2<\dots<t_s$, we have
    \begin{align*}
      \ex{\delta^{t_1}\cdot \delta^{t_2}\cdots \delta^{t_s}}\leq \inparen{2\eps + \factor\cdot e^{-\frac18\cdot n(1-\rho) T_\cP(\eps)}}^s.\yesnum\label{eq:expect_product_delta_sm}
    \end{align*}
    Next, we bound the expectation of $\at$.
    \begin{align*}
      \ex{\at} &\stackrel{\eqref{eq:update_app}}{=} a^{\sexp{1}}\cdot \ex{\prod_{s=1}^{t-1}\inparen{1+\delta^\es}}\\
      &\stackrel{\eqref{eq:expect_product_delta_sm}}{\leq} a^{\sexp{1}}\cdot\inparen{1+2\eps + \factor\cdot e^{-\frac18\cdot n(1-\rho) T_\cP(\eps)}}^t.\yesnum\label{eq:bound_on_at_in_middle}
    \end{align*}
    Where the last inequality follows by expanding $\prod_{i=0}^\et\inparen{1+\deltat}$, using linearity of expectation and upper bounding each term with Equation~\eqref{eq:expect_product_delta_sm}, and finally re-factorizing the resulting expression.\footnote{Alternatively, we can repeat the above derivation for $\ex{(1+\delta^{t_1})\cdot (1+\delta^{t_2})\cdots (1+\delta^{t_k})}$.}

    From Assumption~\eqref{eq:assumption_1_app}, we have
    $\eps+e^{-nT_\cP(\eps)} \leq \frac{1}{\poly(n)}.$
    This implies that $\max\sinparen{\eps, e^{-nT_\cP(\eps)}}\leq \frac{1}{\poly(n)}$.
    Thus, we can pick a large enough $n_0\in \N$, such that, for all $n\geq n_0$
    $$\inparen{2\eps + \factor\cdot e^{-\frac18\cdot n(1-\rho) T_\cP(\eps)}}\leq \frac1t.$$
    \noindent Substituting this in Equation~\eqref{eq:bound_on_at_in_middle} we have
    \begin{align*}
      \ex{\at}&\leq a^{\sexp{1}}\cdot\inparen{1+ 2\eps + \factor\cdot e^{-\frac18\cdot n(1-\rho) T_\cP(\eps)}}^t\\
      &\leq a^{\sexp{1}} + 2ta^{\sexp{1}}\inparen{2\eps + \factor\cdot e^{-\frac18\cdot n(1-\rho) T_\cP(\eps)}}\tagnum{For all $t>0$ and $x\in (0,\nfrac1t)$, $(1+x)^t\leq 1+2tx$}\customlabel{eq:bound_intermediate}{\theequation}\\
      &\stackrel{}{\leq} a^{\sexp{1}} + \frac{t}{\poly(n)} .\tagnum{Assumption~\eqref{eq:assumption_1_app}}\customlabel{eq:upperbounud_on_at}{\theequation}
    \end{align*}
    The last inequality follows since $\eps$ and $e^{-n T_\cP(\eps)}$ are bonded by $\frac{1}{\poly(n)}$. (Note that the $\poly(n)$ factor in $\eqref{eq:upperbounud_on_at}$ maybe different from that in Assumption~\eqref{eq:assumption_1_app}.)
    As we will show, Equation~\eqref{eq:upperbounud_on_at} suffices to prove the theorem.
    Here, we note that when $\cP$ is ${\rm Unif}(0,1)$ we prove the following version of Equation~\eqref{eq:upperbounud_on_at}.
    \begin{mdframed}[style=FrameBox2]
      \begin{remark}\label{rem:bound_when_unif}
        When $\cP$ is ${\rm Unif}(0,1)$, substituting $\eps\coloneqq \frac{8\ln{n}}{n(1-\rho)}$ in Equation~\eqref{eq:bound_intermediate} gives us
        $$\ex{\at}\leq a^{\sexp{1}} + 2t\uniffactorb{}.$$
      \end{remark}
    \end{mdframed}
    Before proceeding, we need some additional notation.
    Define $\Phi(a)$ as the expectation of $\betat$ when $\at=a$: %
    $$\Phi(a)\coloneqq \exgiv{\betat}{ \at=a}= \frac{\at}{\at+b}.$$
    Note that $\Phi$ is an increasing and concave function.

    Now, using that $\Phi$ is a concave function, we have
    \begin{align*}
      \expect\sinsquare{\betat} %
      &= \expect_{\at}\sinsquare{\Phi(\at)}
      \ {\leq} \ \Phi\inparen{\expect_{\at}\sinsquare{\at}}. \tag{\text{$\Phi$ is concave}}
    \end{align*}
    Further, using that $\Phi$ is an increasing function and Equation~\eqref{eq:upperbounud_on_at} we have
    \begin{align*}
      \ex{\betat}&\leq  \Phi\inparen{\expect_{\at}\sinsquare{\at}}\\
      &\leq \Phi\inparen{a^{\sexp{1}} + \frac{t}{\poly(n)}}\tag{\text{$\Phi$ is increasing and Equation~\eqref{eq:upperbounud_on_at}}}\\
      &\leq \frac{a^{\sexp{1}} + \frac{t}{\poly(n)}}{a^{\sexp{1}} + b + \frac{t}{\poly(n)}}\\
      &\leq \frac{a^{\sexp{1}} + \frac{t}{\poly(n)}}{a^{\sexp{1}} + b}\cdot
      \frac{1}{1+\frac{t}{a^{\sexp{1}}+b}\frac{1}{\poly(n)}}\\
      &\leq \frac{a^{\sexp{1}} + \frac{t}{\poly(n)}}{a^{\sexp{1}} + b}\\
      &\leq \frac{a^{\sexp{1}}}{a^{\sexp{1}}+b} + \frac{t}{a^{\sexp{1}} + b}\frac{1}{\poly(n)}\\
      &\leq \frac{a^{\sexp{1}}}{a^{\sexp{1}}+b} + \frac{t}{\poly(n)}.
    \end{align*}
    This proves \cref{thm:no_rooney_rule_generalized}.

    We claim that \cref{thm:no_rooney_rule} follows by substituting the bound from \cref{rem:bound_when_unif} in the above equation.
    To see this, note that
    \begin{align*}
      \ex{\betat}&\leq \Phi\inparen{\expect_{\at}\sinsquare{\at}}\\
      &\leq \frac{a^{\sexp{1}} + 2t\uniffactorb{}}{a^{\sexp{1}} + b + 2t\uniffactorb{}}\tag{Using \cref{rem:bound_when_unif}}\\
      &\leq \inparen{a^{\sexp{1}} + 2t\uniffactorb{}}\cdot \frac{1}{a^{\sexp{1}} + b} \cdot
      \inparen{{1+\frac{2t}{a^{\sexp{1}}+b}\uniffactorb{}}}^{-1}\\
      &\leq \frac{a^{\sexp{1}} + 2t\uniffactorb{}}{a^{\sexp{1}} + b}\\
      &\leq \frac{a^{\sexp{1}}}{a^{\sexp{1}}+b} + \frac{2t}{a^{\sexp{1}} + b}\uniffactorb{}\\
      &\leq \frac{a^{\sexp{1}}}{a^{\sexp{1}}+b} + \frac{32t a^{\sexp{1}}\ln{n}}{n(1-\rho)(a^{\sexp{1}}-1)}.\yesnum\label{eq:upperbonud_used_later}
    \end{align*}
    Now \cref{thm:no_rooney_rule} follows by choosing $C_2\coloneqq \frac{32a^{\sexp{1}}}{a^{\sexp{1}}-1}$.
  \end{proof}

  \subsubsection{Proof of Lemma~\ref{lem:delta_bounded_in_expect}}\label{sec:proof:lem:delta_bounded_in_expect}
  \begin{proof}
    In Section~\ref{sec:preliminaries} we saw that $\delta^\et$ is decreasing in $\uy$ and increasing in $\ux$.
    In the worst case we have $\uy=0$.
    Assume that $\ux\neq 0$.
    Substituting this in Equation~\eqref{eq:update_app}, we get
    \begin{align*}
      \delta^\et\ &\stackrel{}{=}\ \  \frac{(1-\beta^\et)\cdot \ux}{\beta^\et\ux+\uy} \leq \frac{(1-\beta^\et)}{\beta^\et}.\yesnum\label{eq:delta_bounded_in_expect_bound_on_delta}
    \end{align*}
    Let ${\rm pdf}\colon [0,1] \to [0,1]$ be the probability density function of $\Beta(\at,b)$.
    Fix any value of $D$ (the proof is independent of this value).
    Taking the expectation we have
    \begin{align*}
      \exgiv{\delta^\et}{(\beta^\et< 1-\eps)\land (\ux\neq 0)}
      & \stackrel{\eqref{eq:delta_bounded_in_expect_bound_on_delta}}{\leq} \exgiv{\nfrac{1}{\beta^\et}-1}{ \beta^\et< 1-\eps\land (\ux\neq 0)}\\
      & = \ \ \exgiv{\nfrac{1}{\beta^\et}-1}{ \beta^\et< 1-\eps} \\
      & =\int_{0}^{1} \inparen{\frac{1}{x}-1}\cdot \probgiv{\beta^\et=x}{ \beta^\et<1-\eps}\ dx\\
      & =\int_{0}^{1} \inparen{\frac{1}{x}-1}\cdot \frac{\Pr[\beta^\et=x\ \land \ \beta^\et<1-\eps]}{\Pr[\beta^\et<1-\eps]}\ dx\\
      & =\int_{0}^{1-\eps} \inparen{\frac{1}{x}-1}\cdot \frac{\Pr[\beta^\et=x\ \land\ \beta^\et<1-\eps]}{\Pr[\beta^\et<1-\eps]}\ dx\\
      & =\frac{1}{\Pr[\beta^\et< 1-\eps]}\cdot \int_{0}^{1-\eps} \inparen{\frac{1}{x}-1}\cdot {\rm pdf}(x)\ dx \\
      & \leq \frac{1}{\Pr[\beta^\et< 1-\eps]}\cdot \int_{0}^{1} \inparen{\frac{1}{x}-1}\cdot {\rm pdf}(x)\ dx \\
      & \leq \frac{\ex{\nfrac{1}{\beta^\et}-1 }}{\Pr[\beta^\et< 1-\eps]} \yesnum\label{eq:used_later_in_generalization_2}\\
      &=\frac{1}{\Pr[\beta^\et< 1-\eps]}\cdot \inparen{\frac{a^\et+b-1}{a^\et-1}-1}\tag{$\av_{\beta\sim \Beta(a,b)}\insquare{\frac{1}{\beta}}=\frac{a+b-1}{(a-1)}$~\cite[Eq~25.15]{balakrishnan2016continuous}}\\
      & =\frac{1}{\Pr[\beta^\et< 1-\eps]}\cdot \frac{b}{a^\et-1}\\
      & \hspace{-4mm}\stackrel{{\rm\cref{fact:a_is_monotonic_in_t}}}{\leq}\ \frac{1}{\Pr[\beta^\et< 1-\eps]}\cdot\frac{b}{a^{\sexp{1}}-1}.
    \end{align*}
  \end{proof}
  \subsubsection{Proof of Lemma~\ref{lem:delta_0_whp}}\label{sec:proof:lem:delta_0_whp}
  \begin{proof}[Proof of Lemma~\ref{lem:delta_0_whp}]
    Fix any value of $D^\et$ (the proof is independent of this value).
    Recall that $M$ is the supremum of the support of $\cP$, i.e., $M\coloneqq \sup(\supp(\cP))$ and $T_\cP(\eps)\coloneqq \Pr_{X\sim \cP}\insquare{X \geq (1-\eps)\cdot M}.$
    Further, from Assumption~\eqref{eq:assumption_1_app} we know that for all $n\geq n_0$
    $$e^{-n\cdot T_\cP(\eps)}\leq \frac{1}{\poly(n)}.$$
    Choose $n_0$ large enough such that
    \begin{align*}
      nT_\cP(\eps)\geq \frac{2k}{1-\rho}.\label{eq:expectation_of_zsum_is_large}\yesnum
    \end{align*}

    \noindent Assume that $\betat < (1-\eps)$.
    Then in the $t$-th iteration, all $X$-candidates have a latent utility of at most $(1-\eps)M$.
    For each $Y$-candidate $j\in [n_Y]$, let $Z_i$ be the indicator random that $(\yjt \geq (1-\eps) M)$.
    Since for each $j$, $\yjt$ are drawn independently, $\{Z_j\}_j$ are independent random variables.
    One can verify that
    $\ex{Z_i}=T_\cP(\eps).$
    Further, by linearity of expectation we have
    \begin{align*}
      \ex{\sum\nolimits_{j=1}^{n(1-\rho)}Z_j} = n(1-\rho)\cdot T_\cP(\eps).\yesnum\label{eq:expectation_of_zsum}
    \end{align*}
    Notice that if $Z_j=1$, then the $j$-th $Y$-candidate has a higher utility than all $X$-candidates.
    Thus, if $\sum_{j\in [n_Y]}Z_j \geq k$, then no $X$-candidate would be selected in the $t$-th iteration, implying that $\ux=0$.
    Now, we lower bound the probability that this happens as follows
    \begin{align*}
      \probgiv{\ux=0}{ \betat < 1-\eps } &\geq \Pr\insquare{\sum\nolimits_{j=1}^{n(1-\rho)}Z_j \geq k\mid \betat < 1-\eps }\\
      &= \Pr\insquare{\sum\nolimits_{j=1}^{n(1-\rho)}Z_j \geq k}.
    \end{align*}
    Where the last equality follows since $Z_i$ are independent of $\betat$.
    Using Equation~\eqref{eq:expectation_of_zsum_is_large}, we get
    $\frac12 n(1-\rho)T_\cP(\eps) \geq k.$
    Substituting this, and using Chernoff bound~\cite{motwani1995randomized}, we get
    \begin{align*}
      \Pr\insquare{\ux=0\mid \betat < 1-\eps }& \geq   \Pr\insquare{\sum\nolimits_{j=1}^{n(1-\rho)}Z_j \geq \inparen{1-\frac{1}{2}}\cdot n(1-\rho) T_\cP(\eps)}\\
      & \stackrel{\eqref{eq:expectation_of_zsum}}{=}  \Pr\insquare{\sum\nolimits_{j=1}^{n(1-\rho)}Z_j
      \geq \ex{\sum\nolimits_{j=1}^{n(1-\rho)}Z_j}}\\
      &\geq 1-\exp\inparen{-\frac{1}{2\cdot 2^2}\cdot n(1-\rho) T_\cP(\eps)}. \tag{Using Chernoff bound~\cite{motwani1995randomized}}
      %´
    \end{align*}
  \end{proof}

  \subsection{Fact related to Assumption~\eqref{eq:assumption_1}}\label{sec:facts_related_to_assump}
  \begin{fact}\label{fact:asmp_holds_1}
    If $\cP$ has a continuous probability density function and ${\rm sup}(\supp(\cP))\in \supp(\cP)$ then $\cP$ satisfies
    Assumption~\eqref{eq:assumption_1}. %
  \end{fact}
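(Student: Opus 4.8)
\medskip
\noindent\textbf{Proof proposal.}
Write $f$ for the continuous density of $\cP$ and $M\coloneqq\sup(\supp(\cP))$, so that $T_\cP(\eps)=\int_{(1-\eps)M}^{M}f(x)\,dx$. Unwinding Assumption~\eqref{eq:assumption_1}, the plan is to produce, for all sufficiently large $n$, a value $\eps(n)>0$ such that (i)~$\eps(n)\le 1/\poly(n)$ and (ii)~$n\cdot T_\cP(\eps(n))\ge C\ln n$ for a constant $C$ we may pick as large as we like; (ii) gives $e^{-n\cdot T_\cP(\eps(n))}\le n^{-C}$, so (i) and (ii) together yield Assumption~\eqref{eq:assumption_1}. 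Hence the whole statement reduces to a quantitative lower bound on $T_\cP(\eps)$ as $\eps\to 0$: it is enough to show $T_\cP(\eps)\ge c\,\eps$ for all small $\eps$ and some constant $c>0$.

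To prove this linear lower bound I would use both hypotheses. The assumption $M\in\supp(\cP)$ together with continuity of $f$ gives $f(M)>0$ -- immediate in the paper's convention, where $\supp(\cP)$ is the set on which the density is positive, so that $M=\sup\{x:f(x)>0\}$ is itself a point with $f(M)>0$. By continuity of $f$ at $M$, choose $\delta\in(0,M)$ with $f(x)\ge f(M)/2$ for all $x\in[M-\delta,M]$. Then for every $\eps\in(0,\delta/M)$,
\[
T_\cP(\eps)=\int_{(1-\eps)M}^{M}f(x)\,dx\ \ge\ \frac{f(M)}{2}\bigl(M-(1-\eps)M\bigr)=\frac{f(M)\,M}{2}\,\eps,
\]
so $T_\cP(\eps)\ge c\,\eps$ with $c\coloneqq f(M)\,M/2>0$.

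To finish I would set $\eps(n)\coloneqq \tfrac{2\ln n}{c\,n}$, which for all large $n$ lies in $(0,\delta/M)$. Then $n\cdot T_\cP(\eps(n))\ge c\,n\,\eps(n)=2\ln n$, hence $e^{-n\cdot T_\cP(\eps(n))}\le n^{-2}$, while $\eps(n)=O(\ln n/n)\le n^{-1/2}$ for large $n$; so $\eps(n)+e^{-n\cdot T_\cP(\eps(n))}\le 1/\poly(n)$, which is Assumption~\eqref{eq:assumption_1}. (Raising the constant $2$ to an arbitrary $C$ only shrinks the exponential term; the bottleneck is the $\Theta(\ln n/n)$ size of $\eps(n)$ itself, exactly as for $\cP={\rm Unif}(0,1)$, where one takes $\eps(n)=\ln n/n$.)

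The one genuinely delicate step is ``$f(M)>0$''. With the purely topological notion of support a continuous density may vanish at $M$ -- e.g.\ $f(x)\propto M-x$, giving $T_\cP(\eps)\asymp\eps^{2}$, which is still harmless since a polynomial lower bound $T_\cP(\eps)\ge c\,\eps^{p}$ is all that is needed (take $\eps(n)=\Theta((\ln n/n)^{1/p})$). In fully pathological cases, though, $f$ can vanish faster than every polynomial at $M$, and then $T_\cP(\eps)$ decays super-polynomially and Assumption~\eqref{eq:assumption_1} actually fails; so the clean argument rests on the paper's convention $\supp(\cP)=\{x:f(x)>0\}$ (or, under the topological reading, on the mild extra hypothesis that $f$ vanishes at most polynomially at $M$). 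Granting that, everything else is the elementary estimate above, with $\eps(n)$ taken of order $(\ln n/n)^{1/p}$ when $f(M)=0$.
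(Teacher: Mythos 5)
Your proposal is correct and follows essentially the same route as the paper: both arguments use continuity of the density and its positivity at $M$ to get a linear lower bound $T_\cP(\eps)\geq \frac{f(M)M}{2}\eps$ on a neighborhood of $M$, and then take $\eps(n)=\Theta(\ln n/n)$ so that both terms in Assumption~\eqref{eq:assumption_1} are $1/\poly(n)$. Your closing caveat about needing $f(M)>0$ is apt but does not mark a divergence: the paper's proof simply asserts ${\rm pdf}_\cP(M)>0$, i.e., it adopts exactly the density-based reading of $M\in\supp(\cP)$ that you identify.
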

  \begin{proof}
    Let $M\coloneqq {\rm sup}(\supp(\cP))$.
    Let ${\rm pdf}_{\cP}$ be the probability density function of $\cP$ and $c={\rm pdf}_{\cP}(M)>0$.
    Since the pdf of $\cP$ is continuous, there exists a $\eta>0$ such that $x\in [(1-\eta)M,M]$, ${\rm pdf}_{\cP}[x]>\frac{c}{2}$ (say).
    Then for any $\eps\leq \eta$, $T_{\cP}(\eps)\geq \frac{cM\eps}{2}$.
    Now, setting $\eps(n) \coloneqq \frac{\eta \ln{(n)}}{n}$, we have
    \begin{align*}
      \eps(n)+e^{-n\cdot T_\cP(\eps(n))} \leq \frac{2\eta \ln{(n)}}{n} + e^{-\frac{c\eta M}{2}\ln{(n)}} = O\inparen{n^{-\frac{c\eta M}{2}}+\frac{\ln{n}}{n}}.
    \end{align*}
    Note that fixing $\cP$ (and so, $M,\eta,$ and $c$), $O\inparen{n^{-\frac{c\eta M}{2}}+\frac{\ln{n}}{n}}$ is $\frac{1}{\poly(n)}$.
  \end{proof}
  \cref{fact:asmp_holds_1} implies that Assumption~\eqref{eq:assumption_1} holds for several common distributions, including, any continuous distributions with a compact interval support, and for truncated normal and exponential distributions.

  \section{Proofs of generalizations of theoretical results}\label{sec:formal_statements_of_remarks}
  In this section, we present the formal statements and proofs of the generalizations of our theoretical results (which were discussed in Section~\ref{sec:technical_remarks}).
  Section~\ref{sec:proof:generalization_for_other_distributions} presents the generalization to other distributions of implicit bias.
  Section~\ref{sec:proof:generalization_for_other_update_rules} presents the generalization to other updates rules.
  Finally, Section~\ref{sec:proof:convergestoone} proves a result mentioned in Section~\ref{sec:theoretical_results}.
  In particular, that without the Rooney Rule, for a fixed $n$ as $t\to\infty$, the panel becomes unbiased.

  The proofs in Sections \ref{sec:proof:generalization_for_other_distributions} and \ref{sec:proof:generalization_for_other_update_rules} are similar to the proofs in Section~\ref{sec:proofs}.
  Instead of repeating the entire proofs, we highlight how the proofs in Sections \ref{sec:proof:generalization_for_other_distributions} and \ref{sec:proof:generalization_for_other_update_rules}  differ from those in Section~\ref{sec:proofs}.

  \subsection{Generalization to other distributions of implicit bias}\label{sec:proof:generalization_for_other_distributions}
  In this section, we present the formal statements of our results for other distributions of implicit bias.
  Recall the following setting from \cref{sec:generalization_for_other_distributions}.

  \paragraph{Notation.} Let $\cD(a)$ be any continuous distribution supported on $[0,1]$ and parameterized by $a>1$. %
  Formally, we consider the following dynamics of implicit bias:
  \setlength{\algomargin}{0.5em}
  \begin{algorithm}[h!]
    \SetAlgoNoEnd
    \caption{A dynamics for implicit bias}
    \label{alg:iterative_selection_and_panel_learning_2}
    \kwInit{The Rooney Rule parameter $\ell\in \Z_{\geq 0}$, a parameter $a^{\sexp{1}}>1$, and a constant $b>1$.}\vspace{2mm} %
    \For{$t=1,2,\dots$}{
    \vspace{1mm}
    {\bf Sample} $\sinbrace{Y_j^{\sexp{t}}}_{j\in G_Y}$ and $\sinbrace{X_i^{\sexp{t}}}_{i\in G_X}$ i.i.d. from $\cP$.\vspace{1mm}

    {\bf Sample} $\betat$ from $\D(a^{\sexp{t}})$.\hspace{18.75em} \commentalg{Implicit bias} \vspace{1mm}

    {\bf Define} $\wt{X}_i^{\sexp{t}}  \coloneqq \betat\cdot X_i^{\sexp{t}}$ for all $i\in G_X$.\hspace{13em}\commentalg{Observed utilities}\vspace{1mm}

    {\bf Select} $S^{\sexp{t}}_\ell\coloneqq \argmax_{S\in \mathcal{R}(\ell)}\  \util{}(S, \wt{X}^{\sexp{t}}, {Y}^{\sexp{t}})$.\hspace{9.4em}\commentalg{Shortlist}\vspace{1mm}\\
    {\bf Let} $U^{\sexp{t}} \coloneqq \util{}(S^{\sexp{t}}_\ell,X,Y)$ and $\wt{U}^{\sexp{t}}\coloneqq \util{}(S^{\sexp{t}}_\ell,\wt{X},{Y})$.\vspace{1mm}

    {\bf Update}  $a^{\sexp{t+1}} \coloneqq \inparen{\nfrac{U^{\sexp{t}}}{\wt{U}^{\sexp{t}}}} \cdot  a^{\sexp{t}}$.\hspace{15.9em}\commentalg{Update implicit bias}\vspace{1mm}
    }
  \end{algorithm}
  Note that the only difference from dynamics~\ref{alg:iterative_selection_and_panel_learning} is that $\cD(\at)$ is not a beta distribution.
  Define $\Phi(x)$ as the expected value of $\beta$ drawn from $\cD(x)$:
  $$\Phi(x)\coloneqq \expect_{\beta\sim \cD(x)}\sinsquare{\beta},$$
  and ${\rm median}(a)$ as the median of $\beta\sim\cD(a)$
  For example, we can consider $\cD(a)$ to be the truncated normal distribution with mean $a$ and fixed variance, in which case, ${\rm median}(a)=a$.\\[-1mm]

  Assume that
  \begin{enumerate}[leftmargin=*]
    \item $\Phi(\cdot)$ is an increasing function,
    \item $\Phi(\cdot)$ is concave,
    \item $\expect_{\beta\sim\cD(a)}\insquare{\frac{1}{\beta}}$ is decreasing in $a$ and finite for every $a>1$, and
    \item there is a constant $C_3>0$, such that, for all $a\negsp >\negsp 1,\ k\negsp \in \negsp [n], \ell\negsp \in\negsp  [k]$
    \begin{align*}
      \frac{1-{\rm median}(a)}{{\rm median}(a)+(k-\ell)}>\frac{C_3}{a \cdot (k-\ell+1)}.%
    \end{align*}
  \end{enumerate}
  Then, we prove the following versions of \cref{thm:rooney_rule} and \cref{thm:no_rooney_rule}:

  \begin{lemma}[\textbf{Fast learning with the Rooney Rule for other distributions of implicit bias}]
    Under the dynamics of implicit bias~\eqref{alg:iterative_selection_and_panel_learning_2},
    for all $a^{\sexp{1}}> 1$ there is a constant $C_4>0$, such that
    for all $t\in \N,\ n\in \N,\ k\in [n],\ \ell\in [k]$, and continuous and bounded distribution $\cP$,
    when the panel applies the Rooney Rule, the following holds
    \begin{align}
      \expect\sinsquare{\betat}\geq \Phi\inparen{a^{\sexp{1}}+\frac{C_3}{C_4}\frac{t\rho}{(k-\ell+1)}}\cdot \inparen{1-e^{\frac{t\rho}{16}}},\label{eq:gen_lowerbound_sm}
    \end{align}
    where the expectation is over the draws of implicit bias $\beta^{\sexp{s}}$ and latent utilities of candidates in all previous iterations $s\in [t-1]$.
  \end{lemma}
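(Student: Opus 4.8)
The plan is to follow the proof of \cref{thm:rooney_rule} in \cref{sec:proof:thm:rooney_rule} essentially line for line: the only place where anything specific to the beta family enters that argument is the median bound (\cref{fact:bounding_median_pdf}), and this will be replaced by Assumption~(4). Throughout, keep the two events
\begin{align*}
  \cE^\es\coloneqq \inbrace{X_{(n_X:n_X)}^\es> Y_{(n_Y:n_Y)}^\es} \qquad\text{and}\qquad \cF^\es\coloneqq \inparen{\betat\leq {\rm median}(a^\es)},
\end{align*}
where now $\Beta$ is replaced by $\cD$, so ${\rm median}(a^\es)$ denotes the median of $\cD(a^\es)$ and $\beta^\es\sim\cD(a^\es)$.

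The first step is the generalized one-step lemma, i.e.\ the analogue of \cref{lem:rooney_rule_increases_a_fast}: for $\ell\geq 1$, $(\cE^\es\land\cF^\es)$ implies $a^{\sexp{s+1}}> a^\es + \frac{C_3}{k-\ell+1}$. Conditioning on $\cE^\es$ and arguing exactly as in \cref{prop:delta_big_then_beta_small}---that argument only uses order statistics of the latent utilities drawn from $\cP$, hence is unaffected by the change of $\cD$---gives $\nfrac{\uys}{\uxs}\leq k-\ell$, and therefore
\begin{align*}
  \delta^\es=\frac{1-\beta^\es}{\beta^\es+\nfrac{\uys}{\uxs}}\ \geq\ \frac{1-\beta^\es}{\beta^\es+(k-\ell)}.
\end{align*}
Since $x\mapsto \frac{1-x}{x+(k-\ell)}$ is decreasing on $[0,1]$, conditioning additionally on $\cF^\es$ yields $\delta^\es\geq \frac{1-{\rm median}(a^\es)}{{\rm median}(a^\es)+(k-\ell)}$, and Assumption~(4) lower bounds the right-hand side by $\frac{C_3}{a^\es(k-\ell+1)}$. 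Multiplying through by $a^\es$ using the update rule $a^{\sexp{s+1}}=a^\es(1+\delta^\es)$ gives the $a^\es$-free increment $a^{\sexp{s+1}}> a^\es+\frac{C_3}{k-\ell+1}$. It is exactly the $1/a^\es$ factor carried by $\delta^\es$ that makes this increment---and hence the final bound---independent of $n$.

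The remainder of the argument is identical to \cref{sec:proof:thm:rooney_rule} and uses nothing about the beta distribution. Since $\cD(a^\es)$ is continuous, $\Pr[\cF^\es]=\nfrac12$; since $\cE^\es$ depends only on the latent utilities drawn from $\cP$, the urn computation of \cref{prop:prob_of_eventE} still gives $\Pr[\cE^\es]=\rho$; and $\cF^\es$ depends only on (the inverse CDF of) the draw of $\beta^\es$, so $\cE^\es$ and $\cF^\es$ are independent and the indicators $Z^\es\coloneqq\indic[\cE^\es\land\cF^\es]$ are independent across iterations with $\expect[Z^\es]=\nfrac{\rho}{2}$. A Chernoff bound then gives $\Pr[\sum_{s=1}^t Z^\es\leq \nfrac{t\rho}{4}]\leq e^{-t\rho/16}$, so combining with the one-step lemma,
\begin{align*}
  \Pr\insquare{a^\et\geq a^{\sexp{1}}+\frac{t\rho}{4}\cdot\frac{C_3}{k-\ell+1}}\ \geq\ 1-e^{-t\rho/16}.
\end{align*}
Writing $\underline a\coloneqq a^{\sexp{1}}+\frac{C_3}{4}\cdot\frac{t\rho}{k-\ell+1}$ and using that $\Phi$ is increasing (Assumption~(1)) together with $\expect[\betat]=\expect_{a^\et}[\Phi(a^\et)]$ (valid since $\betat\sim\cD(a^\et)$, so $\Phi(a)=\exgiv{\betat}{a^\et=a}$), the same chain as in \eqref{eq:bound_on_at_proof_sm} yields $\expect[\betat]\geq \Phi(\underline a)\cdot\Pr[a^\et\geq\underline a]\geq \Phi(\underline a)\inparen{1-e^{-t\rho/16}}$, which is the claimed bound \eqref{eq:gen_lowerbound_sm} with $C_4\coloneqq 4$.

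The only genuine work is the one-step lemma plus the observation that the combinatorial facts $\Pr[\cE^\es]=\rho$ and the cross-iteration independence of the $Z^\es$ do not use the beta structure (only continuity of $\cP$ and of $\cD(\cdot)$); I do not expect a serious obstacle, as Assumption~(4) is phrased precisely so that the $a^\es$-free increment survives. Note that concavity of $\Phi$ (Assumption~(2)) and the inverse-moment condition (Assumption~(3)) play no role in this direction---they are needed only for the ``slow learning'' counterpart, inequality \eqref{eq:gen_upperbound_main}.
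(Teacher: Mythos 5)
Your proposal is correct and follows essentially the same route as the paper's proof: replace the beta-median bound (Fact~\ref{fact:bounding_median_pdf}) by Assumption~(4) to get the $a$-independent one-step increment $\frac{C_3}{k-\ell+1}$, reuse the unchanged combinatorial facts $\Pr[\cE^\es]=\rho$, $\Pr[\cF^\es]=\nfrac12$ and cross-iteration independence, apply the Chernoff bound, and finish via monotonicity of $\Phi$, yielding $C_4=4$. Your derivation also (correctly) produces the factor $1-e^{-t\rho/16}$, i.e.\ the sign in the exponent of \eqref{eq:gen_lowerbound_sm} as printed is a typo rather than a discrepancy in your argument.
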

  \begin{proof}
    First, we prove a lower bound on $\expect[\at]$ (similar to \cref{lem:rooney_rule_increases_a_fast}).
    To do so, note that the proof of \cref{lem:rooney_rule_increases_a_fast} only uses the following property of the beta distribution:
    For $w = \frac{b-1}{(k-\ell+1) \cdot (\at+b)}$, it holds
    \begin{align}
      1-\frac{w(k-\ell+1)}{1+w}\geq {\rm median}(\betat).
    \end{align}
    Notice that this is equivalent to
    \begin{align}
      \frac{1-{\rm median}(\betat)}{{\rm median}(\betat)+(k-\ell)}\geq \omt.%
    \end{align}
    Assumption (4) gives an analogous statement for $\omt{}=\frac{C_3}{\at\cdot (k-\ell+1)}$.
    Then, following the same arguments as in proof of \cref{lem:rooney_rule_increases_a_fast} and using the events $\cE^\et$ and $\cF^\et$ defined there,
    we can show that\\
    \begin{mdframed}[style=FrameBox]
      For all iterations $t\in \N$, if $\ell>0$, then $(\cE^\et\land \cF^\et)$ implies
      \begin{align*}
        \inparen{a^{\sexp{t+1}}>\at+ \frac{C_3}{(k-\ell+1)} }.\yesnum\label{eq:a_increases_by_const_sm}
      \end{align*}
    \end{mdframed}
    Next, using the Chernoff bound~\cite{motwani1995randomized} (and following the same steps), we can establish the following analogue version of Equation~\eqref{eq:lowerbound_on_prob}
    \begin{align*}
      \Pr\insquare{\at \geq a^{\sexp{1}}+\frac{t\rho}{4}\frac{C_3}{(k-\ell+1)}}\geq 1-\exp\inparen{-\frac{t\rho}{16}}.
    \end{align*}
    Finally, using the fact that $\Phi$ is increasing, we can prove the analogue of \eqref{eq:used_later_in_generalization} with the same steps.
    This gives us the required result.
  \end{proof}

  \begin{lemma}[\textbf{Slow learning without the Rooney Rule for other distributions of implicit bias}]
    Under the dynamics of implicit bias~\eqref{alg:iterative_selection_and_panel_learning_2},
    for all $a^{\sexp{1}}> 1$ there is a constant $C_5>0$, such that
    for all $t\in \N,\ k\in \N,\ \ell\in [k]$,
    when the panel applies the Rooney Rule and $\cP\coloneqq {\rm Unif}(0,1)$,
    there is a constant $n_0\in \N$, such that for all $n\geq n_0$
    the following holds
    \begin{align}
      \expect\sinsquare{\betat}\leq \Phi\inparen{a^{\sexp{1}}+C_5{\frac{t\ln{n}}{n(1-\rho)}}}+    {\frac{t\ln{n}}{n(1-\rho)}},\label{eq:gen_upperbound}
    \end{align}
    where the expectation is over draws of implicit bias $\beta^{\sexp{s}}$ and latent utilities of candidates in all previous iterations $s\in [t-1]$.
  \end{lemma}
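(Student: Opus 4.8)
The plan is to rerun the proof of \cref{thm:no_rooney_rule_generalized} (of which \cref{thm:no_rooney_rule} is the $\cP={\rm Unif}(0,1)$ case) in the regime $\ell=0$, isolating the two places where that argument used the specific shape of the beta law and substituting the abstract hypotheses. The beta law enters only through (i) the inverse--moment formula $\E_{\beta\sim\Beta(a,b)}[\tfrac1\beta]=\tfrac{a+b-1}{a-1}$, invoked inside \cref{lem:delta_bounded_in_expect}, and (ii) the monotonicity and concavity of $\Phi(a)=\tfrac{a}{a+b}$, used in the closing Jensen step; I would replace these by Assumption~(3) and by Assumptions~(1)--(2) respectively. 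Throughout, write $D^\es$ for the history prior to iteration $s$ and $\eps=\eps(n)$ for a parameter to be pinned down at the end.

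The first thing to check is that the one--step estimate on $\E[\delta^\es\mid D^\es]$ survives. \cref{lem:delta_small_when_beta_large} (if $\beta^\es\ge 1-\eps$ then $\delta^\es\le 2\eps$) and \cref{lem:delta_0_whp} (for $\ell=0$, $\Pr[\uxs=0\mid\beta^\es<1-\eps]\ge 1-e^{-\frac18 n(1-\rho)T_\cP(\eps)}$) use nothing about the distribution $\cD(a^\es)$ beyond the event $\{\beta^\es<1-\eps\}$, so they transfer verbatim. For \cref{lem:delta_bounded_in_expect} I would repeat its proof up to the integration of $\tfrac1\beta-1$ against the density, which yields $\Pr[\beta^\es<1-\eps]\cdot\E[\delta^\es\mid(\beta^\es<1-\eps)\land(\uxs\ne0)]\le \E_{\beta\sim\cD(a^\es)}[\tfrac1\beta]-1$; by the monotonicity part of Assumption~(3) together with $a^\es\ge a^{\sexp{1}}$ (\cref{fact:a_is_monotonic_in_t}), this is at most the constant $g\coloneqq\E_{\beta\sim\cD(a^{\sexp{1}})}[\tfrac1\beta]-1$, which is finite precisely because Assumption~(3) demands finiteness. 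Splitting $\E[\delta^\es\mid D^\es]$ over the three cases $\{\beta^\es\ge 1-\eps\}$, $\{\beta^\es<1-\eps,\ \uxs\ne0\}$, $\{\beta^\es<1-\eps,\ \uxs=0\}$ and noting $\delta^\es=0$ in the last (from \eqref{eq:def_of_delta}), exactly as in \eqref{eq:upperbound_on_expec_deltat}, gives $\E[\delta^\es\mid D^\es]\le 2\eps+g\,e^{-\frac18 n(1-\rho)T_\cP(\eps)}$ for \emph{every} value of $D^\es$.

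From here the proof is a transcription of \cref{sec:proof:thm:no_rooney_rule}. Because the bound is uniform over histories, the same telescoping/product argument gives $\E[\prod_i(1+\delta^{t_i})]\le(1+2\eps+g\,e^{-\frac18 n(1-\rho)T_\cP(\eps)})^s$ for any $t_1<\dots<t_s$, hence $\E[a^\et]=a^{\sexp{1}}\,\E[\prod_{s<t}(1+\delta^\es)]\le a^{\sexp{1}}\bigl(1+2\eps+g\,e^{-\frac18 n(1-\rho)T_\cP(\eps)}\bigr)^t$. Specializing to $\cP={\rm Unif}(0,1)$ gives $T_\cP(\eps)=\eps$; setting $\eps\coloneqq\tfrac{8\ln n}{n(1-\rho)}$ makes the bracket $1+O\!\bigl(\tfrac{\ln n}{n(1-\rho)}\bigr)$, so for $n\ge n_0(t)$ large enough that this is $\le\tfrac1t$ the inequality $(1+x)^t\le 1+2tx$ (as in \eqref{eq:bound_intermediate}) yields $\E[a^\et]\le a^{\sexp{1}}+C_5\tfrac{t\ln n}{n(1-\rho)}$ for a constant $C_5=C_5(a^{\sexp{1}},b)$. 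Finally, Assumption~(2) and Jensen give $\E[\beta^\et]=\E_{a^\et}[\Phi(a^\et)]\le\Phi(\E[a^\et])$, and Assumption~(1) gives $\Phi(\E[a^\et])\le\Phi\!\bigl(a^{\sexp{1}}+C_5\tfrac{t\ln n}{n(1-\rho)}\bigr)$; bounding any remaining lower--order terms by the slack $\tfrac{t\ln n}{n(1-\rho)}$ produces \eqref{eq:gen_upperbound}.

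The main obstacle is the re-proof of \cref{lem:delta_bounded_in_expect}: one has to make sure $\E_{\beta\sim\cD(a)}[\tfrac1\beta]$ is both finite and nonincreasing in $a$ --- exactly Assumption~(3) --- since if it were infinite, or increasing in $a$, then $\E[a^\et]$ could drift away from $a^{\sexp{1}}$ and the whole estimate would break. I would also record, consistent with the paper's discussion of the assumptions, that Assumption~(2) can be dropped: replace the Jensen step by a Markov split on the likely event $\{a^\et\le a^{\sexp{1}}+\lambda C_5\tfrac{t\ln n}{n(1-\rho)}\}$ and bound $\Phi\le 1$ off it, which costs an extra $\tfrac1{\poly(n)}$ additive term --- the ``slightly weaker'' bound referred to earlier.
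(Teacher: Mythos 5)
Your proposal is correct and takes essentially the same route as the paper's proof: it reruns the argument for \cref{thm:no_rooney_rule_generalized} with $\ell=0$, notes that \cref{lem:delta_small_when_beta_large} and \cref{lem:delta_0_whp} transfer verbatim, re-derives \cref{lem:delta_bounded_in_expect} by continuing from Equation~\eqref{eq:used_later_in_generalization_2} with Assumption~(3) and \cref{fact:a_is_monotonic_in_t}, and closes via the monotonicity and concavity of $\Phi$. Even your final observation that concavity can be dropped in exchange for a Markov-inequality split and a slightly weaker bound matches the paper's own remark following this lemma.
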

  \begin{proof}
    This proof uses arguments similar to the proof of \cref{thm:no_rooney_rule}.
    First, we prove an upper bound on $\expect[\delta \mid D]$ (similar to Equation~\eqref{eq:upperbound_on_expec_deltat}).
    To so so, note that the proof of Equation~\eqref{eq:upperbound_on_expec_deltat} uses three lemmas: \cref{lem:delta_small_when_beta_large}, \cref{lem:delta_bounded_in_expect}, and \cref{lem:delta_0_whp}.
    \cref{lem:delta_small_when_beta_large} and \cref{lem:delta_0_whp} do not use properties specific to the beta distribution, so, hold in this setting.
    \cref{lem:delta_bounded_in_expect} uses some properties of the beta distribution,
    but, not until after Equation~\eqref{eq:used_later_in_generalization_2}.
    Continuing from Equation~\eqref{eq:used_later_in_generalization_2} and using Assumption (3) gives us the following analogue of \cref{lem:delta_bounded_in_expect}.\\
    \begin{mdframed}[style=FrameBox]
      For all iterations $t\in\N$, values $D^\et$, constant $\eps\in (0,\nfrac{1}{2})$, and parameters $a^{\sexp{1}}>1$, it holds that
      $$\Pr[\betat\negsp<\negsp 1-\eps]\cdot \ex{\deltat\mid (\betat\negsp < \negsp  1-\eps)\land (\ux \negsp \neq \negsp  0)} \leq \mathop{\expect}\limits_{\beta\sim \cD(a^{\sexp{1}})}\insquare{\frac1\beta-1}.$$
    \end{mdframed}
    This suffices to prove an upper bound on $\expect[\delta \mid D]$ following the same arguments.
    Then, following the steps similar to the rest of the proof, we can prove the following upper bound on $\expect[\at]$ (see Remark~\ref{rem:bound_when_unif}):
    \begin{align*}
      \expect[\at]\leq a^{\sexp{1}} + 2t \cdot\frac{16\ln{n}}{n(1-\rho)}\cdot {\mathop{\expect}\nolimits_{\beta\sim \cD(a^{\sexp{1}})}\insquare{\frac1\beta}}.
    \end{align*}
    Now, using the fact that $\Phi$ is increasing and concave, the result follows by the arguments used to prove Equation~\eqref{eq:upperbonud_used_later}.
  \end{proof}

  \begin{remark}
    Even if $\Phi$ is not concave, we can use Markov's inequality~\cite{motwani1995randomized} to lower bound the probability that $\at$ is large.
    This will give us:
    \begin{align*}
      \Pr\insquare{\at\geq a^{\sexp{1}}+ O\inparen{  \inparen{   \frac{t\ln{n}}{n(1-\rho)}   }^{\frac{1}{2}} }  } \leq O\inparen{\inparen{\frac{t\ln{n}} {n(1-\rho)}}^{\frac{1}{2}}}.
    \end{align*}
    This suffices to prove the following upper bound on $\expect\sinsquare{\betat}$:
    $$\expect\sinsquare{\betat}\leq \Phi\inparen{a^{\sexp{1}}+C\inparen{\frac{t\ln{n}}{n(1-\rho)}}^{\frac{1}{2}} }+  \inparen{\frac{t\ln{n}}{n(1-\rho)}}^{\frac{1}{2}},$$
    for some constant $C>0$.
  \end{remark}

  \subsection{Generalization to other update rules}\label{sec:proof:generalization_for_other_update_rules}
  In this section, we present the formal statements of our results for other update rules (also see, \cref{sec:generalizing_for_other_update_rules}).
  Recall that in \cref{sec:generalizing_for_other_update_rules} we consider general updated rules of the form
  \begin{align*}
    a^{\sexp{t+1}}\coloneqq \at \cdot F\inparen{\frac{U^\et}{\wt{U}^\et}},\yesnum\label{eq:update_rule_gen}
  \end{align*}
  where $F\colon [1,\infty)\to[1,\infty)$ is a continuous function, satisfying:
  \begin{enumerate}
    \item $F(1)=1$,
    \item $F$ is strictly increasing, and
    \item $F$ is concave.
  \end{enumerate}
  We prove the following versions of \cref{thm:rooney_rule} and \cref{thm:no_rooney_rule}:
  \begin{lemma}[\textbf{Fast learning with the Rooney Rule for other update rules}]\label{lem:gen_rule_1}
    Under the update rule~\eqref{eq:update_rule_gen}, if $F$ satisfies the above assumptions,
    for all $t\in \N$, $\eps\in (0,1)$, $n\in \N$, $k\in [n]$, $\ell\in [k]$, $\rho\in (0,1)$,
    $a^{\sexp{1}},b > 1$,
    and continuous and bounded distribution $\cP$,
    when the panel is constrained by the $\ell$-th order Rooney Rule,
    then there exists an iteration $t\in \N$, such that, $$\expect[\betat]\geq 1-\eps,$$
    where the expectation is over the draws of implicit bias $\beta^{\sexp{s}}$ and latent utilities of candidates in all previous iterations $s\in [t-1]$.
  \end{lemma}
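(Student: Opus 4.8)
The plan is to follow the skeleton of the proof of \cref{thm:rooney_rule}, but since the multiplicative increment $F(1+\delta^\es)$ obtained at a ``good'' iteration shrinks toward $1$ as $a^\es$ grows, I would argue that $\at\to\infty$ in probability rather than proving a fixed rate. Concretely, I would show that for every fixed threshold $A$ one has $\Pr\insquare{\at\geq A}\to 1$ as $t\to\infty$, and then convert this into the claimed bound via $\expect\insquare{\betat}=\expect\insquare{\Phi(\at)}$ with $\Phi(a)=a/(a+b)\to 1$.

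First I would reuse the one-step machinery. I keep the events $\cE^\es$ and $\cF^\es\coloneqq(\beta^\es\leq{\rm median}(\beta^\es))$ from \cref{sec:proof:thm:rooney_rule}. As established there, $\Pr[\cE^\es]=\rho$ and $\Pr[\cF^\es]=\tfrac12$, these are independent, and --- reading $\cF^\es$ as ``the inverse-CDF quantile of the draw $\beta^\es$ is at most $\tfrac12$'' --- the indicator $Z^\es\coloneqq\indic[\cE^\es\land\cF^\es]$ depends only on iteration-$s$ randomness, so the $Z^\es$ are i.i.d.\ Bernoulli$(\rho/2)$. The chain of implications inside the proof of \cref{lem:rooney_rule_increases_a_fast} (via \cref{prop:delta_big_then_beta_small} and the median bounds) is independent of the update rule, so it still gives $\cE^\es\land\cF^\es\implies \delta^\es\geq \tfrac{b-1}{(k-\ell+1)(a^\es+b)}$. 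Finally, since $F(1)=1$ and $F$ is nondecreasing, $F(1+\delta^\es)\geq 1$, so $\at$ is nondecreasing in $t$ (the analogue of \cref{fact:a_is_monotonic_in_t}).

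Next I would show escape above any threshold. Fix $A>a^{\sexp{1}}$, set $w_A\coloneqq\tfrac{b-1}{(k-\ell+1)(A+b)}>0$ and $\gamma_A\coloneqq F(1+w_A)-1$, which is strictly positive because $F$ is strictly increasing. On the event $B_t\coloneqq\{\at<A\}$, monotonicity of $\at$ gives $a^\es<A$ for all $s\leq t$, so for every $s<t$ with $Z^\es=1$ we get $\delta^\es\geq\tfrac{b-1}{(k-\ell+1)(a^\es+b)}>w_A$ and hence $F(1+\delta^\es)\geq F(1+w_A)=1+\gamma_A$. Multiplying the update rule across $s=1,\dots,t-1$ shows $\at\geq a^{\sexp{1}}(1+\gamma_A)^{N_t}$ on $B_t$, where $N_t\coloneqq\sum_{s=1}^{t-1}Z^\es$. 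Therefore $B_t\subseteq\{N_t<\log_{1+\gamma_A}(A/a^{\sexp{1}})\}$, and since $\expect[N_t]=(t-1)\rho/2\to\infty$, a Chernoff bound gives $\Pr[B_t]\leq\exp(-(t-1)\rho/16)$ for every $t$ with $(t-1)\rho/2\geq 2\log_{1+\gamma_A}(A/a^{\sexp{1}})$; equivalently $\Pr[\at\geq A]\geq 1-\exp(-(t-1)\rho/16)$ for all large $t$. To conclude, given $\eps\in(0,1)$ choose $A$ with $\Phi(A)\geq 1-\tfrac\eps2$ (any $A\geq b(2-\eps)/\eps$ works) and then $t$ large enough that both $(t-1)\rho/2\geq 2\log_{1+\gamma_A}(A/a^{\sexp{1}})$ and $\exp(-(t-1)\rho/16)\leq\tfrac\eps2$; using that $\Phi$ is increasing and $0\leq\Phi\leq 1$,
\begin{align*}
  \expect\insquare{\betat}=\expect\insquare{\Phi(\at)}\geq \Phi(A)\cdot\Pr\insquare{\at\geq A}\geq\inparen{1-\tfrac\eps2}\inparen{1-\tfrac\eps2}\geq 1-\eps .
\end{align*}
Concavity of $F$ is not needed here --- it is only used in the companion ``slow learning'' lemma.

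The main obstacle is exactly that the per-step gain $F(1+\delta^\es)$ at a good iteration degrades to $1$ as $a^\es\to\infty$, so one cannot bound the product globally by $(1+\mathrm{const})^{N_t}$; conditioning on the ceiling event $B_t=\{\at<A\}$ is what resolves this, since below any fixed $A$ the gain is bounded below by the fixed constant $\gamma_A>0$, which is precisely enough to force $\at$ above $A$ eventually. The only other subtlety is the i.i.d.\ property of the $Z^\es$, which --- as in the original proof --- relies on the quantile reading of $\cF^\es$ even though ${\rm median}(\beta^\es)$ itself depends on $a^\es$.
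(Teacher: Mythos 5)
Your proof is correct, but it takes a genuinely different route from the paper's. You both reuse the same update-rule-independent core (the events $\cE^\es,\cF^\es$, the quantile reading of $\cF^\es$ making $Z^\es$ i.i.d.\ Bernoulli$(\rho/2)$, and the pointwise implication from \cref{prop:delta_big_then_beta_small} that a good iteration forces $\delta^\es\geq \frac{b-1}{(k-\ell+1)(a^\es+b)}$). Where you diverge is in handling the fact that the multiplicative gain $F(1+\delta^\es)$ degrades as $a^\es$ grows: the paper invokes the concavity of $F$ (together with $F(1)=1$ and $\at\geq a^{\sexp{1}}$) to show that $\at\cdot F\bigl(1+\frac{b-1}{(k-\ell+1)(\at+b)}\bigr)\geq \at+c$ for a constant $c>0$ independent of $\at$, and then simply reruns the additive-drift concentration argument of \cref{thm:rooney_rule}; this buys an explicit rate (a lower bound on $\at$ growing linearly in $t\rho$, hence a quantitative bound on $\expect[\betat]$), at the price of actually using Assumption (3). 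You instead condition on the ceiling event $\{\at<A\}$, use monotonicity of $\at$ and strict monotonicity of $F$ to get a uniform multiplicative gain $1+\gamma_A$ at good iterations below the ceiling, and conclude via Chernoff that this event dies out, which gives exactly the qualitative statement of the lemma while dropping the concavity hypothesis entirely (so your version is slightly more general, needing only $F(1)=1$ and $F$ strictly increasing, with concavity relegated to the companion slow-learning lemma as you note). The trade-off is that your argument, as written, does not produce the explicit iteration count or rate that the paper's additive-increment reduction inherits from \cref{thm:rooney_rule}, though the threshold $t$ you need is implicitly of order $\frac{1}{\rho}\log_{1+\gamma_A}(A/a^{\sexp{1}})$; both arguments are sound.
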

  \begin{proof}
    At a high-level this proof follows the same structure as the proof of \cref{thm:rooney_rule}.
    Define the events $\cE^\et$ and $\cF^\et$ as:
    \begin{align*}
      \text{$\cE^\et \coloneqq \inbrace{ X_{(n_X : n_X)}^\et> Y_{(n_Y : n_Y)}^\et} \text{ and }\
      \cF^\et \coloneqq \inparen{\betat\leq {\rm median}(\betat)}.$}
    \end{align*}
    The first step in the proof of \cref{thm:rooney_rule} is to establish a lower bound on $a^{\sexp{t+1}}$ conditioned on $\cE^\et$ and $\cF^\et$.
    This lower bound does not generalize to the general update rule~\eqref{eq:update_rule_gen}.
    But, we can use the intermediate result, from Equation~\eqref{eq:intermediate_result} (in the proof of \cref{lem:rooney_rule_increases_a_fast}).
    Equation~\eqref{eq:intermediate_result} shows that conditioned on $(\cE^\et\land \cF^\et)$, it holds that $\deltat\geq \frac{(b-1)}{(k-\ell+1)(\at+b)}$.

    Applying this result for update rule~\eqref{eq:update_rule_gen}, we have
    \begin{align*}
      \Pr\insquare{a^{\sexp{t+1}}\geq \at\cdot F\inparen{1+\frac{(b-1)}{(k-\ell+1)(\at+b)}}}\geq \Pr\insquare{\cE^\et \land \cF^\et} \stackrel{}{=} \frac{\rho}{2}.
    \end{align*}
    If we can show that $\at\cdot F\inparen{1+\frac{(b-1)}{(k-\ell+1)(\at+b)}}$ is at least $\at+c$ for some {\em constant} $c>0$,
    then, the result would follow by deriving a concentration result like Equation~\eqref{eq:bound_on_at_proof}, and then, following the argument in Equation~\eqref{eq:lowerbound_on_prob}.

    Towards this observe that by the concavity of $F$, for all $y\geq x\geq 1$ and $\eta> 0$, it holds that
    \begin{align*}
      \frac{F\inparen{1+\frac{\eta}{y}} - F(1)}{\nfrac{\eta}{y}} \geq \frac{F\inparen{1+\frac{\eta}{x} }-F(1)}{\nfrac{\eta}{x}}.
    \end{align*}
    Let $\eta\coloneqq \frac{b-1}{k-\ell+1}$, $y\coloneqq \at+b$, and $x\coloneqq a^{\sexp{1}}+b$.
    Rearranging
    \begin{align*}
      \at\cdot F\inparen{1+\frac{(b-1)}{(k-\ell+1)(\at+b)}} &= \at\cdot F\inparen{1+\frac{\eta}{\at+b}}\\
      & \geq \at \cdot \inparen{\ 1+ \inparen{F\inparen{1+\frac{\eta}{a^{\sexp{1}}+b} }-1} \cdot \frac{\nfrac{\eta}{(\at+b)}}{\nfrac{\eta}{(a^{\sexp{1}}+b)}}\ }\\
      &= \at + \at \cdot \inparen{F\inparen{1+\frac{\eta}{a^{\sexp{1}}+b}}-1}  \cdot \frac{a^{\sexp{1}}+b}{\at+b}\\
      &\geq \at + a^{\sexp{1}} \cdot \inparen{F\inparen{1+\frac{\eta}{a^{\sexp{1}}+b}}-1} \tag{Using $\at\geq a^{\sexp{1}}$}\\
      &= \at + c.
    \end{align*}
    Where $c\coloneqq a^{\sexp{1}} \cdot \inparen{F\inparen{1+\frac{\eta}{a^{\sexp{1}}+b}}-1} >0$.
    By our earlier discussion, this proves \cref{lem:gen_rule_1}.
  \end{proof}

  \begin{lemma}[\textbf{Slow learning without the Rooney Rule for other update rules}]
    Under the update rule~\eqref{eq:update_rule_gen}, if $F$ satisfies the above assumptions,
    for all $t\in \N$, $\eps\in (0,1)$, $k\in[n]$, and $\rho\in (0,1)$,
    and $a^{\sexp{1}},b > 1$,
    when the panel is not constrained by the Rooney Rule (i.e., $\ell=0$) and $\cP$ is ${\rm Unif}(0,1)$,
    then there exists an $n_0\in \N$, such that, for all $n\geq n_0$ $$\expect[\betat]\leq \expect\sinsquare{\beta^{\sexp{1}}}+\eps,$$
    where the expectation is over the draws of implicit bias $\beta^{\sexp{s}}$ and latent utilities of candidates in all previous iterations $s\in [t-1]$.
  \end{lemma}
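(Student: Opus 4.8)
\medskip
\noindent\textit{Proof plan.} The argument runs parallel to that of \cref{thm:no_rooney_rule} — equivalently, \cref{thm:no_rooney_rule_generalized} with $\cP={\rm Unif}(0,1)$ — the only change being that the one-step multiplicative factor $(1+\delta^\es)$ is replaced by $F(1+\delta^\es)$. First note that the rest of that argument is insensitive to the choice of $F$: since $F$ is increasing with $F(1)=1$ and $\delta^\es\geq 0$ we have $F(1+\delta^\es)\geq 1$, so $\at$ remains nondecreasing in $t$ and \cref{fact:a_is_monotonic_in_t} holds verbatim; consequently \cref{lem:delta_small_when_beta_large}, \cref{lem:delta_bounded_in_expect}, and \cref{lem:delta_0_whp} still hold (their proofs use only that $\at$ is nondecreasing and, for \cref{lem:delta_bounded_in_expect}, the inverse-moment identity for $\Beta(\at,b)$, which is unchanged). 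I will also use one elementary consequence of concavity: writing $m\coloneqq F(2)-1\geq 0$, one has $F(1+x)\leq F(2)+m x$ for every $x\geq 0$ (the secant through the points at $1$ and $2$ for $x\geq 1$; monotonicity for $x\in[0,1]$).

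Fix $t\in\N$ and $\eps\in(0,1)$, and set $\eps(n)\coloneqq\nfrac{\ln n}{n}$, which is below $\nfrac{1}{2}$ once $n$ is large. The core estimate is a uniform one-step bound: for every value of $D^\es$,
\begin{align*}
  \exgiv{F(1+\delta^\es)}{D^\es}\ \leq\ 1+\gamma(n),\qquad
  \gamma(n)\coloneqq\big(F(1+2\eps(n))-1\big)+\Big(F(2)+\tfrac{m b}{a^{\sexp{1}}-1}\Big)\,n^{-(1-\rho)/8}.
\end{align*}
This follows from the same three-case split used to derive Equation~\eqref{eq:upperbound_on_expec_deltat}: on $\{\beta^\es\geq 1-\eps(n)\}$, \cref{lem:delta_small_when_beta_large} gives $\delta^\es\leq 2\eps(n)$ and hence $F(1+\delta^\es)\leq F(1+2\eps(n))$; on $\{\beta^\es<1-\eps(n)\}\cap\{\uxs=0\}$ we have $\delta^\es=0$ and $F(1+\delta^\es)=1$; and on $\{\beta^\es<1-\eps(n)\}\cap\{\uxs\neq 0\}$ we bound $F(1+\delta^\es)\leq F(2)+m\,\delta^\es$ and combine \cref{lem:delta_bounded_in_expect} (to bound $\exgiv{\delta^\es\,\indic[\,\cdot\,]}{D^\es}$ by $\tfrac{b}{a^{\sexp{1}}-1}$ times the probability of this event) with \cref{lem:delta_0_whp} (with $T_\cP(\eps)=\eps$ for $\cP={\rm Unif}(0,1)$, bounding that probability by $e^{-\frac18\eps(n)\,n(1-\rho)}=n^{-(1-\rho)/8}$). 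Adding the three contributions and using $F(1+2\eps(n))\geq 1$ gives the displayed bound. Crucially, $\gamma(n)\to 0$ as $n\to\infty$: the first summand vanishes by continuity of $F$ at $1$, the second because the exponent is negative. This is the single point where superlinear growth of $F$ near $1$ (an infinite right-derivative at $1$) might have been an obstacle, and it is not: the ``bulk'' estimate uses only continuity of $F$ at $1$, never a linear bound there, while the possibly large values of $\delta^\es$ in the third case are tamed by the global bound $F(1+x)\leq F(2)+m x$ multiplied by the $n^{-(1-\rho)/8}$ probability factor.

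Finally, iterating the one-step bound by the tower property — each partial product $\prod_{s<j}F(1+\delta^\es)$ is measurable with respect to $D^{\sexp{j}}$, and the one-step bound holds for every value of $D^{\sexp{j}}$ — yields $\ex{\at}=a^{\sexp{1}}\,\ex{\prod_{s=1}^{t-1}F(1+\delta^\es)}\leq a^{\sexp{1}}(1+\gamma(n))^{t-1}$. Since $\Phi(a)=\nfrac{a}{(a+b)}$ is concave and increasing, Jensen's inequality gives
\[
  \ex{\betat}=\ex{\Phi(\at)}\leq \Phi\!\big(\ex{\at}\big)\leq \Phi\!\big(a^{\sexp{1}}(1+\gamma(n))^{t-1}\big).
\]
For a fixed $t$, $a^{\sexp{1}}(1+\gamma(n))^{t-1}\to a^{\sexp{1}}$ as $n\to\infty$, so by continuity of $\Phi$ the right-hand side converges to $\Phi(a^{\sexp{1}})=\ex{\beta^{\sexp{1}}}$; choosing $n_0$ so that it is at most $\ex{\beta^{\sexp{1}}}+\eps$ for all $n\geq n_0$ completes the proof. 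The main (mild) obstacle, as noted, is controlling $F(1+\delta^\es)$ in the regime where $\delta^\es$ is not small; it is resolved by the concavity bound $F(1+x)\leq F(2)+m x$ and the exponentially small probability of that regime.
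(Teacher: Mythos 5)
Your proof is correct and follows essentially the same route as the paper: the same three-case one-step bound built from \cref{lem:delta_small_when_beta_large}, \cref{lem:delta_bounded_in_expect}, and \cref{lem:delta_0_whp} (valid, as you note, because $F(1+\delta^\es)\geq 1$ keeps $\at$ nondecreasing), iterated via the tower property over the product, and finished using that $\Phi$ is concave and increasing together with continuity of $F$ at $1$. The only harmless local difference is that the paper obtains the one-step bound by applying Jensen's inequality to the concave $F$, i.e.\ $\exgiv{F(1+\delta^\es)}{D^\es}\leq F\inparen{1+\exgiv{\delta^\es}{D^\es}}\leq F\inparen{1+O\inparen{\nfrac{\ln n}{n(1-\rho)}}}$, reusing the bound on $\exgiv{\delta^\es}{D^\es}$, whereas you push $F$ through the case split pointwise and tame the rare event with the chord bound $F(1+x)\leq F(2)+(F(2)-1)x$; both uses of concavity give a per-step factor tending to $1$ as $n\to\infty$, so the conclusions coincide.
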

  \begin{proof}
    At a high-level this proof follows the same structure as the proof of \cref{thm:no_rooney_rule}.
    Let $D^\et$ denote the values of $\beta^\es$ for all $s\in [t-1]$:
    $$D^\et\coloneqq \inbrace{\delta^s}_{0\leq s<t}.$$
    First, for all values of $D^\et$, we upper bound $\exgiv{\deltat}{D^\et}$ by $O\inparen{\frac{\ln{n}}{n}}$.
    We can do so by following the steps used to prove Equation~\eqref{eq:upperbound_on_expec_deltat} (Notice, that arguments there do not depend on the update rule.)

    Now, if we can prove an upper bound on $\expect\sinsquare{\at}$ which reduces as $n$ increases, then the result follows using similar arguments to the ones used to derive Equation~\eqref{eq:upperbonud_used_later}.
    Towards this consider
    \begin{align*}
      \expect\sinsquare{\at} &= a^{\sexp{1}}\cdot \expect\insquare{F(1+\delta^{\sexp{1}})\cdot F(1+\delta^{\sexp{2}}) \cdots \cdot F(1+\delta^{\sexp{t-1}})}. \yesnum\label{eq:expect_exp_of_a}
    \end{align*}
    To show an upper bound on $\ex{\at}$, we need to upper bound $\expect\insquare{F(1+\delta^\es)\mid D^\et}$ for all $s\in [t-1]$.
    We can do so as follows
    \begin{align*}
      \exgiv{F(1+\deltat)}{ D^\et} \ &\stackrel{(F \text{ is concave})}{\leq} \  F\inparen{\expect\insquare{1+\deltat\mid D^\et}}\\
      &\hspace{-1mm} \stackrel{(F \text{ is increasing})}{\leq} \  F\inparen{1+O\inparen{\frac{\ln{n}}{n}}}.
    \end{align*}
    Combining this with Equation~\eqref{eq:expect_exp_of_a} (and following similar steps to prove Equation~\eqref{eq:expect_product_delta_sm}), we get
    \begin{align*}
      \expect\sinsquare{\at} &\leq a^{\sexp{1}}\cdot \inparen{ F\inparen{1+O\inparen{\frac{\ln{n}}{n}}}}^t.
    \end{align*}
    Since $F(1)=1$ and $F$ is continuous, the theorem follows by choosing a large enough $n_0$.
  \end{proof}

  \subsection{Asymptotic learning without the Rooney Rule }\label{sec:proof:convergestoone}
  In this section, we prove the \cref{thm:convergestoone}. We first give a sketch of the proof and then present the complete proof.
  \begin{lemma}[\textbf{Asymptotic learning without the Rooney Rule}]\label{thm:convergestoone}
    Under the dynamics of implicit bias~\eqref{alg:iterative_selection_and_panel_learning}, for all population size $n\in \N$, number of candidates selected $k\in \N$, Rooney Rule parameter $1\leq \ell\leq k$, ratio of the underrepresented candidates $\rho\in (0,1)$,
    parameters $a^{\sexp{1}},b > 1$,
    and bounded distribution of latent utility $\cP$,
    when the panel is not constrained by the order Rooney Rule (i.e., $\ell=0$), then as $t\to\infty$
    \begin{align} \label{eq:convergestoone}
      \expect\big[\beta^\et\big]\to 1.
    \end{align}
  \end{lemma}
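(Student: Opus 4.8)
\begin{proofsketch}{\cref{thm:convergestoone}}
The plan is to reduce the claim to ``$\at\to\infty$ almost surely as $t\to\infty$'' and then prove the latter by a conditional Borel--Cantelli argument.

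\emph{Reduction.} By \cref{fact:a_is_monotonic_in_t} the sequence $\sinparen{\at}_{t}$ is non-decreasing, so the limit $a^{\sexp{\infty}}\coloneqq\lim_{t\to\infty}\at=\sup_t\at$ exists in $[a^{\sexp{1}},\infty]$ almost surely. Writing $\Phi(a)=\nfrac{a}{(a+b)}$ we have $\expect[\betat]=\expect[\Phi(\at)]$, and $\Phi(\at)$ is non-decreasing in $t$ and bounded above by $1$; hence, by monotone convergence, $\expect[\betat]\to\expect\sinsquare{\Phi(a^{\sexp{\infty}})}$ (with the convention $\Phi(\infty)=1$). Therefore it suffices to show $a^{\sexp{\infty}}=\infty$ almost surely, since then $\Phi(a^{\sexp{\infty}})=1$ a.s.

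\emph{A good event forcing a constant jump.} Suppose not; then $\Pr[a^{\sexp{\infty}}\leq K]>0$ for some integer $K$, and I would work on the event $A_K\coloneqq\set{\at\leq K\text{ for all }t}$, which by monotonicity equals $\set{a^{\sexp{\infty}}\leq K}$. Let $[m,M]$, with $0\leq m<M$, be the support of the continuous distribution $\cP$, and set $\mu_1\coloneqq(2m+M)/3<\mu_2\coloneqq(m+2M)/3<M$ and $\beta_-\coloneqq\mu_1/\mu_2<\beta_+\coloneqq(1+\mu_1/\mu_2)/2<1$. For each iteration $t$, let $\mathcal{G}^\et$ be the event that simultaneously (i) $X_{(n_X:n_X)}^\et>\mu_2$, (ii) at least $(n_Y-k+1)^{+}$ of the $Y$-utilities of iteration $t$ are $\leq\mu_1$, and (iii) $\betat\in[\beta_-,\beta_+]$. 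On $\mathcal{G}^\et$ the highest-latent-utility $X$-candidate has observed utility $\betat X_{(n_X:n_X)}^\et>\beta_-\mu_2=\mu_1$, hence has strictly higher observed utility than all but at most $k-1$ of the remaining candidates, and so (as $\ell=0$) it is selected; consequently $\ux\geq\mu_2$, $\uy\leq(k-1)M$, and $1-\betat\geq\nfrac{(M-m)}{(2(m+2M))}$, and substituting into $\deltat=\nfrac{(1-\betat)\ux}{(\betat\ux+\uy)}$ yields $\deltat\geq c_1$ for the positive constant $c_1\coloneqq\nfrac{(M-m)}{(2(m+(3k-1)M))}$, which depends only on $\cP$ and $k$.

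\emph{Uniform lower bound on the conditional probability, and conclusion.} The iteration-$t$ latent utilities are drawn fresh, independently of $D^\et$ and of $\betat$, while $\betat\sim\Beta(\at,b)$; so $\Pr[\mathcal{G}^\et\mid D^\et]=\Pr[(\mathrm{i})\land(\mathrm{ii})]\cdot\Pr_{\beta\sim\Beta(\at,b)}[\beta\in[\beta_-,\beta_+]]$. The first factor is a positive constant $q=q(\cP,n,k)$ because $\Pr_{Z\sim\cP}[Z>\mu_2]>0$ (as $\mu_2<M$) and $\Pr_{Z\sim\cP}[Z\leq\mu_1]>0$ (as $\mu_1>m$); and the map $a\mapsto\Pr_{\beta\sim\Beta(a,b)}[\beta\in[\beta_-,\beta_+]]$ is continuous and strictly positive on the compact set $[a^{\sexp{1}},K]$ (the $\Beta(a,b)$ density is positive throughout $(0,1)$), hence is $\geq p_\beta>0$ there. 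Thus $\Pr[\mathcal{G}^\et\mid D^\et]\geq p_0\coloneqq qp_\beta>0$ whenever $\at\leq K$. Since $\mathcal{G}^\et$ is determined by the randomness of iteration $t$ (it is adapted), L\'evy's extension of the Borel--Cantelli lemmas gives, almost surely, $\set{\mathcal{G}^\et\text{ occurs for infinitely many }t}=\set{\textstyle\sum_t\Pr[\mathcal{G}^\et\mid D^\et]=\infty}$. On $A_K$ every $\at\leq K$, so $\sum_t\Pr[\mathcal{G}^\et\mid D^\et]\geq\sum_t p_0=\infty$ and $\mathcal{G}^\et$ occurs infinitely often; but each occurrence multiplies $\at$ by at least $1+c_1$, forcing $a^{\sexp{\infty}}=\infty$, which contradicts $A_K$. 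Hence $\Pr[a^{\sexp{\infty}}\leq K]=0$ for every integer $K$, so $a^{\sexp{\infty}}=\infty$ almost surely, and the reduction completes the proof.

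\emph{Main obstacle.} I expect the crux to be the construction of $\mathcal{G}^\et$ and its two estimates: one must pin the $Y$-utilities and $\betat$ so that the top $X$-candidate is genuinely \emph{selected} despite the bias (even when the $Y$-utilities can be close to $M$), yet keep $\betat$ bounded away from $1$ so that $\deltat$ stays above a \emph{constant}; and one must accommodate the degenerate regime $n_Y<k$ and the possibility $m>0$. Passing to a lower bound on the $\Beta(\at,b)$-mass of $[\beta_-,\beta_+]$ that is uniform in $\at$ uses the compactness of $[a^{\sexp{1}},K]$, which is exactly what conditioning on $a^{\sexp{\infty}}\leq K$ provides; the rest is bookkeeping.
\end{proofsketch}
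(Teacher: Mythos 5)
Your proposal is correct, and its core mechanism coincides with the paper's: you construct a per-iteration event on which the best $X$-candidate is large, all but at most $k-1$ of the $Y$-utilities are small, and $\betat$ lies in a band bounded away from $1$, so that the top $X$-candidate is selected while $\deltat$ exceeds a fixed constant --- this is exactly the role of the events $\cE^\et$, $\cH^\et$ and of \cref{lem:delta_increases_with_const_prob} in the paper (your constant $c_1$ is the analogue of the paper's $c_0$). Where you genuinely diverge is in the two supporting estimates and the finishing step. The paper lower-bounds $\Pr[\cH^\et\mid \at\leq M]$ by an explicit Beta-density computation with Gamma-function bounds, and then runs a finite-$t$ counting (``random walk'') argument: if $\at\leq M$ the number of good iterations is at most $\log(M/a^{\sexp{1}})/\log(1+c_0)$, a Markov-type lower-tail bound on this count forces $\Pr[\at\leq M]\leq\eps$ for $t\geq\tau(\eps,M)$, and choosing $M,\eps$ in terms of the target accuracy gives the limit. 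You instead obtain uniform positivity of the $\Beta(\at,b)$-mass of the fixed band $[\beta_-,\beta_+]$ from continuity plus compactness of $[a^{\sexp{1}},K]$, and conclude via L\'evy's conditional Borel--Cantelli that $\at\to\infty$ almost surely, then apply monotone convergence. Your route is cleaner and sidesteps the paper's most delicate quantitative step (the reverse-Markov-style bound on the count of good iterations, which requires care about the conditioning), but it is purely asymptotic: the paper's argument yields an explicit iteration $\tau$ after which $\expect[\betat]\geq 1-\eta$, whereas Borel--Cantelli gives only the limit. Two small points to make explicit in a full write-up: tie-breaking in the selection is immaterial because $\cP$ is continuous (ties occur with probability zero), and the factorization $\Pr[\mathcal{G}^\et\mid D^\et]=\Pr[(\mathrm{i})\land(\mathrm{ii})]\cdot\Pr_{\beta\sim\Beta(\at,b)}[\beta\in[\beta_-,\beta_+]]$ should be justified by noting that the iteration-$t$ utilities and $\betat$ are conditionally independent given $D^\et$, with $\at$ measurable with respect to $D^\et$.
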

  \begin{proof}[Proof sketch]
    First, we reduce the theorem to proving the following statement:
    for all $M\geq 0$, $\Pr[\at\leq M]\to 0$ as $t\to \infty$.
    This reduction uses the fact that when $\at$ is large, then $\betat$ is concentrated around 1.
    More formally, we use the fact that:
    \begin{align}
      \lim_{a\to \infty}\expect_{\beta\sim\Beta(a,b)}\insquare{\beta}=1.\label{eq:prop_ofprop_of_beta_dist_for_convergencetoone}
    \end{align}
    (The concentration then follows, say from the Markov's inequality~\cite{motwani1995randomized}.)

    Next, we observe that we always have $\nfrac{U^\et}{\wt{U}^\et}\geq 1$.
    To see this, note that $(U^\et-\wt{U}^\et)$ is $(1-\betat)\cdot\sum_{i\in S_\ell^\et\cap G_X}\xit{}$, and since $1-\betat\geq 0$ and $\xit{}\geq 0$, $(U^\et-\wt{U}^\et)\geq 0$.
    However, as $\at\to \infty$, the random variable $\nfrac{U^\et}{\wt{U}^\et}$ concentrates around 0.
    Conditioning on the fact that $\at\leq M$, we can show that there are positive constants $c_0$ and $c_1$ which are functions of $M$ such that
    $$\probgiv{\frac{U^\et}{\wt{U}^\et}>1+c_0}{ \at \leq M}\geq c_1.$$
    This uses the fact that $\cP$  is a non-point distribution and that the closure of the support of the beta distribution contains 1.

    Consider a particle performing a random walk starting at 0 and with an absorbing barrier at $O\inparen{\log_{(1+c_0)}\inparen{\frac{M}{a^{\sexp{1}}}}}$.
    The particle moves one unit to the right with probability $c_1$, and stays at the its position with probability $(1-c_1)$.
    Using our previous result, we can prove that
    $$\Pr[a^\et\leq M]\leq \Pr\big[\ \text{particle is absorbed at after time $t$}\ \big].$$
    Using standard analysis of random walks we get that $\Pr[\at\leq M]\to 0$ as $t\to \infty$.
  \end{proof}

  Now we give the full proof of \cref{thm:convergestoone}.
  \subsubsection{Proof of Lemma~\ref{thm:convergestoone}}\label{sec:proof:thm:convergestoone}
  Fix any $\eps>0$.
  We will prove that there exists a $\tau\in \N$, such that,
  $$\expect[\beta^{\sexp{\tau}}]\geq 1-\eps.$$
  Letting $\eps\to 0$ proves the theorem.\\

  \noindent {\em Notation.}
  Let ${\rm supp}(\cdot)$ denote the support of a distribution and ${\rm int}(\cdot)$ denote the interior of an interval.
  Assume that $\cP$ have support ${\rm supp}(\cP)=[L,U]$ where $L,U\geq 0$.
  Choose
  \begin{align}
    m\coloneqq\frac{U+L}{2}\quad \text{and}\quad \epsilon\coloneqq \frac{U-L}{2(U+L)}.
  \end{align}
  It holds that $m(1-\epsilon)=L+\frac{U-L}{4}$.

  We first state some propositions which we use in the proof of \cref{thm:convergestoone}. Their proofs appear ini Section~\ref{sec:proof:supporting_lemmas}
  \begin{proposition}[\textbf{Lower bound on \ensuremath{\Pr[\cE^\et]}}]\label{lem:lower_bound_on_E}
    For all iterations $t\in \N$,
    define $\cE^\et$ as the following event
    \begin{align}
      \cE^\et\coloneqq \inparen{X_{(n_X:n_X)}^\et > m}\ \ \land \ \
      \inparen{Y^\et_{(n_Y-k:n_Y)} < m\cdot (1-\epsilon)}.\label{eq:lower_bound_on_prob_1}
    \end{align}
    There is a constant $c>0$, such that $\Pr\insquare{\cE^\et}>c$.
  \end{proposition}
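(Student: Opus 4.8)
The key observation is that $\Pr[\cE^\et]$ does not depend on $t$: the event $\cE^\et$ is a function only of the latent utilities $\{X_i^\et\}_{i\in G_X}$ and $\{Y_j^\et\}_{j\in G_Y}$, and these are drawn i.i.d.\ from $\cP$ in every iteration. So it suffices to exhibit \emph{some} $c>0$ --- which may depend on $n_X,n_Y$ and $\cP$ (equivalently on $n,\rho,\cP$) but not on $t$, $a^{\sexp{1}}$, or $b$ --- with $\Pr[\cE^\et]\ge c$. Furthermore, the two conjuncts defining $\cE^\et$ are independent: $\{X_{(n_X:n_X)}^\et>m\}$ is measurable with respect to the $X$-draws and $\{Y_{(n_Y-k:n_Y)}^\et<m(1-\epsilon)\}$ with respect to the (independent) $Y$-draws. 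The plan is therefore to write
$$\Pr[\cE^\et]=\Pr\!\left[X_{(n_X:n_X)}^\et>m\right]\cdot\Pr\!\left[Y_{(n_Y-k:n_Y)}^\et<m(1-\epsilon)\right]$$
and bound each factor below by a fixed positive constant.

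For the first factor, note that $X_{(n_X:n_X)}^\et$ is the largest latent utility among the $X$-candidates (the quantity written $X_{\max}^\et$ in the proof sketches). Since $\cP$ is a non-point distribution its support is a nontrivial interval $[L,U]$ with $L<U$, and $m=\tfrac{U+L}{2}$ lies in the interior $(L,U)$; hence $p_1\coloneqq\Pr_{X\sim\cP}[X>m]>0$. A single $X$-candidate landing above $m$ already makes the maximum exceed $m$, so
$$\Pr\!\left[X_{(n_X:n_X)}^\et>m\right]=1-(1-p_1)^{n_X}\ge p_1>0.$$
For the second factor, ``$Y_{(n_Y-k:n_Y)}^\et<m(1-\epsilon)$'' holds precisely when at least $n_Y-k$ of the $n_Y$ i.i.d.\ values $\{Y_j^\et\}$ fall below $m(1-\epsilon)$. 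Since $m(1-\epsilon)=L+\tfrac{U-L}{4}$ is again interior to $[L,U]$, we have $q\coloneqq\Pr_{Y\sim\cP}[Y<m(1-\epsilon)]\in(0,1)$, and the sub-event in which \emph{all} $n_Y$ values lie below $m(1-\epsilon)$ already forces the order statistic below that threshold, so
$$\Pr\!\left[Y_{(n_Y-k:n_Y)}^\et<m(1-\epsilon)\right]\ge q^{n_Y}>0.$$
Multiplying the two bounds gives $\Pr[\cE^\et]\ge p_1\,q^{n_Y}$, a strictly positive number independent of $t$, and any $c\in(0,p_1 q^{n_Y})$ then satisfies $\Pr[\cE^\et]>c$.

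I do not anticipate a real obstacle: the content is simply that $m$ and $m(1-\epsilon)$ sit strictly inside $\supp(\cP)$, so each of the two defining events has positive probability, and the probability is the same in every iteration. The points that require a little care are the order-statistic bookkeeping (that $X_{(n_X:n_X)}^\et$ is the \emph{largest} $X$-utility, and that ``$Y_{(n_Y-k:n_Y)}^\et<m(1-\epsilon)$'' translates into ``at least $n_Y-k$ of the $Y$-utilities lie below $m(1-\epsilon)$''); the harmless degenerate regime $n_Y<k$, in which the panel is forced to select $X$-candidates even with $\ell=0$ so the downstream convergence argument only gets easier (one may simply assume $n\ge k/(1-\rho)$); and the fact that, although $q^{n_Y}$ is exponentially small in $n$, it is still a fixed positive constant --- which is all the lemma asks for, since $c$ is permitted to depend on $n$, $\rho$ and $\cP$ and need only be independent of the iteration index $t$.
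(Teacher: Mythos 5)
Your proposal is correct and follows essentially the same route as the paper's proof: factor $\Pr[\cE^\et]$ into the two independent order-statistic events, note that $m$ and $m(1-\epsilon)$ lie strictly inside $\supp(\cP)=[L,U]$ so the relevant tail probabilities are positive (the paper packages this as Fact~\ref{fact:lowerbound_on_pdf}), and lower bound each factor by the sub-event in which every candidate's utility lands on the favorable side of the threshold. The only cosmetic difference is that you bound the first factor by $1-(1-p_1)^{n_X}\geq p_1$ while the paper uses the cruder $p_1^{n_X}$; both yield a constant independent of $t$, which is all that is needed.
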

  \begin{proposition}[\textbf{Lower bound on \ensuremath{\probgiv{\mathcal{H}}{ \at\leq M}}}]\label{lem:lower_bound_on_H}
    Given $0<\eta<\epsilon$ and $M>a^{\sexp{1}}$, for all iterations $t\in \N$,
    define $\cH^\et$ as the following event
    $$\cH^\et\coloneqq \inparen{1-\epsilon\leq \betat\leq 1-\eta}.$$
    There exists a constant $c>0$ such that $\Pr[\cH^\et\mid \at\leq M]=c$.
  \end{proposition}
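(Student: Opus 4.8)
The plan for \cref{lem:lower_bound_on_H} is to reduce the conditional probability to a single beta--tail quantity and then close with a continuity/compactness argument. First I would use that in Model~\ref{alg:iterative_selection_and_panel_learning} the variable $\betat$ is a fresh draw from $\Beta(\at,b)$: conditionally on $\at=a$ it has law $\Beta(a,b)$ independently of everything else. Since the event $\{\at\le M\}$ is $\sigma(\at)$--measurable, the tower property gives
\[
  \Pr\!\left[\cH^\et \;\middle|\; \at\le M\right] \;=\; \expect\!\left[\, g(\at) \;\middle|\; \at\le M\,\right],\qquad g(a)\;\coloneqq\;\Pr_{\beta\sim\Beta(a,b)}\!\left[\,1-\epsilon\le\beta\le 1-\eta\,\right].
\]
By \cref{fact:a_is_monotonic_in_t} together with $a^{\sexp{1}}>1$ we always have $\at\ge a^{\sexp{1}}$, and on the conditioning event also $\at\le M$; so it suffices to exhibit a constant $c>0$, depending only on $a^{\sexp{1}},b,M,\epsilon,\eta$ and \emph{not} on $t$, with $g(a)\ge c$ for all $a\in[a^{\sexp{1}},M]$.

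For this I would verify the relevant properties of $g$. Recall that in the proof of \cref{thm:convergestoone} one has $\epsilon=\tfrac{U-L}{2(U+L)}\in(0,\tfrac12)$ and $0<\eta<\epsilon$, so $[1-\epsilon,1-\eta]$ is a non-degenerate closed subinterval of $(0,1)$. The density of $\Beta(a,b)$ is $y\mapsto y^{a-1}(1-y)^{b-1}/B(a,b)$, which for $a,b>1$ is continuous and strictly positive on $(0,1)$; integrating it over $[1-\epsilon,1-\eta]$ therefore gives $g(a)>0$ for every $a>1$. Moreover $g$ is continuous on $(1,\infty)$, since both $B(a,b)$ and $\int_{1-\epsilon}^{1-\eta}y^{a-1}(1-y)^{b-1}\,dy$ are continuous in $a$ and the former is positive (alternatively, dominated convergence). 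As $[a^{\sexp{1}},M]\subset(1,\infty)$ is compact, $g$ attains its infimum there, and $c\coloneqq\min_{a\in[a^{\sexp{1}},M]}g(a)>0$. Substituting back yields $\Pr[\cH^\et\mid\at\le M]\ge c$, which is the assertion. The conditioning is meaningful because $M>a^{\sexp{1}}$ and $\Pr[\at=a^{\sexp{1}}]>0$ --- e.g.\ no $X$--candidate is selected in iterations $1,\dots,t-1$ with positive probability --- so $\Pr[\at\le M]>0$.

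I do not expect a genuine obstacle here; the only delicate point is that the bound must be uniform in $t$, and the compactness step supplies exactly that, using $M<\infty$ to keep $\at$ bounded above under the conditioning and $a^{\sexp{1}}>1$ to keep the first beta parameter bounded away from the degenerate value $1$, where positivity of $g$ could be lost in a limit. (If one prefers to avoid abstract continuity one can also lower bound $g(a)$ by $(\epsilon-\eta)$ times the minimum of the $\Beta(a,b)$ density over $[1-\epsilon,1-\eta]$ and bound $B(a,b)$ from above over the compact range of $a$, but this is not needed.)
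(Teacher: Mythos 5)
Your proof is correct, and it shares the paper's basic reduction: both arguments boil down to lower bounding the $\Beta(a,b)$-probability of the fixed interval $[1-\epsilon,1-\eta]$ uniformly over $a\in[a^{\sexp{1}},M]$, the monotonicity fact $\at\geq a^{\sexp{1}}$ (\cref{fact:a_is_monotonic_in_t}) supplying the lower end of the range and the conditioning event supplying the upper end. Where you differ is in how that uniform bound is obtained. The paper is fully explicit: it lower bounds the integrand by $(1-\epsilon)^{\at-1}(1-x)^{b-1}$, evaluates $\int_{1-\epsilon}^{1-\eta}(1-x)^{b-1}dx=(\epsilon^b-\eta^b)/b$, and controls $1/B(\at,b)$ via $\Gamma(\at+b)\geq\tfrac12$ and convexity of $\Gamma$ on $[a^{\sexp{1}},M]$, yielding a concrete constant. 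You instead define $g(a)\coloneqq\Pr_{\beta\sim\Beta(a,b)}[1-\epsilon\leq\beta\leq 1-\eta]$, observe $g>0$ and $g$ continuous on $(1,\infty)$, and take the minimum over the compact set $[a^{\sexp{1}},M]$; this is softer (no explicit constant) but cleaner, and it sidesteps a cosmetic blemish in the paper's proof, whose ``constant'' $c_1$ still contains the factor $(1-\epsilon)^{\at-1}$ (harmless, since it is at least $(1-\epsilon)^{M-1}$, but left implicit). Your tower-property step $\probgiv{\cH^\et}{\at\leq M}=\exgiv{g(\at)}{\at\leq M}$ also makes precise a conditioning that the paper glosses over (it writes the conditional probability directly as the beta integral with parameter $\at$), and your remark that $\Pr[\at\leq M]>0$ addresses a well-posedness point the paper omits. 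Two trivial quibbles: with $L=0$ one has $\epsilon=\tfrac12$ rather than $\epsilon<\tfrac12$ (immaterial, since all you need is $[1-\epsilon,1-\eta]\subset(0,1)$), and the danger as $a$ decreases is the parameter approaching $0$ (mass escaping to $0$), not the value $1$; neither affects the argument, since $a^{\sexp{1}}>1$ keeps you safely inside the range where $g$ is positive and continuous.
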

  \begin{proposition}[\textbf{Conditional lower bound on $\Pr[\deltat>c_0]$}]\label{lem:delta_increases_with_const_prob}
    Given an $M>0$, there exist constants $c_0,c_1>0$ such that at any iteration $t\in \N$
    $$\Pr\insquare{\deltat> c_0\mid \at\leq M}=c_1.$$
  \end{proposition}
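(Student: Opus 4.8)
The plan is to prove the bound by exhibiting a single favourable event on which $\deltat$ is forced to be at least a positive constant, and which still has probability bounded below by a positive constant after conditioning on $\{\at\le M\}$. Concretely, fix $\eta\coloneqq \epsilon/2\in(0,\epsilon)$ and consider the event $\cE^\et\land\cH^\et$, where $\cE^\et$ is the event of \cref{lem:lower_bound_on_E} and $\cH^\et=\{\,1-\epsilon\le\betat\le 1-\eta\,\}$ is the event of \cref{lem:lower_bound_on_H}. We may assume $M>a^{\sexp{1}}$: otherwise, since $\at\ge a^{\sexp{1}}$ always by \cref{fact:a_is_monotonic_in_t}, the event $\{\at\le M\}$ is either empty or equals $\{\at=a^{\sexp{1}}\}$, and in the latter case $\Pr[\cH^\et\mid \at=a^{\sexp{1}}]=\Pr_{\beta\sim\Beta(a^{\sexp{1}},b)}[\,1-\epsilon\le\beta\le 1-\eta\,]>0$ is a positive constant that can play the role of the bound from \cref{lem:lower_bound_on_H} below.

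\textbf{Step 1: on $\cE^\et\land\cH^\et$, $\deltat$ is bounded below by a constant.} Unwinding the definition of $\cE^\et$ and using $\betat\ge 1-\epsilon$ (from $\cH^\et$): the candidate attaining $X^\et_{(n_X:n_X)}$ has observed utility $\betat X^\et_{(n_X:n_X)}>(1-\epsilon)m=m(1-\epsilon)$, while the $Y$-clause of $\cE^\et$ guarantees that fewer than $k$ of the $Y$-candidates have observed utility — which equals their latent utility — as large as $m(1-\epsilon)$. Since $\ell=0$, this candidate therefore has the highest observed utility among the $X$-candidates and is outranked by fewer than $k$ of the $Y$-candidates, so it lies in the top $k$ and is selected; hence $\ux\ge X^\et_{(n_X:n_X)}>m>0$. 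On the other hand, since $\cP$ is bounded with $\supp(\cP)=[L,U]$ and at most $k$ of the selected candidates lie in $G_Y$, we have $\uy\le kU$. Using \eqref{eq:def_of_delta},
\begin{align*}
  \deltat \;=\; \frac{(1-\betat)\,\ux}{\betat\ux+\uy} \;=\; \frac{1-\betat}{\betat+\uy/\ux} \;\ge\; \frac{\eta}{\,1+kU/m\,} \;=:\; c_0 \;>\;0,
\end{align*}
where we used $1-\betat\ge\eta$, $\betat\le 1$, and $\uy/\ux\le kU/m$; the constant $c_0$ depends only on $\cP,k,\epsilon$, not on $t$.

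\textbf{Step 2: $\cE^\et\land\cH^\et$ has constant probability given $\at\le M$.} The event $\cE^\et$ depends only on $X^\et,Y^\et$, which are fresh i.i.d.\ draws from $\cP$ and hence independent of the history and in particular of $\at$; the event $\cH^\et$ depends only on $\betat$, which given $\at$ is drawn from $\Beta(\at,b)$ independently of $X^\et,Y^\et$. So, conditioned on $\{\at\le M\}$, the events $\cE^\et$ and $\cH^\et$ are independent, and
\begin{align*}
  \probgiv{\cE^\et\land\cH^\et}{\at\le M} \;=\; \Pr[\cE^\et]\cdot\probgiv{\cH^\et}{\at\le M} \;\ge\; c\cdot c',
\end{align*}
with $c>0$ from \cref{lem:lower_bound_on_E} and $c'>0$ from \cref{lem:lower_bound_on_H}. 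Combining with Step 1, $\probgiv{\deltat> c_0/2}{\at\le M}\ge \probgiv{\cE^\et\land\cH^\et}{\at\le M}\ge cc'$, which gives the claimed lower bound (take the constants to be $c_0/2$ and $cc'$, renamed $c_0,c_1$).

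The real content sits inside the two supporting propositions, which we are free to invoke: \cref{lem:lower_bound_on_E} is a statement about order statistics of i.i.d.\ samples from the non-degenerate bounded distribution $\cP$ (with constant probability the largest $X$-value exceeds the midpoint $m$ while all but fewer than $k$ of the $Y$-values fall below $m(1-\epsilon)$), and \cref{lem:lower_bound_on_H} follows from the continuity and strict positivity of the $\Beta(a,b)$ density on $(0,1)$ together with the fact that $\at$ stays in the compact interval $[a^{\sexp{1}},M]$ bounded away from $1$. Given those, the one genuinely new point is the deterministic implication of Step 1 — that $\cE^\et\land\cH^\et$ forces a high-latent-utility $X$-candidate into the shortlist while keeping $\uy/\ux$ bounded (here boundedness of $\cP$ gives $\uy\le kU$ and non-degeneracy gives $m>0$) — and I expect the only fiddly part there to be matching the order-statistic index in the $Y$-clause of $\cE^\et$ to the ``fewer than $k$'' bound needed for the selection to contain that candidate.
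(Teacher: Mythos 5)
Your proposal is correct and takes essentially the same route as the paper's own proof: the same two events $\cE^\et$ (from \cref{lem:lower_bound_on_E}) and $\cH^\et$ (from \cref{lem:lower_bound_on_H}), the same deterministic step that on $\cE^\et\land\cH^\et$ the top $X$-candidate is shortlisted so that $\ux> m$ while $\uy\leq kU$ forces $\deltat$ above a constant, and the same factorization $\probgiv{\cE^\et\land\cH^\et}{\at\leq M}=\Pr[\cE^\et]\cdot\probgiv{\cH^\et}{\at\leq M}$ using independence of the fresh draws from the history. The one fiddly point you flagged is real but shared with the paper: as written, the clause $Y^\et_{(n_Y-k:n_Y)}<m(1-\epsilon)$ only yields ``at most $k$'' (not ``fewer than $k$'') $Y$-candidates above the threshold, so guaranteeing $X^\et_{(n_X:n_X)}\in S^\et$ when $\ell=0$ needs the order-statistic index shifted by one, a harmless change that does not affect the bound in \cref{lem:lower_bound_on_E}; the paper's proof elides this identically.
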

  Now we are ready to present the proof
  \begin{proof}[Proof of \cref{thm:convergestoone}]
    Fix any $\eta>0$.
    We will prove that there exists a $\tau\in \N$, such that,
    $$\expect[\beta^{\sexp{\tau}}]\geq 1-\eta.$$
    Letting $\eta\to 0$ proves the theorem.
    Towards this, we first prove a lower upper about on the probability that $a^{\sexp{\tau}}$ is small.
    Let $M>0$ be a large constant and $\eps>0$ be a small constant, we will specify these (in terms of $\eta$) later.
    For all $i\in [t]$, let $Z_i$ be the indicator random variable that $\delta^{i}> c_0$.
    From Fact~\ref{fact:a_is_monotonic_in_t} we know that if $\at \leq M$, then for all $i\in [t]$, $a^{(i)} \leq M$.
    Using \cref{lem:lower_bound_on_E} for each $i\in [t]$ we get
    \begin{align}
      \text{For all\ } i\in [t], \quad \Pr[Z_i]\geq c_1\quad \text{and}\quad \expect[Z_i]\geq c_1.\label{eq:stats_of_zi}
    \end{align}
    We also have
    \begin{align*}
      \at&= a^{\sexp{1}} \cdot \prod_{i=1}^{t-1}(1+\delta^{(i)})
      \geq a^{\sexp{1}} \cdot \prod_{i=1}^{t-1}(1+c_0)^{Z_i}
      = a^{\sexp{1}} \cdot (1+c_0)^{\sum_{i=0}^{t-1}Z_i}.
    \end{align*}
    Thus, if $\at\leq M$, then
    \begin{align*}
      \sum_{i=0}^{t-1}Z_i&\leq \frac{\log\inparen{  \frac{M}{a^{\sexp{1}}}  }}{\log(1+c_0)}.
    \end{align*}
    Choose
    \begin{align}
      \tau= \frac{\log\inparen{  \frac{M}{a^{\sexp{1}}}  }}{\epsilon c_1\log(1+c_0)}.\label{eq:choice_of_tau}
    \end{align}
    Recall we are interested in $t\geq \tau$.
    Now, we can prove the required inequality
    \begin{align*}
      \Pr\insquare{\at\leq M} &\leq \Pr\insquare{\sum_{i=0}^{t-1}Z_i \leq \frac{\log\inparen{  \frac{M}{a^{\sexp{1}}}  }}{\log(1+c_0)}}\\
      &\leq \frac{1}{\expect\insquare{\sum_{i=0}^{t-1}Z_i}}\frac{\log\inparen{  \frac{M}{a^{\sexp{1}}}  }}{\log(1+c_0)}\\ %
      &\stackrel{\eqref{eq:stats_of_zi}}{\leq} \frac{1}{tc_1}\frac{\log\inparen{  \frac{M}{a^{\sexp{1}}}  }}{\log(1+c_0)}\\
      &\stackrel{\eqref{eq:choice_of_tau}}{\leq} \epsilon.\tagnum{$t\geq \tau$} \customlabel{lower_bound_on_delta}{\theequation}
    \end{align*}
    Set $M=\frac{2b}{\eta}$ and $\eps=\frac{\eta}{2}$, then
    \begin{align*}
      \expect[\beta^{\sexp{\tau}}] = \frac{a^{\sexp{\tau}}}{a^{\sexp{\tau}}+b} \geq (1-\eps)\frac{M}{M+b}\geq 1-\eta.
    \end{align*}
    Where in the first in equality we use that $\frac{a^{\sexp{\tau}}}{a^{\sexp{\tau}}+b}$ is an increasing function of $a^{\sexp{\tau}}$.
  \end{proof}

  \subsubsection{Proofs of supporting propositions}\label{sec:proof:supporting_lemmas}
  The proofs in this section use the following fact.
  \begin{fact}\label{fact:lowerbound_on_pdf}
    Given a distribution $\cD$ with a closed interval support and continuous probability density function, for any random variable $X\sim \cD$ and $m\in {\rm int}({\rm supp}(\cD))$, we have
    $$\Pr\insquare{X< m}\in (0,1).$$
  \end{fact}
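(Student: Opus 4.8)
The plan is to prove the two strict inequalities $\Pr[X<m]>0$ and $\Pr[X<m]<1$ separately, and both will follow from one elementary property of the support: since $\supp(\cD)$ is the closed interval $[L,U]$ — in the sense that it is the smallest closed set of probability $1$, equivalently the closure of $\{x:{\rm pdf}_\cD(x)>0\}$ — and $m$ lies in its interior $(L,U)$, there is positive probability mass strictly on each side of $m$. So the whole argument reduces to exhibiting an interval of positive probability inside $(-\infty,m)$ and another inside $[m,\infty)$.

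For the lower bound I would pick a point $x_0\in(L,m)$ at which the continuous density is positive; such a point exists because $\{x:{\rm pdf}_\cD(x)>0\}$ is dense in $[L,U]$ and $(L,m)$ is a nonempty open subinterval of $[L,U]$. By continuity of ${\rm pdf}_\cD$ there is $\eta>0$ with $[x_0-\eta,\,x_0+\eta]\subseteq(L,m)$ and ${\rm pdf}_\cD\ge {\rm pdf}_\cD(x_0)/2$ on this interval, whence $\Pr[X<m]\ge \int_{x_0-\eta}^{x_0+\eta}{\rm pdf}_\cD(x)\,dx\ge \eta\,{\rm pdf}_\cD(x_0)>0$. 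The upper bound $\Pr[X<m]<1$ is entirely symmetric: it is equivalent to $\Pr[X\ge m]>0$, and the same argument applied to a point $x_1\in(m,U)$ with ${\rm pdf}_\cD(x_1)>0$ gives $\Pr[X\ge m]>0$, so $\Pr[X<m]=1-\Pr[X\ge m]<1$.

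There is no genuine obstacle here; the only point to be careful about is the interpretation of ``support'', so that ``$m$ interior to a closed-interval support'' really does force mass strictly below and strictly above $m$. Once one records that $\{{\rm pdf}_\cD>0\}$ is dense in $[L,U]$ (or, abstractly, that every open neighbourhood of a support point has positive probability), both bounds are immediate, and one can even avoid the density altogether: the open set $(L-1,m)$ contains the support point $L$ and hence has positive probability, giving $\Pr[X<m]>0$, and symmetrically the open set $(m,U+1)$ contains $U$, giving $\Pr[X\ge m]>0$ and hence $\Pr[X<m]<1$.
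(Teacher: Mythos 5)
Your proof is correct, and in substance it follows the same route as the paper's: both arguments reduce the claim to exhibiting positive probability mass strictly below $m$ and strictly above $m$, using that $m$ is interior to the closed-interval support. The difference is in how that mass is produced. The paper asserts that the pdf is strictly positive on ${\rm int}(\supp(\cD))$, concludes the cdf is strictly increasing there, and compares the cdf at $m-\eps$, $m$, $m+\eps$ (its displayed chain has an obvious typo, writing $m+\eps$ where $m-\eps$ is meant). You instead use only that $\{x\colon {\rm pdf}_\cD(x)>0\}$ is dense in $\supp(\cD)$, pick a point of positivity on each side of $m$, and invoke continuity to get an interval of positive mass; your closing remark that one can bypass the density entirely, since every open neighborhood of the support points $L$ and $U$ has positive probability, is the most economical version. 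Your execution is in fact slightly more careful than the paper's: strict positivity of a continuous pdf on the whole interior of its support does not literally follow from the hypotheses (the density may vanish at isolated interior points, as for a density proportional to $\sabs{x}$ on $[-1,1]$), whereas the density-of-$\{{\rm pdf}>0\}$ argument you use is exactly what is needed, and it also repairs the paper's intermediate claim since it still yields a strictly increasing cdf on the interior. So there is no gap in your proposal; it proves the fact, and marginally more robustly than the paper's own write-up.
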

  \begin{proof}
    From our assumptions we know that the pdf is strictly positive on ${\rm int}({\rm supp}(\cD))$, thus cdf is strictly increasing on ${\rm int}({\rm supp}(\cD))$.
    Also, as ${\rm int}({\rm supp}(\cD))$ is an open interval, we have a $\epsilon>0$, such that, $(m\pm\epsilon)\in {\rm int}({\rm supp}(\cD))$.
    This gives us the required bounds
    $$0\leq \Pr\insquare{X< m+\epsilon}<\Pr\insquare{X< m}<\Pr\insquare{X< m+\epsilon}\leq 1.$$
  \end{proof}
  \noindent We now present the proof of \cref{lem:lower_bound_on_E} which is used in the proof of \cref{thm:convergestoone}.
  \begin{proof}[Proof of \cref{lem:lower_bound_on_E}]
    We note that $X_{(n_X:n_X)}^\et$ and $Y^\et_{(n_Y-k:n_Y)}$ are independent.
    Thus, $\Pr\sinsquare{\cE^\et}$ can be written as a product of two probabilities.
    The lemma follows by lower bounding each probability in this product by positive constants from Fact~\ref{fact:lowerbound_on_pdf}.
    We note that we know $m, m\cdot (1-\epsilon)\in (L,U)$.
    Therefore, we can apply Fact~\ref{fact:lowerbound_on_pdf}.
    This gives us $c_1,c_2\in (0,1)$ such that
    \begin{align}
      F(m)&=c_1\quad \text{and} \quad F(m\cdot (1-\epsilon))=c_2\label{eq:using_fact_on_cs}.
    \end{align}
    The lemma follows since
    \begin{align*}
      \Pr\insquare{\cE^\et} &\negsp =
      \Pr\insquare{X_{(n_X:n_X)}^\et > m}\cdot
      \Pr\insquare{Y^\et_{(n_Y-k:n_Y)} < m\cdot (1-\epsilon)}\tag{$X_{(n_X:n_X)}^\et$ and $Y^\et_{(n_Y-k:n_Y)}$ are independent}\\
      &\geq \prod_{i\in [n_X]}\Pr\insquare{X_{i}^\et > m} \cdot \prod_{i\in [n_Y]}\Pr\insquare{Y_{i}^\et <m\cdot (1-\epsilon)}\\
      &=(1-F(m))^{n_X}\cdot (F(m\cdot (1-\epsilon)))^{n_Y}\\
      &\stackrel{\eqref{eq:using_fact_on_cs}}{\geq} (1-c_1)^{n_X}\cdot c_2^{n_Y}.
    \end{align*}
    The lemma follows by choosing $c=(1-c_1)^{n_X}\cdot c_2^{n_Y}>0$.
  \end{proof}

  \begin{proof}[Proof of \cref{lem:lower_bound_on_H}]
    \begin{align*}
      \probgiv{\cH^\et}{ \at\leq M}&=\frac{1}{B(\at, b)}\int_{1-\epsilon}^{1-\eta} x^{\at-1}(1-x)^{b-1}dx.\\
      &\geq \frac{(1-\epsilon)^{\at-1}}{B(\at, b)}\int_{1-\epsilon}^{1-\eta}(1-x)^{b-1}dx\\
      &\geq \frac{(1-\epsilon)^{\at-1}}{B(\at, b)} \frac{\epsilon^b-\eta^b}{b}\\
      &\geq \frac{c_1\cdot\Gamma(\at+b)}{\Gamma(\at)\Gamma(b)}\tag{Let $c_1\coloneqq  (\epsilon^b-\eta^b)\cdot\frac{(1-\epsilon)^{\at-1}}{b}>0$}\\
      &\geq \frac{c_1}{2\cdot\Gamma(\at)\Gamma(b)}\tag{Using {$\inf_{x>0}\Gamma(x)\geq \frac12.$}}\\
      &\geq \frac{c_1}{\big(\Gamma(M)+\Gamma(a^{\sexp{1}})\big)\cdot\Gamma(b)}\tag{Using {$\Gamma(x)$ is convex on $(0,\infty)$} and $a^{\sexp{1}}\leq \at\leq M$}\\
      &\geq c.\yesnum\label{eq:lower_bound_on_H_without_E}
    \end{align*}
    The lemma follows by noting that $c>0$.
  \end{proof}

  \begin{proof}[Proof of \cref{lem:delta_increases_with_const_prob}]
    We begin by giving lower bound on $\deltat$ conditioned on the event $\cE^\et\land \cH^\et\land (\at\leq M)$
    Notice that conditioned on $\cE^\et\land \cH^\et\land (\at\leq M)$, $X_{(n_X:n_X)}^\et\in S^\et$ and $X_{(n_X:n_X)}^\et>m$.
    This gives us
    \begin{align*}
      \deltat&=\frac{(1-\betat) \cdot \ux}{\betat\cdot\ux +
      \uy}\\
      &\geq \frac{(1-\betat) \cdot \ux}{kU}\\
      &\geq \frac{(1-\betat)\cdot X_{(n_X:n_X)}^\et}{kU}\tag{$X_{(n_X:n_X)}^\et\in S^\et$ conditioned on $\cE^\et$}\\
      &> \frac{(1-\betat)\cdot m }{kU}\\
      &\geq \frac{\eta m }{kU}.\yesnum\label{eq:lower_bound_on_delta_cond_on_E}
    \end{align*}
    In other words, we have
    \begin{align}
      \probgiv{\deltat> \frac{\eta m }{kU}}{ (\at\leq M), \cH^\et, \cE^\et}=1.\label{eq:delta_is_pos_const_cond_on_H_E_M}
    \end{align}
    Before proceeding we note that $\betat$, $X_{(n_X:n_X)}^\et$, and $Y^\et_{(n_Y-k:n_Y)}$ are independent, and remain independent with conditioning on $\at$.
    Thus, $\cE^\et$ and $\cH^\et$ are independent conditioned on $\at=x$.
    Further,
    \begin{align*}
      \probgiv{\cH^\et \land  \cE^\et}{ \at\leq M} &= \int_{a^{\sexp{1}}}^{M}\probgiv{\cH^\et \land  \cE^\et}{ \at=x}\cdot \probgiv{\at=x }{ \at\leq M}dx\\
      &=\int_{a^{\sexp{1}}}^{M}\probgiv{\cH^\et}{ \at=x}
      \cdot \probgiv{\cE^\et}{ \at=x}
      \cdot
      \probgiv{\at=x}{ \at\leq M}dx\tag{$\cE^\et$ and $\cH^\et$ are independent conditioned on $\at=x$.}\\
      &=\Pr\insquare{\cE^\et}\int_{a^{\sexp{1}}}^{M}\probgiv{\cH^\et}{ \at=x}\cdot \probgiv{\at=x }{ \at\leq M}dx\tag{$\cE^\et$ is independent of $\at$}\\
      &=\Pr\insquare{\cE^\et}\cdot\probgiv{\cH^\et}{\at\leq M}.\yesnum\label{eq:E_and_H_are_independent}
    \end{align*}
    Thus, $\cE^\et$ is independent of $\cH^\et$ conditioned on $(\at\leq M)$.
    Now we have all the tools to finish the proof
    \begin{align*}
      \probgiv{\deltat> \frac{\eta m }{kU}}{ \at\leq M}& \geq \probgiv{\deltat\geq \frac{\eta m }{kU}}{ \at\leq M, \cH^\et, \cE^\et} \cdot\probgiv{\cH^\et \land \cE^\et}{\at\leq M}\\
      &\stackrel{\eqref{eq:delta_is_pos_const_cond_on_H_E_M}}{\geq} \probgiv{\cH^\et\land \cE^\et}{\at\leq M}\\
      &\stackrel{\eqref{eq:E_and_H_are_independent}}{\geq} \probgiv{\cH^\et}{\at\leq M}\cdot\Pr\insquare{\cE^\et}\\
      &\stackrel{}{\geq} c_2\cdot c_3.\tag{Using \cref{lem:lower_bound_on_E,lem:lower_bound_on_H}}
    \end{align*}
    The lemma follows by choosing $c_0=\frac{\eta m}{kU}>0$ and $c_1=c_2\cdot c_3>0$.
  \end{proof}

  \section{Conclusion}\label{sec:discussion_conclusion}

  We consider a theoretical model of how a hiring panel's implicit bias changes over repeated candidate selection processes and study the effect of the Rooney Rule on the panel's implicit bias.
  We show that, under our model, if the panel uses the Rooney rule, then the rate at which its implicit bias reduces decreases with the size of the shortlist but is independent of the  number of candidates.
  However, when the panel does not use the Rooney Rule, the same rate decreases with the  number of candidates (Section~\ref{sec:theoretical_results}).
  Thus, in the regime where the total number of candidates is much larger than the size of the shortlist, our results predict a significantly faster reduction in the panel's implicit bias by using the Rooney Rule---giving another reason to use it to mitigate implicit bias in the long term.

  Toward understanding the robustness of this result, we consider some extensions of the model and show how our results generalize to them (Section~\ref{sec:technical_remarks}).
  Here, in particular, we consider noise in the implicit bias of the panel, which can change the distribution of the implicit bias, and
  identify properties of the resulting distribution where our results hold (see Section~\ref{sec:generalization_for_other_distributions}).
  An interesting direction for future work could be to examine the effects of noise in the observed utilities---as in our empirical study (Section~\ref{sec:empirical_results}).

  Our empirical findings based on the experiment on Amazon Mechanical Turk
  show that, over multiple rounds, the Rooney Rule helped the participants learn to select a shortlist which is more representative of the qualified candidates in the applicant pool,
  without substantially decreasing the latent utility of selected candidates.

  Our findings provide evidence that the Rooney Rule can reduce the biases of the panel in repeated candidate selection processes. The policy can be a simple and effective tool as part of a larger effort to mitigate implicit bias.

  \section*{Acknowledgements}

  This research was supported in part by a J.P. Morgan Faculty Award and an AWS MLRA grant.

  \newpage
  \bibliographystyle{plain}
  \bibliography{bib}

  \newpage
  \appendix

  \section{Extended empirical observations} \label{sec:exted-empirical-observations}

  In Section \ref{sec:demo}, we discuss the instructions, demonstration and user interface shown to participants in the experiment in detail. (Screen captures of the experiment are shown in Section \ref{sec:experiment_figures}.)
  In Section \ref{sec:num_minority_comparison_app}, we provide more detailed analysis of the mean number of blue tiles selected for \rrgrp{} versus \unconsgrp{}, including a Welch's $t$-test and a linear mixed-effects model for the mean number of blue tiles selected for \rrgrp{} and \unconsgrp{}.
  In Section \ref{sec:latent_utility_details}, we provide more detailed analysis of the latent utility attained by participants from \rrgrp{} versus \unconsgrp{}, including a linear regression using iteration $t$ as a regressor and a $t$-test showing that the latent utility derived from blue tiles is higher for \rrgrp{} than for \unconsgrp{}.
  Finally, in Section \ref{sec:optimal_strategy_overlap_details}, we show that the trends observed when we analyze optimal strategy overlap are similar to those when we analyze latent utility.

  \subsection{Instructions, demonstration and user interface} \label{sec:demo}

  Before participants started the experiment, they were asked to sign a consent form and then provided with instructions and a brief demonstration.
  Thirty-nine participants were in \unconsgrp{} and 37 participants were in \rrgrp{}. Participant groups were uneven because some participants exited the experiment before completing all 25 iterations, and were therefore eliminated from the analysis.

  In the consent form, participants were told that the purpose of the experiment was to study ``how people learn to play a game over many repeated iterations,'' without specifying more detail about the candidate selection scenario that we study or that there was bias against one group.
  They were also informed of their baseline compensation (\$1.00), their estimated average bonus (\$1.20) which was proportional to their score, and the approximate time to complete the task (8 minutes).
  The mean bonus paid was \$1.37, and the standard deviation was \$0.14.

  If they provided consent, participants were shown an instructions page (see Figure \ref{fig:complete_instructions}) describing that they were to select 10 elements from the list of tiles, that each tile had an \textit{estimated} and \textit{true} value (corresponding to observed and latent utilities), and they they should try to maximize the true values of the tiles they selected.
  They were informed that the sum of the true utilities of their selections corresponded to a score, for which 15,000 points corresponds to \$1.00.
  Then they were directed to a demonstration of the experiment.

  In the demonstration, participants were shown a short list of red tiles (see Figure \ref{fig:complete_instructions_demo1}). Unlike in the real experiment where participants could only see the observed utilities of tiles before submitting their selections, in the demonstration, participants could select a tile to see both the observed and latent utility of that tile. At the bottom of the screen, they could see the sum of the observed and latent utilities of the set of currently selected tiles. To complete the demonstration, participants needed to select the three tiles with the highest observed utilities, by exploring the latent utilities of the tiles until they found the top three.

  The observed utilities for the tiles in the demonstration were small distortions of the latent utilities, such that the ordering of observed utilities was nearly (but not exactly) the same as the ordering of the latent utilities.
  The purpose of the demonstration was to show participants how the task worked and to lead them to believe that the observed utility of a tile is a good estimate of its latent utility; this aligns with real-world candidate selection contexts where the selection panel assumes that its assessments of candidates are unbiased estimates of their latent utility.
  We used red tiles (representing the overrepresented group) because we assume that the selection panel already has built experience closely evaluating overrepresented group candidates, while they may have little to no experience closely evaluating underrepresented group candidates.

  The observed and latent utilities in the demonstration were held constant across all participants. See Figure \ref{fig:complete_instructions_demo2} for the observed and latent utilities of the three tiles with the highest latent utilities. Once the tiles with the top three latent utilities were selected, participants could click ``Done'' and proceed to the experiment.

  Figures \ref{fig:mturk_demo1}, \ref{fig:mturk_demo2}, and \ref{fig:mturk_demo3} show the user interface for an example selection process.
  At the start of an iteration (see Figure \ref{fig:mturk_demo1}), participants were presented with a scrollable box containing the
  list of 100 tiles, sorted from high to low in order of observed utilities. Below the list of tiles, there was a bar chart showing the number of points scored in each previous iteration so far.

  When $k=10$ tiles were selected (see Figure \ref{fig:mturk_demo2}), a forward arrow button at the bottom of the list of tiles turned green, and participants could submit their choices. Participants were free to exit the study at any time by selecting the ``exit study'' button and still receive compensation, but any participants who exited the study early were removed from the analysis.

  After participants submitted their selections, they were shown both the observed utilities (in strikethrough text) and the latent utilities of selected tiles (see Figure \ref{fig:mturk_demo3}). In gray boxes at the bottom of the page, participants can see the total estimated and actual points earned during the round, as well as their overall cumulative points and bonus payment. Then participants can select the ``Next round'' to proceed to the next iteration.

  \subsection{Number of blue tiles selected: comparison of participant-group means and linear mixed-effects model} \label{sec:num_minority_comparison_app}

  In Section \ref{sm:num_blue_selected}, we find that there is a significant difference in the mean number of blue tiles selected for \rrgrp{} versus \unconsgrp{}, using a $t$-test analogous to Figure \ref{tab:num_selected_minority_over_required}.
  In Section \ref{sm:mixed_effects_model}, we analyze a linear mixed-effects model as a way to account for correlations between selection decisions completed by a participant. We find similar results to Welch's $t$-test.

  \subsubsection{Late-stage mean for \rrgrp{} and \unconsgrp{}} \label{sm:num_blue_selected}

  We use Welch's independent samples $t$-test to compare the mean number of blue tiles selected between \rrgrp{} and \unconsgrp{} during the last fifteen iterations.
  Our null hypothesis is that there is no difference in the means.
  We reject the null hypothesis at significance level $p < 0.001$.
  We find that the \rrgrp\ selected significantly more blue tiles on average during the last fifteen iterations, by an estimated margin of 1.5.
  See Figure \ref{tab:num_selected_minority} for the means and standard deviations for \rrgrp{} and \unconsgrp{}, as well as the results from Welch's $t$-test.

  \begin{figure}[h]
    \caption{The number of blue tiles selected is different for \rrgrp{} and \unconsgrp{} with significance $p < 0.001$.}\vspace{2mm}
    \centering
    \begin{tabular}{ r c c }
      & {Mean} & {Standard deviation} \\ \hline
      \rrgrp{} & 4.0 & 2.4 \\
      \unconsgrp{} & 2.5 & 3.1 \\
    \end{tabular}\vspace{2mm}

    \begin{tabular}{ r c c }
      & {Welch's $t$-test} \\ \hline
      Null hypothesis & population means equal \\
      Alternative hypothesis & population means different \\
      $t$-statistic & 9.5 \\
      Degrees of freedom & 1151 \\
      $p$-value & $<$0.001 \\
    \end{tabular}

    \label{tab:num_selected_minority}
  \end{figure}

  \subsubsection{Linear mixed-effects model} \label{sm:mixed_effects_model}

  Welch's $t$-test assumes independence between samples, but each participant completed multiple selection decisions. This could violate the independence assumption, if the selection decisions of a given participant are correlated.
  To account for within- and between-participant variation, we can analyze a linear mixed effects model.
  Linear mixed effects models capture hierarchy in data when, for example, multiple data points are collected from individuals in a sample.
  We treat $\ell$ as a fixed effect and participant IDs as a random effect.
  We take data from selection decisions during the last fifteen iterations.
  We reject the hypothesis that $\ell$ has no effect on the number of blue tiles selected with $p < 0.01$.
  See Figure \ref{tab:mixed_effects} for more detail.

  \begin{figure}[h]
    \caption{Using a mixed effects model, we find that $\ell$ has a significant effect on the number of blue tiles selected during the last fifteen iteration with $p < 0.01$.}\vspace{2mm}
    \centering
    \begin{tabular}{ r c }
      & {Mixed-effects model} \\ \hline
      \rrgrp{} estimate \textit{(standard error)} & 4.0 \textit{(0.55)} \\
      \unconsgrp{} estimate \textit{(standard error)} & 2.5 \textit{(0.38)} \\
      Null hypothesis & population means equal \\
      Alternative hypothesis & population means different \\
      $t$-statistic & 2.8 \\
      Degrees of freedom & 74  \\
      $p$-value & $<$0.01\\
    \end{tabular}

    \label{tab:mixed_effects}
  \end{figure}

  In a traditional comparison of means, the number of blue tiles selected is modeled as a function of membership in \rrgrp{} or \unconsgrp{} plus noise. In a linear mixed-effects model, the number of blue tiles selected is modeled as a linear function of membership in \rrgrp{} or \unconsgrp{} plus which participant made the selection plus noise. We do not see a significant effect if we use the linear mixed-effects model to compare the mean number of minority candidates selected \textit{in addition to} the required number.

  \subsection{Total latent utility: rate of increase over iterations and comparison of participant-group means} \label{sec:latent_utility_details}

  In Section \ref{sec:latent_util_regression}, we analyze a linear regression model of the mean latent utility attained by \rrgrp{} and \unconsgrp{}. We find that latent utility modestly increases over time for both groups.
  In Section \ref{sec:latent_util_from_blue_details}, we examine the mean latent utility derived from blue tiles for \rrgrp{} and \unconsgrp{}. We find that \rrgrp{} attains higher latent utility derived from blue tiles at significance level $p<0.001$.

  \subsubsection{Linear regression for latent utility over iterations} \label{sec:latent_util_regression}

  We fit a linear regression with outcome variable latent utility and predictors iteration $t$ for \rrgrp{} and \unconsgrp{} separately. The null hypothesis that the regression coefficients are zero, indicating no relationship between latent utility and intercept or iteration $t$. The regression coefficients for the iteration $t$ for \rrgrp{} and \unconsgrp{} are each significant at $p<0.05$. The regressor for iteration $t$ in \rrgrp{} is significant at $p<0.001$. See Figure \ref{tab:latent_utility_regression} for details.

  \begin{figure}[h!]
    \caption{Participants learn to attain greater latent utility over iterations, with a statistically significant effect for \rrgrp{} and \unconsgrp{}.}\vspace{2mm}
    \centering

    \begin{tabular}{ r c c }
      \hline
      \textbf{\rrgrp{}} & Intercept & Iteration $t$ \\ \hline
      Estimated coefficient & 0.77 & 0.0031 \\
      Standard error & 0.008 & 0.001  \\
      $t$-statistic & 91 & 5.2 \\
      $p$-value & $<$0.001 & $<$0.001  \\
      \hline
      \textbf{\unconsgrp{}{}} & Intercept & Iteration $t$ \\ \hline
      Estimated coefficient & 0.82 & 0.0010 \\
      Standard error & 0.007 & 0.0  \\
      $t$-statistic & 120 & 2.0 \\
      $p$-value & $<$0.001 & $<$0.05  \\
    \end{tabular}

    \label{tab:latent_utility_regression}
  \end{figure}

  \newpage

  \subsubsection{Latent utility derived from blue tiles is significantly higher for \rrgrp{} than for \unconsgrp{}} \label{sec:latent_util_from_blue_details}

  To further understand the latent utility of selection decisions from participant groups, we can show the proportion of total utility derived from the blue and red tiles separately.

  \begin{figure}[h]
    \caption{The latent utility derived from blue tiles is higher for \rrgrp{} than \unconsgrp{} with significance $p < 0.001$.}\vspace{2mm}

    \centering
    \begin{tabular}{ r c c }
      & {Mean} & {Standard deviation} \\ \hline
      \rrgrp{} & 0.38 & 0.24 \\
      \unconsgrp{} & 0.23 & 0.30 \\
    \end{tabular}\vspace{2mm}

    \begin{tabular}{ r c c }
      & \begin{tabular}{@{}c@{}}\textbf{Welch's $t$-test:} comparison of \\ latent utility means \\ derived from blue\end{tabular}\\ \hline
      Null hypothesis & population means equal \\
      Alternative hypothesis & population means different \\
      $t$-statistic & 9.5 \\
      Degrees of freedom & 1100 \\
      $p$-value & $<$0.001 \\
    \end{tabular}

    \label{tab:latent_util_ttest}
  \end{figure}

  \noindent In Figure \ref{fig:latent_util_by_group_proportion2}, we plot the and latent utility derived from blue and red tiles as a proportion of the optimal strategy utility.
  Notice that the proportions between blue and red tiles are closer to equal for \rrgrp{} than for \unconsgrp{}.
  \begin{figure}[h]
    \setlength{\belowcaptionskip}{-2pt}
    \centering
    \includegraphics[width=\columnwidth/2]{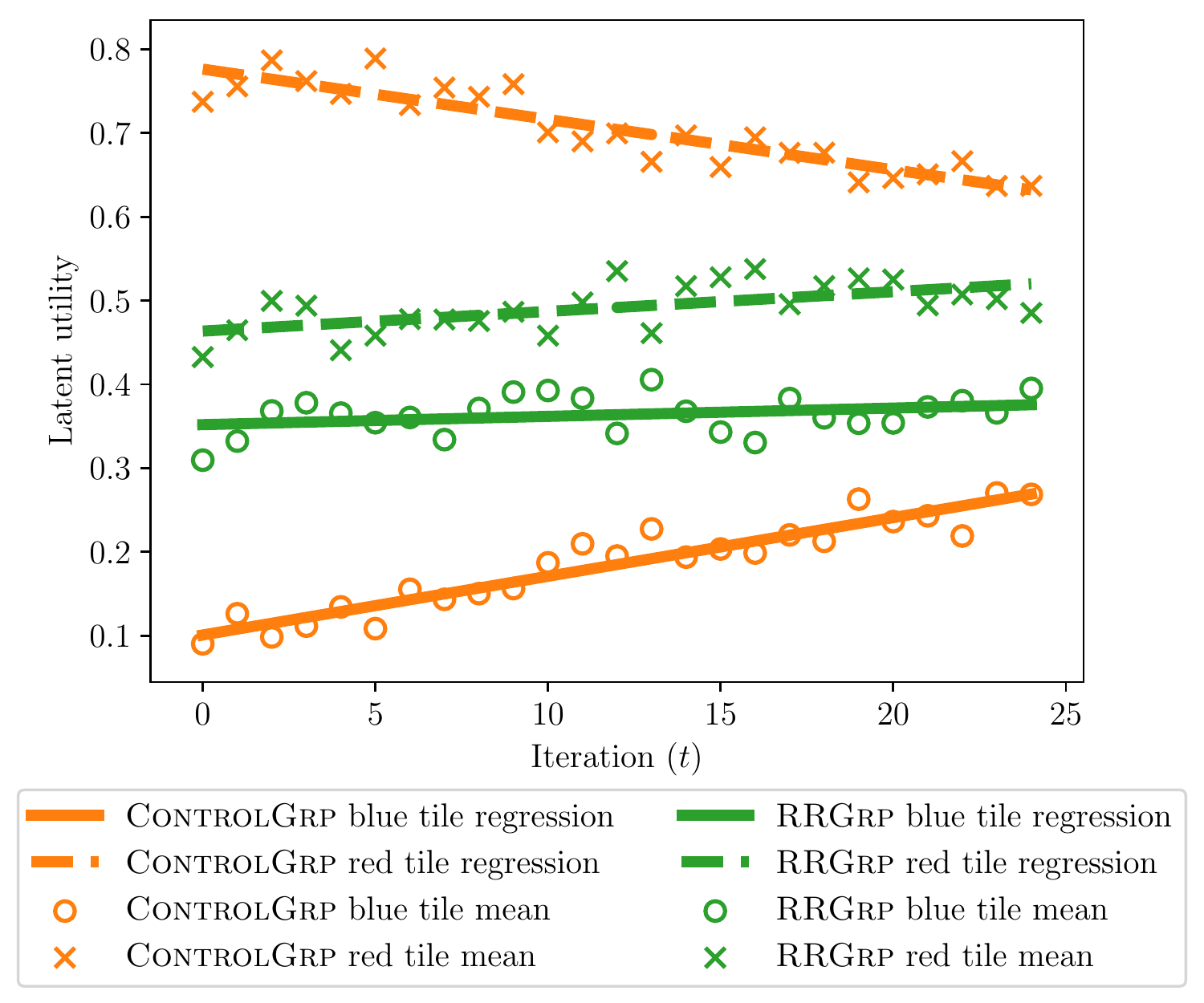}
    \caption{The latent utility derived from blue tiles was significantly greater for \rrgrp{} than for \unconsgrp{}.}
    \label{fig:latent_util_by_group_proportion2}
  \end{figure}

  \subsection{Optimal strategy overlap: alternate metric for optimality of selection decisions} \label{sec:optimal_strategy_overlap_details}

  Let \textit{optimal strategy overlap} be the number of selected tiles in the optimal strategy set as a proportion of $k$. Let \textit{optimal strategy overlap derived from blue (resp. red) tiles} mean the number of selected blue (resp. red) tiles in the optimal strategy set as a proportion of the number of blue (resp. red) tiles in the optimal strategy set.

  In this section, we conduct analysis similar to Section \ref{sec:latent_utility_details} on latent utilities.
  We find similar trends for optimal strategy overlap as for latent utilities.
  In Section \ref{sec:optimal_overlap_regression}, we analyze a linear regression model of the mean optimal strategy overlap attained by \rrgrp{} and \unconsgrp{}.
  We find that optimal strategy overlap modestly increases for both groups.
  In Section \ref{sec:optimal_strategy_overlap_from_blue}, we examine the mean optimal strategy overlap derived from blue tiles for \rrgrp{} and \unconsgrp{}.
  We find that \rrgrp{} attains higher optimal strategy overlap derived from blue tiles at significance level $p<0.001$.

  \subsubsection{Linear regression for optimal strategy overlap over iterations} \label{sec:optimal_overlap_regression}

  We fit a linear regression with outcome variable optimal strategy overlap and predictors iteration $t$ and Rooney Rule parameter $\ell$. We find that the optimal strategy overlap increases over iterations $t$. We observed a mean 20\% in the optimal strategy overlap over the twenty-five iterations for \rrgrp{} and 14\% for \unconsgrp{}.

  \begin{figure}[h!]
    \caption{Participants learn to attain greater optimal overlap over iterations.}\vspace{2mm}
    \centering

    \begin{tabular}{ r c c }
      \hline
      \textbf{\rrgrp{}} & Intercept & Iteration $t$  \\ \hline
      Estimated coefficient & 0.40 & 0.0032 \\
      Standard error & 0.018 & 0.001  \\
      $t$-statistic & 22 & 2.5  \\
      $p$-value & $<$0.001 & $<$0.05  \\ \hline
      \textbf{\unconsgrp{}} & Intercept & Iteration $t$  \\ \hline
      Estimated coefficient & 0.42 & 0.0023 \\
      Standard error & 0.015 & 0.001  \\
      $t$-statistic & 28 & 2.2  \\
      $p$-value & $<$0.001 & $<$0.05  \\
    \end{tabular}

    \label{tab:optimal_strategy_overlap_regression}
  \end{figure}

  \subsubsection{Optimal strategy overlap derived from blue tiles is significantly higher for \rrgrp{} than for \unconsgrp{}} \label{sec:optimal_strategy_overlap_from_blue}
  \begin{figure}[b!]
    \caption{\rrgrp{} and \unconsgrp{} attained the same mean optimal strategy overlap, but the proportion due to blue tiles is higher for \rrgrp{}.}\vspace{2mm}
    \centering
    \begin{tabular}{ r c c c }
      \multicolumn{4}{c}{Optimal strategy overlap mean \textit{(standard deviation)}} \\
      & overall & \begin{tabular}{@{}c@{}}derived from blue\end{tabular} & \begin{tabular}{@{}c@{}}derived from red\end{tabular} \\ \hline
      \rrgrp{} & 0.46 \textit{(0.29)} & 0.50 \textit{(0.33)} & 0.43 \textit{(0.45)} \\
      \unconsgrp{} & 0.46 \textit{(0.25)} & 0.30 \textit{(0.39)} & 0.63 \textit{(0.43)} \\
    \end{tabular}\vspace{2mm}
    \begin{tabular}{ r c c }
      & \begin{tabular}{@{}c@{}}\textbf{Welch's $t$-test:} comparison of \\ optimal strategy overlap means \\ derived from blue\end{tabular} \\ \hline
      Null hypothesis & population means equal \\
      Alternative hypothesis & population means different \\
      $t$-statistic & 9.5 \\
      Degrees of freedom & 1100 \\
      $p$-value & $<$0.001 \\
    \end{tabular}
    \label{tab:optimal_strategy_overlap}
  \end{figure}
  Although it is interesting to know that both \rrgrp{} and \unconsgrp{} learn to attain more optimal strategy overlap over time, we are more interested in seeing what participants from the two groups learned after a number of iterations. To this end, we compare the mean optimal strategy overlap of \rrgrp{} versus \unconsgrp{} during the last fifteen iterations.
  We find no statistically significant difference between the mean of \rrgrp{} and \unconsgrp{}. However, the proportion of selected blue tiles in the optimal strategy set is higher for \rrgrp{} than \unconsgrp{}, and we can reject the null hypothesis that means are equal with statistical significance $p < 0.001$. See Figure \ref{tab:optimal_strategy_overlap} for the means and standard deviations. We find similar results in the case of the proportion of latent utilities coming from $X$-tiles. We refer the reader to Section~\ref{sec:latent_utility_details} for details.
  \noindent

  In Figure \ref{fig:optimal_overlap_by_group}, we plot the mean proportions of blue and red tiles in the optimal strategy set that were selected in each iteration $t$. We also fit linear regressions to show the general trends.

  \begin{figure}[t!]
    \centering
    \includegraphics[width=\columnwidth/2]{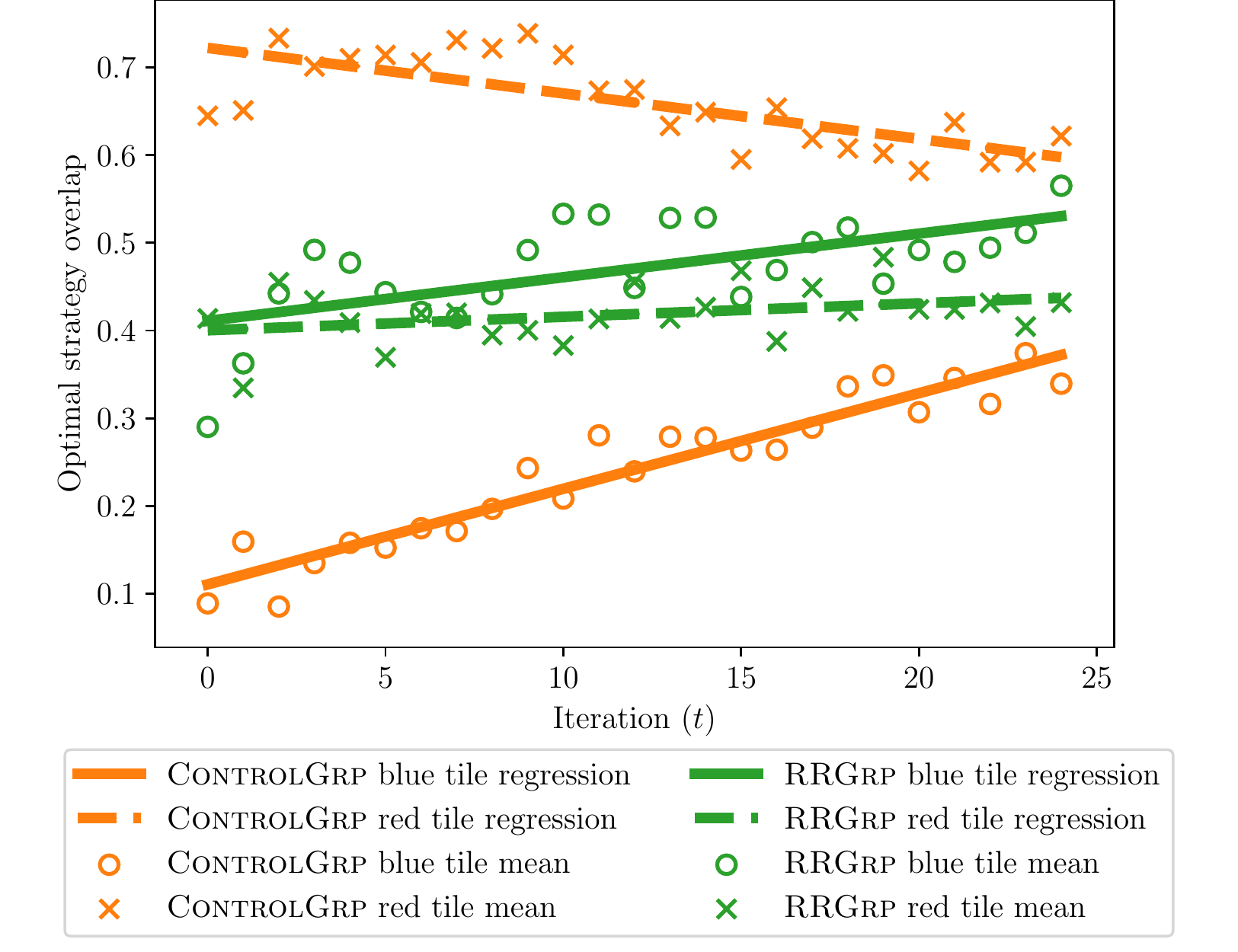}
    \caption{The number of selected blue tiles in the optimal strategy set is greater for \rrgrp{}.}

    \label{fig:optimal_overlap_by_group}
  \end{figure}

  \subsection{Figures for the experiment user interface} \label{sec:experiment_figures}
  In this section, we present the instructions, demonstration, and user interface shown to participants during the experiment. We refer the reader to Section \ref{sec:demo} for a detailed description of each component of the experiment. Figure \ref{fig:complete_instructions} shows the instructions provided to participants. Figures \ref{fig:complete_instructions_demo1} and \ref{fig:complete_instructions_demo2} show the demonstration participants completed before proceeding to the experiment. Figure \ref{fig:mturk_demo1} shows the experiment user interface before selections are made. Figure \ref{fig:mturk_demo2} shows the experiment user interface after $k=10$ tiles are selected but before the selections are submitted. Figure \ref{fig:mturk_demo3} show the experiment user interface after selections are submitted.

  \begin{figure*}[h!]
    \centering
    \includegraphics[width=\textwidth]{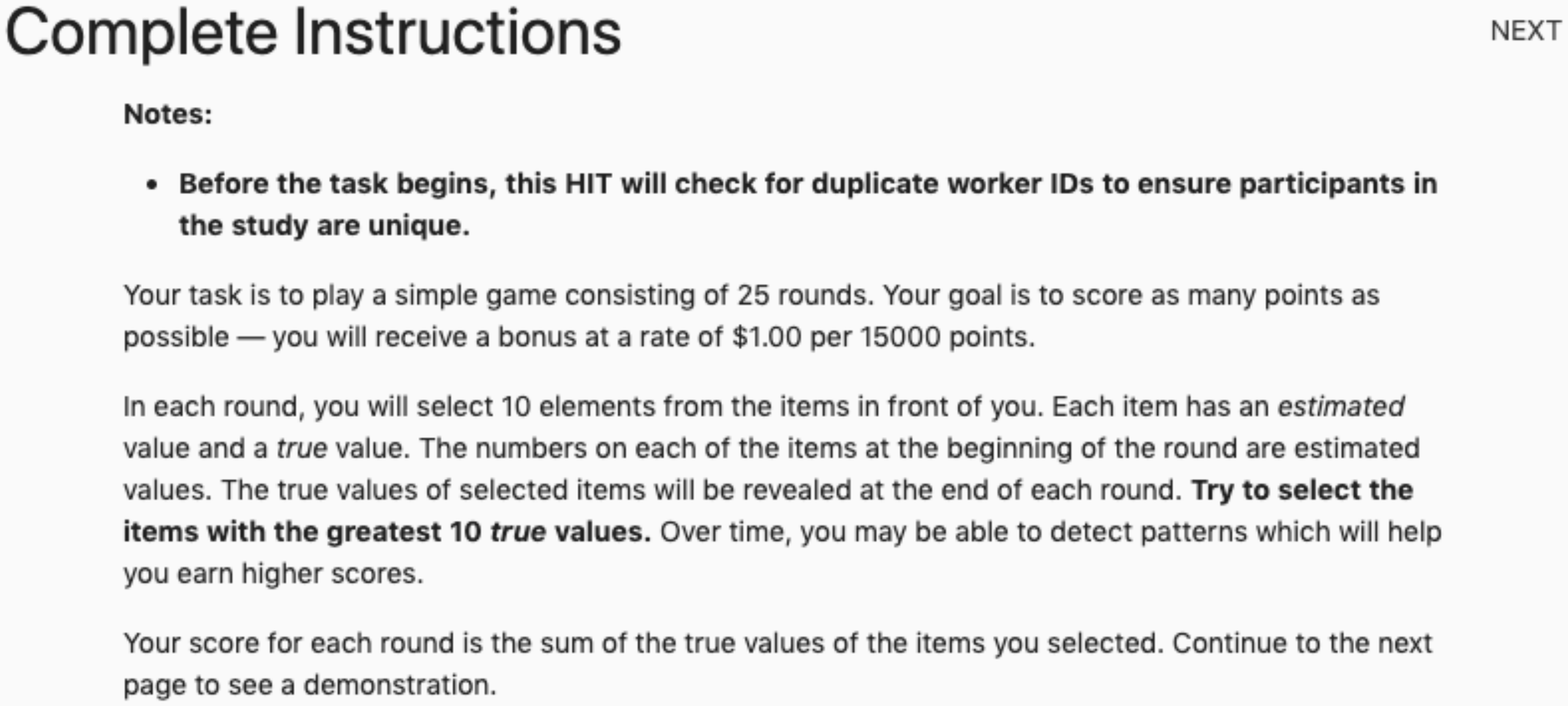}
    \caption{After participants provided informed consent, they were presented with complete instructions describing that each tile had an estimated and true value, and that they should try to maximize the true value of their selections.}
    \label{fig:complete_instructions}
  \end{figure*}

  \begin{figure*}
    \centering
    \includegraphics[width=\textwidth]{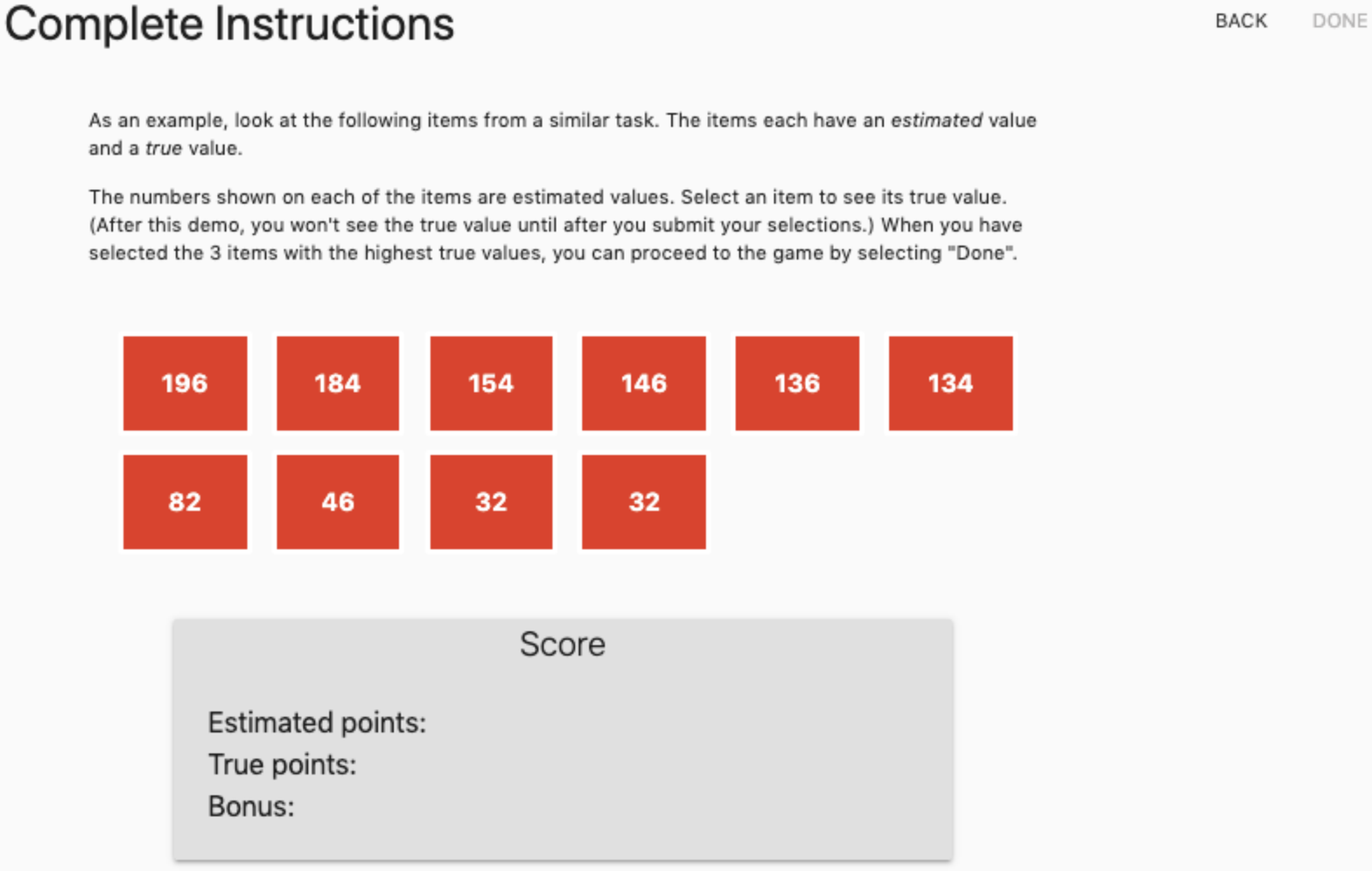}
    \caption{After the instructions but before starting the experiment, participants were shown a brief demonstration to provide intuition that the observed values were noisy estimates of the true values.}
    \label{fig:complete_instructions_demo1}
  \end{figure*}

  \begin{figure*}
    \centering
    \includegraphics[width=\textwidth]{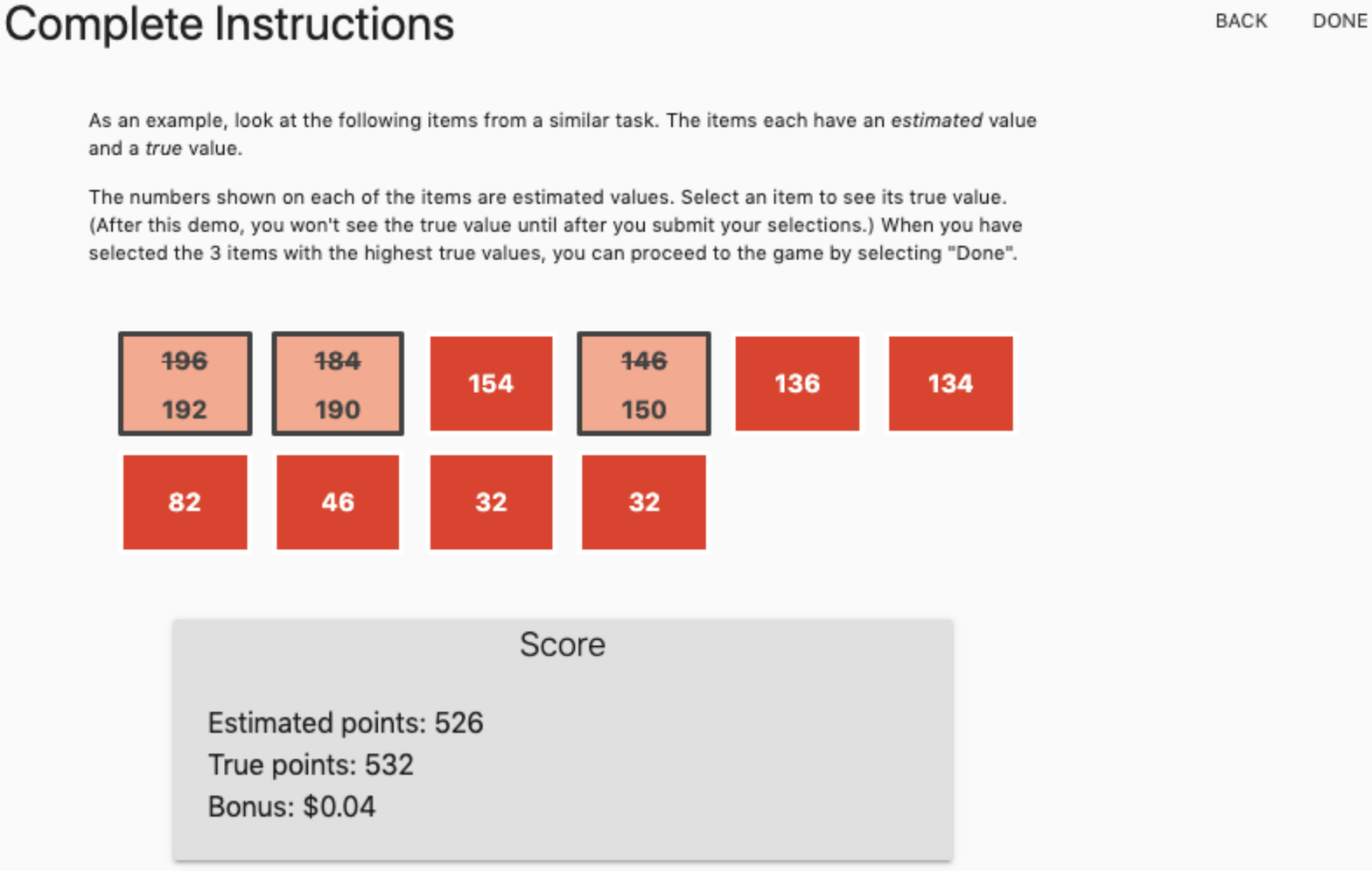}
    \caption{The first, second and fourth tiles in the ranking of estimated values were the tiles with the top three true values. Once participants selected these tiles, they were able to move on to the experiment.}
    \label{fig:complete_instructions_demo2}
  \end{figure*}

  \begin{figure*}
    \centering
    \includegraphics[width=\textwidth]{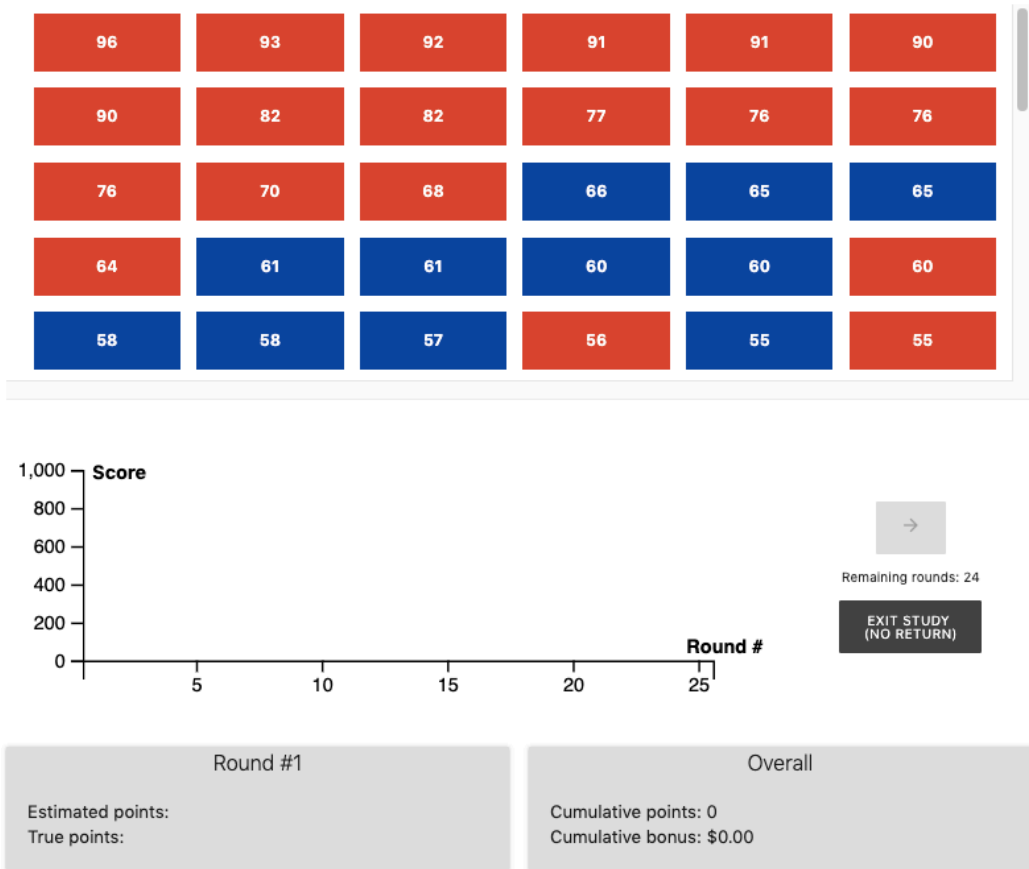}
    \caption{Before making their selections, participants were shown a list of tiles with their estimated values. Since the estimated values were biased against the blue group, blue tiles appeared lower down in the ranking.}
    \label{fig:mturk_demo1}
  \end{figure*}

  \begin{figure*}
    \centering
    \includegraphics[width=\textwidth]{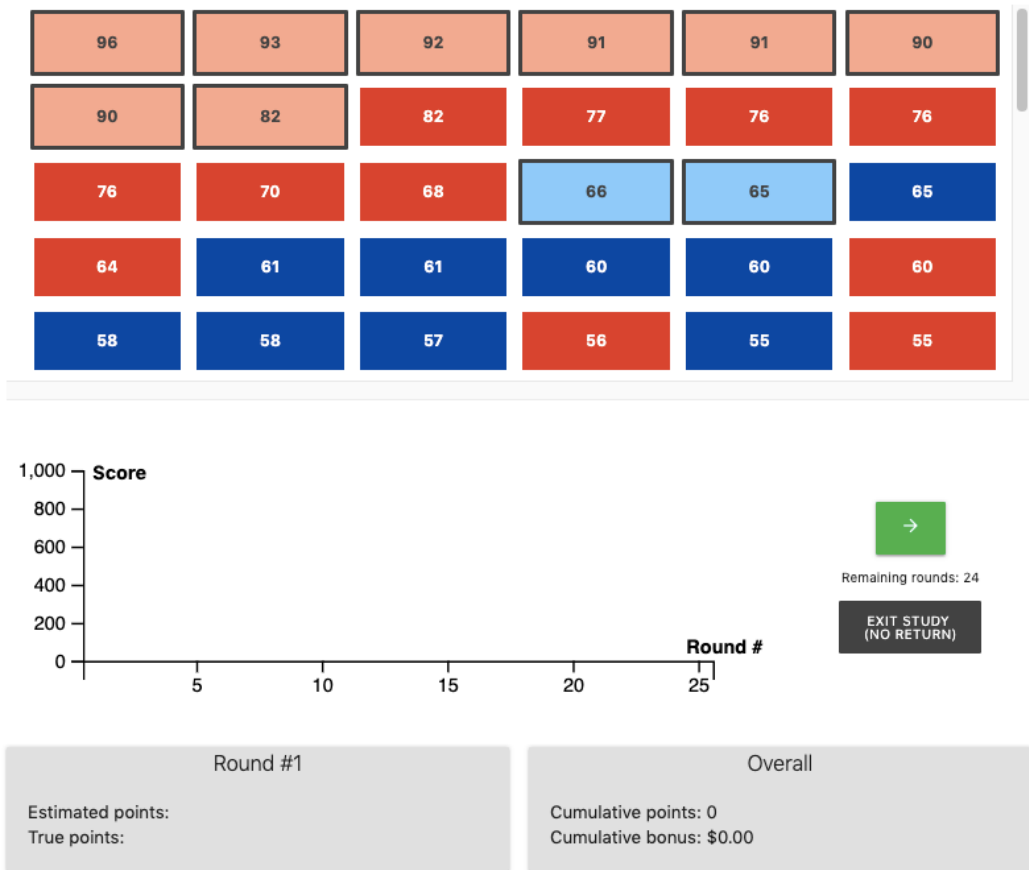}
    \caption{Participants selected tiles by clicking on them. Selected tiles appeared in a lighter shade with a grey outline.}
    \label{fig:mturk_demo2}
  \end{figure*}

  \begin{figure*}
    \centering
    \includegraphics[width=\textwidth]{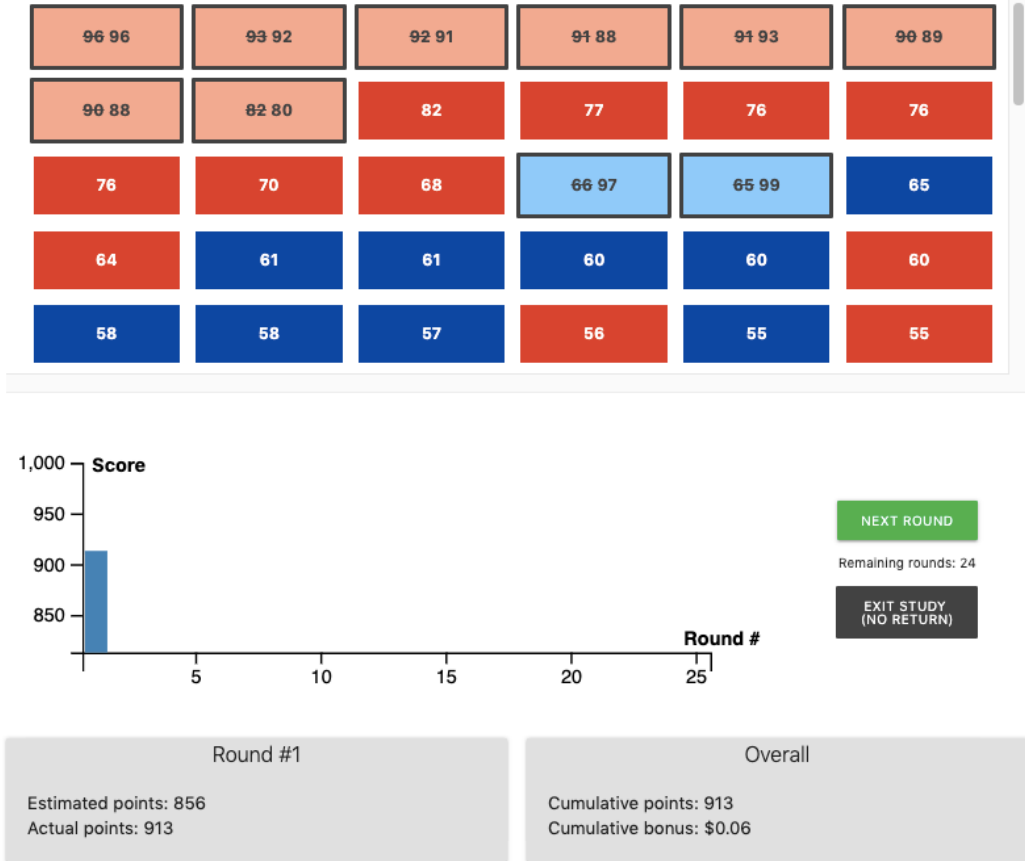}
    \caption{After participants made their selections, they were shown the estimated and true values of selected tiles. They were also shown the total estimated and actual points earned and their cumulative points and bonus payment.}
    \label{fig:mturk_demo3}
  \end{figure*}

  \end{document}